\title{Near Optimal Hardness of Approximating $k$-CSP}
 \author{
 Dor Minzer\thanks{Department of Mathematics, Massachusetts Institute of Technology, Cambridge, USA. Supported by NSF CCF award 2227876 and 
 NSF CAREER award 2239160.}
 \and
 Kai Zhe Zheng\thanks{Department of Mathematics, Massachusetts Institute of Technology, Cambridge, USA. Supported by the NSF GRFP DGE-2141064.}
 }
\date{\vspace{-5ex}}
\newcommand{\qbin}[2]{\begin{bmatrix}{#1}\\ {#2}\end{bmatrix}_2}
\DeclareMathOperator{\codim}{codim}
\DeclareMathOperator{\Gras}{{\sf Grass}}
\DeclareMathOperator{\val}{{\sf val}}
\DeclareMathOperator{\spa}{span}
\DeclareMathOperator{\rank}{rank}
\DeclareMathOperator{\im}{im}
\newcommand{\GapLin}{{\sf Gap3Lin}}
\newcommand\card[1]{\left|{#1}\right|}
\newcommand{\Lin}{{\sf 3Lin}}
\newcommand{\NP}{{\sf NP}}
\newcommand{\mc}{\mathcal}
\newcommand{\Eq}{{\sf Eq}}
\newcommand{\Ug}{\mathcal{U}_{{\sf good}}}
\newcommand{\Ql}{\mathcal{Q}_{{\sf lucky}}}
\newcommand{\Qs}{\mathcal{Q}_{{\sf smooth}}}
\newcommand{\Us}{\mathcal{U}_{{\sf sat}}}
\newcommand{\Ws}{W_{{\sf second}}}
\newcommand{\A}{\mathcal{A}}
\newcommand{\T}{\mathcal{T}}
\newcommand{\B}{\mathcal{B}}
\newcommand\inner[2]{\langle{#1},{#2}\rangle}
\newcommand{\norm}[1]{\left\lVert#1\right\rVert}
\newcommand{\Zoom}{{\sf Zoom}}
\newcommand{\Grass}{{\sf Grass}}
\newcommand{\E}{\mathop{\mathbb{E}}}
\newcommand{\Ff}{\mathbb{F}}
\newcommand{\D}{\mathcal{D}}
\newcommand{\U}{\mathcal{U}}
\newcommand{\Rcal}{\mathcal{R}}
\newcommand{\Lcal}{\mathcal{L}}
\newcommand{\Cl}{\textsf{Clique}}
\newcommand{\ind}{\mathbbm{1}}
\newcommand{\eps}{\varepsilon}
\renewcommand{\epsilon}{\eps}
\renewcommand\leq{\leqslant}
\renewcommand\geq{\geqslant}
\theoremstyle{plain} 
\newtheorem{theorem}{Theorem}[section]
   \newtheorem{thm}{Theorem}[section]
   \newtheorem{lemma}[thm]{Lemma}
   \newtheorem{claim}[thm]{Claim}
   \newtheorem{definition}[thm]{Definition}
\begin{document}
\maketitle
\begin{abstract}
We show that for every $k\in\mathbb{N}$ 
and $\eps>0$, for large enough alphabet 
$R$, given a $k$-CSP with
alphabet size $R$, it is NP-hard to 
distinguish between the case that there is an assignment satisfying at least
$1-\eps$ fraction of the constraints, and
the case no assignment satisfies more than
$1/R^{k-1-\eps}$ of the constraints. 
This result improves upon prior work of [Chan, Journal of the ACM 2016], who showed the same result
with weaker soundness of $O(k/R^{k-2})$, and 
nearly matches the trivial 
approximation algorithm that finds an assignment satisfying 
at least $1/R^{k-1}$ fraction of the constraints.

Our proof follows the approach 
of a recent work by the authors, wherein the above result is proved for $k=2$. Our main new ingredient is a counting lemma for hyperedges between pseudo-random sets in the 
Grassmann graphs, which may be of independent interest.
\end{abstract}

\section{Introduction}
The PCP theorem~\cite{FGLSS, AroraSafra, ALMSS} is a cornerstone result in the theory of computation 
with applications and connections to hardness of approximation, cryptography, pseudorandomness, analysis and more. It has many equivalent formulations, and for this paper it will be convenient to state it in the language of constraints satisfaction problems (CSPs in short).
\begin{definition}
    For an integer $k$, an instance $\Psi$ of $k$-CSP consists of a $k$-uniform hypergraph $G = (V, E)$, an alphabet $\Sigma$, and a set of constraints $\{\Phi_e \}_{e \in E}$. Each constraint is a map $\Phi_e: \Sigma^k \to \{0, 1\}$, where the set of tuples $\Phi_e^{-1}(1)$
    is thought of as assignments to the vertices of $e$ that satisfy the constraints on $e$.
\end{definition}
The arity parameter $k$ and the alphabet size parameter $|\Sigma|$ will be important for
the purposes of this paper. Given a $k$-CSP instance $\Psi$ we define its value as
\[
        \val(\Psi) = \max_{A: V \to \Sigma} \frac{|\{e = (v_1, \ldots, v_k) \; | \; \Phi_e(A(v_1), \ldots, A(v_k))=1\}|}{|E|}.
\]

Using the language of CSPs, the basic PCP theorem~\cite{FGLSS, AroraSafra, ALMSS} combined with 
Raz's parallel repetition theorem~\cite{Raz} 
gives the following result:
\begin{thm} \label{thm: basic 2 query}
    There exists $\gamma > 0$ such that for sufficiently large $R$, given an instance $\Psi$ of $2$-CSP with alphabet size $R$, it is NP-hard to distinguish the following two cases:
    \begin{itemize}
        \item YES case: ${\sf val}(\Psi) = 1$. 
        \item NO case: ${\sf val}(\Psi) \leq \frac{1}{R^{\gamma}}$.
    \end{itemize}
\end{thm}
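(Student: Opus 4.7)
The plan is to derive this result by composing the basic PCP theorem with Raz's parallel repetition theorem. First I would invoke the basic PCP theorem in the form of NP-hardness of Gap-3SAT: there is an absolute constant $s_0<1$ such that distinguishing satisfiable 3CNF formulas from those of value at most $s_0$ is NP-hard. Then, using the standard clause-variable gadget, I would transform this into a 2-CSP $\Psi_0$ over a constant alphabet $\Sigma_0$ of some size $q$, with projection constraints going from a ``clause side'' to a ``variable side''. Completeness remains $1$, and a direct case analysis shows the soundness becomes some constant $s_1<1$. Thus we have a hard instance of 2-CSP with constant alphabet size $q$ and constant gap $1-s_1$.

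To amplify the gap into the polynomial-in-$R$ regime demanded by the theorem, I would apply Raz's parallel repetition theorem to $\Psi_0$. The $t$-fold parallel repetition $\Psi_0^{\otimes t}$ is a 2-CSP whose alphabet has size $R=q^t$, whose completeness is still $1$, and whose soundness is at most $s_1^{ct}$ for some constant $c>0$ depending on $s_1$ and $q$ (the projection structure of $\Psi_0$ is what enables this bound). Writing $s_1^{c}=q^{-\beta}$ for some $\beta>0$, the soundness becomes $q^{-\beta t}=R^{-\beta}$, so any $0<\gamma<\beta$ works in the theorem. Since $t$ can be taken arbitrarily large, $R=q^t$ ranges over arbitrarily large values, which matches the ``sufficiently large $R$'' hypothesis.

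The main subtlety is not conceptual but parametric: both the clause-variable reduction and Raz's theorem introduce small constants, and one must verify that composing them leaves a strictly positive exponent $\gamma$. A second minor point is that $R=q^t$ takes only a sparse set of values; for intermediate $R$, I would pad the alphabet with dummy symbols and adjust constraints to reject them, which preserves completeness and degrades soundness by at most a constant multiplicative factor that is absorbed into a slightly smaller choice of $\gamma$. With these observations in hand, the statement follows from the two cited results without further work.
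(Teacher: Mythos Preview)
Your proposal is correct and follows exactly the approach the paper indicates: the paper does not give a proof at all, merely stating the theorem as a consequence of the basic PCP theorem~\cite{FGLSS, AroraSafra, ALMSS} together with Raz's parallel repetition theorem~\cite{Raz}. Your outline (PCP $\to$ clause-variable projection game $\to$ parallel repetition) is the standard derivation of this fact, and the parametric and alphabet-padding remarks you include are appropriate bookkeeping.
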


Let $k>2$. Since any $2$-CSP can be phrased as a
$k$-CSP, it is clear that the above hardness result
applies to $k$-CSPs as well. However, one expects a
stronger hardness result to hold. Indeed, if $\Psi$ is a $2$-CSP, then taking all subsets of $k/2$ constraints from $\Psi$ and for each one putting the $k$-ary constraint 
corresponding to their ``and'' yields a $k$-CSP whose value is $\val(\Psi)^{k/2}$. Thus, \cref{thm: basic 2 query} implies:
\begin{thm} \label{thm: basic k query}
    There exists $\gamma > 0$ such that for sufficiently large $R$, for any even integer $k$, given an instance $\Psi$ of $k$-CSP with alphabet size $R$, it is NP-hard to distinguish the following two cases:
    \begin{itemize}
        \item YES case: ${\sf val}(\Psi) = 1$. 
        \item NO case: ${\sf val}(\Psi) \leq \frac{1}{R^{\gamma (k-1)}}$.
    \end{itemize}
\end{thm}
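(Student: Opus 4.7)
The plan is the gap amplification by conjunction sketched informally just before the statement. Let $\gamma_0 > 0$ denote the constant from \cref{thm: basic 2 query}, fix an even integer $k$, and let $\Psi = (V, E, \Sigma, \{\Phi_e\})$ be an instance of $2$-CSP with $|\Sigma| = R$. The reduction will produce a $k$-CSP $\Psi'$ on the same variables and alphabet whose hyperedges are indexed by the unordered $k/2$-subsets $S = \{e_1, \ldots, e_{k/2}\} \subseteq E$; for each such $S$, I place a $k$-ary constraint whose $k$ coordinates list the endpoints of $e_1, \ldots, e_{k/2}$ (padded by arbitrary dummy vertices if their union has fewer than $k$ elements), declared satisfied exactly when every $\Phi_{e_i}$ is. Since there are $|E|^{O(k)}$ such subsets, the reduction runs in polynomial time for any fixed $k$.

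The completeness case is immediate: any $A$ with $\val(\Psi) = 1$ satisfies every $\Phi_{e_i}$, hence every conjunction, so $\val(\Psi') = 1$. For the soundness case I would fix any $A : V \to \Sigma$ and let $S_A \subseteq E$ denote the set of binary constraints it satisfies, so that by hypothesis $|S_A|/|E| \le R^{-\gamma_0}$. The fraction of $k$-ary constraints of $\Psi'$ that $A$ satisfies is then
\[
\frac{\binom{|S_A|}{k/2}}{\binom{|E|}{k/2}} \le 2^{k/2} \left(\frac{|S_A|}{|E|}\right)^{k/2} \le 2^{k/2} R^{-\gamma_0 k/2},
\]
where the first inequality uses the routine bounds $\binom{n}{m} \le n^m/m!$ and $\binom{n}{m} \ge (n/2)^m/m!$, valid once $n \ge 2m$ (which can be assumed for large instances).

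The last step is pure bookkeeping: setting $\gamma := \gamma_0/3$ gives $\gamma_0 k/2 - \gamma(k-1) = \gamma_0(k+2)/6 \ge 2\gamma_0/3$ for every $k \ge 2$, so the factor $2^{k/2}$ is swallowed once $R$ exceeds an absolute threshold depending only on $\gamma_0$, yielding $\val(\Psi') \le R^{-\gamma(k-1)}$. I do not anticipate any substantive difficulty: the construction is a textbook conjunction-based amplification, and the only care needed is to fix $\gamma$ once, independent of $k$, so that the slack between the exponents $\gamma_0 k/2$ and $\gamma(k-1)$ grows quickly enough in $k$ to absorb the $k$-dependent constant $2^{k/2}$.
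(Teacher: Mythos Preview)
Your proposal is correct and follows exactly the approach the paper sketches in the paragraph immediately preceding the theorem: take all $k/2$-subsets of constraints of a hard $2$-CSP and replace each by the $k$-ary conjunction. You are in fact more careful than the paper, which simply asserts the resulting value is $\val(\Psi)^{k/2}$ without tracking the $2^{k/2}$ discrepancy from using unordered subsets; your verification that $\gamma=\gamma_0/3$ absorbs this constant uniformly in $k$ for $R$ large enough is exactly the missing bookkeeping.
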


For many hardness of approximation applications, one only needs the soundness to be a vanishing function of the alphabet size, in which case results such as~\cref{thm: basic 2 query} and~\cref{thm: basic k query} are sufficient. For certain cases, however, one additionally requires the soundness to be small \emph{relative to} the alphabet size~\cite{laekhanukit2014parameters,KS,LeeManurangsi}. 
%This feature can be especially important for obtaining quantitatively strong hardness of approximation results. That is, a better tradeoff can sometimes lead to a better hardness of approximation factor. 
Optimizing the alphabet-soundness tradeoff for $k$-query PCPs is the focus of this paper and quantitatively, one can think of improving this tradeoff as increasing the value of $\gamma$ in \cref{thm: basic k query}. For this task, one should think of $k$ as a fixed constant, and the goal is to obtain larger, concrete values of $\gamma$ for \cref{thm: basic k query} with this arity $k$.

It is clear that the best we can hope in \cref{thm: basic k query} is $\gamma = 1-o(1)$. Indeed, \emph{any} $k$-CSP of alphabet size $R$ has value at least $\frac{1}{R^{k-1}}$, as obtained by a randomly chosen assignment. On the other hand, \cref{thm: basic k query} only achieves a small but absolute value $\gamma > 0$, leaving a large gap between its assertion and what is potentially possible. Strictly speaking, this implicit value of $\gamma$ is still the best known
result to date.

If one is willing to relax the completeness parameter slightly, namely require the value in the YES case to only be close to $1$ (as opposed to exactly $1$), then significantly better values for $\gamma$ in~\cref{thm: basic 2 query} are known. Khot and Safra~\cite{KS} showed that one may take 
$\gamma = 1/6-o(1)$, and this was subsequently improved by Chan~\cite{Chan}, who showed that one may take 
$\gamma = 1/2-o(1)$. This result was further improved recently in~\cite{MZ24}, who achieved the near optimal $\gamma = 1-o(1)$. 

In the same fashion, for $k>2$ stronger forms of~\cref{thm: basic k query} are known to hold, again with imperfect completeness. In this regime, Chan~\cite{Chan} obtains soundness on the orders of $\frac{k}{R^{k-2}}$ if $3 \leq k < R$ (which is the most relevant regime for us, as we think of $k$ as a small constant and $R$ as being a large constant) and $\frac{k}{R^{k-1}}$ for $R \leq k$. In terms of $\gamma$, this roughly translates to $\gamma = \frac{k-2}{k-1}$, which approaches $1$ for sufficiently large $k$, but is still bounded away from the potentially best possible when $k$ is thought of as constant. This gap is even more pronounced for small constant values of $k$ such as $3$ or $4$. In terms of the best approximation ratio possible for $k$-CSP, it is known that approximating the value of $k$-CSP within factor $O_k(\log R/R^{k-1})$ is NP-hard \cite{MNT,KSaket}. The downside of this result is that, when stated as a gap problem, the completeness parameter (the value in the YES case) is small and is a vanishing function of the alphabet size $R$. 
Our main result is a hardness result that achieves a slightly weaker approximation ratio than the result of~\cite{MNT,KSaket}, but has completeness close to $1$.
% Even though these works give optimal NP-hardness of approximation for $k$-CSP with alphabet size $R$, it is still desirable to improve upon their completeness, while maintaining an optimal gap to soundness for applications other NP-hardness results. Our work does precisely this, and our main theorem is the following.

\begin{thm} \label{thm: main}
    For all $\eps, \delta > 0$ and arity $k$, and sufficiently large alphabet size $R$, it is NP-hard to distinguish between the following two cases given a $k$-CSP $\Psi$ with alphabet size $R$ :
    \begin{itemize}
        \item YES case: $\val(\Psi) \geq 1 - \delta$,
        \item NO case: $\val(\Psi) \leq \frac{1}{R^{(1-\eps)(k-1)}}$.
    \end{itemize}
\end{thm}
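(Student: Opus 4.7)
The plan is to lift the $k=2$ reduction of \cite{MZ24} to arbitrary arity $k$ via a Grassmann-based PCP construction. Starting from the hard Label Cover instance obtained by combining \cref{thm: basic 2 query} with standard smoothness reductions, I would attach to each super-vertex a table indexed by $\ell$-dimensional subspaces $L \subseteq \mathbb{F}_q^n$, where the intended entry $A_L$ encodes the restriction of a (list-)decodable global assignment to $L$. The $k$-query verifier then samples a Label Cover constraint together with a correlated tuple $(L_1,\ldots,L_k)$ from a distribution $\D$ on subspaces whose one-dimensional marginals are uniform, but whose joint support forces the $k$ local assignments to agree with a single global assignment (a ``joint zoom-in''); it accepts iff the read-out labels are mutually consistent and decode to a Label Cover assignment satisfying the sampled constraint. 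Completeness $\ge 1-\delta$ is immediate from a good Label Cover assignment.

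For soundness, suppose the tables pass the test with probability greater than $1/R^{(1-\eps)(k-1)}$. Expanding acceptance via Fourier analysis on the Grassmann, the acceptance probability splits into a \emph{structured} contribution (coming from characters that, via a list-decoding argument, yield candidate Label Cover labels) and a \emph{pseudorandom} contribution. The structured part is only non-negligible if the underlying Label Cover instance has value bounded away from zero, which is the decoding we want. The pseudorandom part is essentially a $k$-linear form over pseudorandom sets of subspaces in the Grassmann graph; to bound it I would prove a counting lemma asserting that, for pseudorandom $S_1,\ldots,S_k$, the number of $k$-hyperedges of $\D$ with one vertex in each $S_i$ is at most $(1+o(1))\prod_i \mu(S_i)$ times the total, valid down to the threshold $R^{-(1-\eps)(k-1)}$. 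Plugging this bound back into the Fourier expansion would contradict the soundness assumption unless the structured part is non-trivial.

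The main obstacle is the counting lemma itself. The $k=2$ analogue used in \cite{MZ24} is essentially an expander-mixing / global-hypercontractivity statement for the Grassmann graph; for arity $k$ one must control higher-order correlations against a $k$-uniform hypergraph on subspaces. A naive application of global hypercontractivity loses a factor of $R^{k-2}$ (exactly what limits Chan's bound), so instead I would run an iterative structure-vs-pseudorandomness decomposition: at each step peel off the largest ``zoom-in'' component of the functions involved, and show that the remaining pieces have exponentially small $k$-linear correlation. Controlling the accumulation of error across $k$ such peelings, so that the total loss stays at $R^{-(1-\eps)(k-1)}$ rather than $R^{-(k-2)}$, will be the technical heart of the argument.
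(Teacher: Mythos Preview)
Your overall architecture---outer Label-Cover-style PCP, Grassmann-encoded tables, a hyperedge counting lemma for pseudorandom sets feeding into a soundness argument---matches the paper. But two concrete points separate your sketch from a working proof, and both stem from leaving the test distribution $\D$ unspecified.

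First, the paper's test has a very particular ``star'' shape: pick one subspace $R\in\Grass(n,2(1-\delta)\ell)$, then independently pick $L_1,\ldots,L_k\in\Grass(n,2\ell)$ each containing $R$, and check $T_1[L_i]|_R\equiv T_2[R]$ for all $i$. This structure is what makes the counting lemma tractable. Passing to the bilinear scheme, the acceptance probability becomes (essentially) $\langle(\T F)^k,G\rangle$, where $F,G$ are the indicators of the relevant sets lifted to matrices and $\T$ is an averaging operator. The $k$th power appears \emph{exactly} because the $L_i$ are conditionally independent given $R$. One then splits $\T F$ into a low-degree part $L$ (degree $\le r$) and a high-degree tail $H$; the tail is killed by the eigenvalue decay of $\T$, and for the low-degree part H\"older gives $\|L^k\|_t=\|L\|_{kt}^k$, after which a single invocation of global hypercontractivity (at the $kt$-norm) on each of the $r+1$ level pieces yields $\eps^{(kt-2)/t}\approx\eps^{k(1-o(1))}$. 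No iterative peeling is needed, and there is no $R^{k-2}$ loss: the full $k$th power comes for free from the algebraic identity $\|L^k\|_t=\|L\|_{kt}^k$.

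Second, your phrasing of the counting lemma---$k$ possibly different pseudorandom sets $S_1,\ldots,S_k$---is more general than required and correspondingly harder. In the paper's setup, after the BKS trick (pick a random global linear $f$ and look at $S_{L,f}=\{L:T_1[L]=f|_L\}$), all $k$ of the $L_i$ live in the \emph{same} set $\Lcal=S_{L,f}$, and only pseudorandomness of that one set is assumed; the $R$-side set $\Rcal=S_{R,f}$ carries no pseudorandomness hypothesis at all, only a density $\beta$. The asymmetry between the single $R$ and the $k$ independent $L_i$'s is exactly what yields the $\beta^{1-o(1)}\eps^{k(1-o(1))}$ bound.

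So the gap in your plan is not that iterative structure-vs-randomness would fail, but that you have not committed to a test whose hyperedge distribution factors through a single ``hub'' vertex. Once you do, the analysis is a direct H\"older-plus-hypercontractivity computation rather than the delicate accumulation-of-error argument you anticipate.
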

The fact that the value in the YES case is close to $1$ and that the alphabet-soundness tradeoff is nearly optimal leads to improved hardness results for $k$-Dimensional Matching and related problems.

\paragraph{Application to $k$-Dimensional Matching}
By plugging our improved $k$-CSP hardness result into the reduction from \cite{LST}, we obtain improved hardness of approximation for \emph{$k$-Dimensional matching}.

\begin{definition}
A $k$-Dimensional Matching instance $\Pi = (V, E)$ consists of a $k$-partite hypergraph $G = (V, E)$ with hyperedges of size at most $k$. Its value, $\val(\Pi)$, is the size of the maximum matching of $G$, i.e.\, the largest set of edges in $E$ such that no two edges in $E$ intersect.
\end{definition}

\cref{thm: main} yields the following hardness of approximation result for $k$-Dimensional Matching, improving upon the hardness factor of $\frac{k}{12 + \eps}$ due to \cite{LST}.

\begin{thm} \label{thm: k dim matching}
 For any constant $\eps > 0$, there is a sufficiently large $k$ for which the following holds. Given a $k$-Dimensional Matching instance $\Pi$, no polynomial-time algorithm can approximate $\val(\Pi)$ within a factor of $\frac{k}{8+\eps}$ unless NP is contained in BPP.
\end{thm}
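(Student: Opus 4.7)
The plan is to derive Theorem~\ref{thm: k dim matching} directly from Theorem~\ref{thm: main} via the reduction of \cite{LST} from $k$-CSP to $k$-Dimensional Matching. This reduction constructs a $k$-partite $k$-uniform hypergraph $\Pi$ out of a $k$-CSP instance $\Psi$ by creating one hyperedge per (constraint, satisfying-tuple) pair, with vertices labeled by (variable, value) pairs arranged across the $k$ positions of the constraint. Completeness is then essentially immediate: a $(1-\delta)$-satisfying assignment to $\Psi$ selects a matching of relative size $1-\delta$ in $\Pi$ by choosing, for each satisfied constraint, the hyperedge induced by the assignment.

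The nontrivial direction is the soundness analysis. One rounds a matching $M$ into a random CSP assignment by sampling, for each variable $v$, a uniformly random value from the set $S_v$ of values appearing alongside $v$ in some edge of $M$. A threshold argument splits variables into ``heavy'' and ``light'' based on $|S_v|$: matched hyperedges whose variables are all light are satisfied by the rounded assignment with a controllable probability, so their count is governed by the CSP soundness, while matched hyperedges involving heavy variables are few in number because of the identity $\sum_v |S_v|=k|M|$. Tuning the threshold against the CSP soundness $s$ produces a bound on the matching size in terms of $s$, $R$, $k$, and the number of constraints. The paper \cite{LST} plugs in Chan's soundness $s=O(k/R^{k-2})$ to obtain the hardness ratio $k/(12+\eps)$.

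My plan is to repeat this soundness analysis verbatim, but with the near-optimal soundness $s=R^{-(1-\eps)(k-1)}$ provided by Theorem~\ref{thm: main} in place of Chan's. The extra factor of roughly $R^{1-O(\eps)}$ loosens the threshold-optimization tradeoff, and after retuning the threshold as a function of $R$, $s$, and $k$ the constant appearing in the denominator improves from $12$ to $8$. The main technical obstacle is the careful arithmetic bookkeeping of this reoptimization: verifying that the contribution of matched hyperedges touching heavy variables is absorbed correctly, choosing $R$ and $k$ both large in terms of $1/\eps$, and ensuring that all remaining slack collapses into the $\eps$ term in the denominator to yield the clean ratio $k/(8+\eps)$.
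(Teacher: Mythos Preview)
Your proposal has a genuine structural gap: you have conflated the arity of the CSP with the dimension of the matching instance, and this leads you to the wrong optimization. In the paper (and in \cite{LST}) the pipeline is: start from a $k$-ary CSP over alphabet $R$, regularize it (make it $k$-partite and fully regular), apply the \emph{randomized sparsification} of \cite{LST} to obtain an $R$-degree-bounded instance, and only then pass to $(kR)$-Dimensional Matching. The sparsification step is what turns soundness $R^{-\gamma}$ into soundness roughly $\frac{k}{(\gamma-1)R}$; the matching reduction then preserves the gap, yielding a hardness factor of $\frac{(\gamma-1)R}{k}=\frac{d(\gamma-1)}{k^2}$ for $d=kR$-Dimensional Matching. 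With Chan one has $\gamma\approx k-2$, so one maximizes $(k-3)/k^2$ at $k=6$, giving $d/12$; with our $\gamma\approx k-1$ one maximizes $(k-2)/k^2$ at $k=4$, giving $d/8$. The constant $8$ thus comes from \emph{fixing the CSP arity to $4$}, not from taking it large. Your stated plan of ``choosing $R$ and $k$ both large in terms of $1/\eps$'' would drive $(k-2)/k^2\to 0$ and produce no nontrivial ratio.

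Separately, the ``one hyperedge per (constraint, satisfying tuple)'' reduction with a heavy/light threshold argument that you sketch is not the \cite{LST} reduction; it omits both the regularization of the CSP (Sections~6.1--6.3 here) and, more importantly, the randomized degree-reduction step (Theorem~6.6), which is precisely why the conclusion is ``unless NP $\subseteq$ BPP'' rather than ``unless P $=$ NP''. Without the degree bound, your threshold argument only yields $|M_{\text{heavy}}|\le kR|M|/t$, forcing $t\gg kR$ and making $|M_{\text{light}}|\le s\,m\,t^{k}$ explode; you will not recover any constant close to $1/8$ this way. The ``careful arithmetic bookkeeping'' you anticipate is not the obstacle---the missing ingredients are the sparsification and the choice $k=4$.
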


As a corollary, we also get hardness of approximation for a number of other problems. We refer the reader to \cite{LST} for the definition and further discussion on these problems. 
\begin{thm} \label{thm: k dim coro}
  For every constant $\eps > 0$, there is a sufficiently large $k$ such that, unless NP is contained in BPP, there is no polynomial time $\frac{k}{8+\eps}$-approximation algorithm for any of the problems $k$-Set Packing, $k$-Matchoid, $k$-Matroid Parity, or $k$-Matroid Intersection.
\end{thm}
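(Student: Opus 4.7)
The plan is to derive this corollary via a short chain of approximation-preserving reductions from \cref{thm: k dim matching}, all of which are standard and are recorded in \cite{LST}. The unifying observation is that each of $k$-Set Packing, $k$-Matchoid, $k$-Matroid Parity, and $k$-Matroid Intersection contains $k$-Dimensional Matching as a structural special case in a way that preserves the objective value exactly, so any polynomial-time $\frac{k}{8+\eps}$-approximation for one of them would yield one for $k$-Dimensional Matching, contradicting \cref{thm: k dim matching} (under the same BPP hypothesis).

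For $k$-Set Packing the reduction is trivial: given a $k$-Dimensional Matching instance $\Pi = (V_1 \cup \cdots \cup V_k, E)$, the hyperedges are already subsets of size at most $k$ of the ground set $V = V_1 \cup \cdots \cup V_k$, and two hyperedges share a vertex if and only if they collide as sets. Thus the same instance, re-read as a $k$-Set Packing instance, has the same optimum value. For $k$-Matroid Intersection I would use the classical partition-matroid representation: on the ground set $E$, for each $i \in [k]$ define a partition matroid $M_i$ whose partition classes are indexed by vertices $v \in V_i$ (the class of $v$ consists of all hyperedges containing $v$), with capacity one per class; a subset of $E$ is independent in all $M_1, \ldots, M_k$ simultaneously iff it is a $k$-Dimensional Matching, so the two optima agree.

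The remaining two problems follow by general embeddings: $k$-Matroid Parity contains $k$-Matroid Intersection as a special case (take parity elements of size one, or pair ground elements with independent copies across the matroids in the standard way), and $k$-Matchoid simultaneously generalizes $k$-Set Packing and $k$-Matroid Intersection by allowing each element to appear in a bounded number of matroid constraints, which specializes to the matching setting by setting one partition matroid per part $V_i$.

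The main thing to verify, rather than a genuine obstacle, is that each reduction is polynomial-time and strictly value-preserving, so that the hardness factor $\frac{k}{8+\eps}$ transfers without loss. In each case, the ground set has size $|E|$ and the matroids are partition matroids of total description polynomial in the original instance, so membership and independence oracles are trivially computable, and the approximation ratio is preserved exactly. Hence \cref{thm: k dim matching} implies \cref{thm: k dim coro} for all four problems simultaneously.
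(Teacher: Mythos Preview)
Your proposal is correct and mirrors the paper's own treatment: the paper does not give a separate proof of \cref{thm: k dim coro} at all, stating it only as a corollary and referring the reader to \cite{LST} for the definitions and the reductions. Your sketch of the value-preserving embeddings of $k$-Dimensional Matching into each of the four problems is exactly the standard chain that \cite{LST} records, so there is nothing to add.
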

The prior best hardness of approximation factor for all of these problems was $\frac{k}{12+\eps}$, also due to \cite{LST}. We remark
that all of these results require a version of~\cref{thm: main} with respect to $k$-partite $k$-CSPs over regular hypergraphs, and we show how to obtain such result using standard techniques.

\subsection{PCP Construction Overview}
Our reduction from an NP-hard problem to a CSP in \cref{thm: main} closely follows the reduction of \cite{MZ24}. For technical purposes, it is convenient to set the arity of the CSP to $k+1$, rather than $k$. Furthermore, 
it will be helpful to view our reduction to $(k+1)$-CSP as a $(k+1)$-query PCP construction, and we use the language of PCPs henceforth. The completeness of the PCP becomes the value on the YES case of \cref{thm: main} and the soundness becomes the value in the NO case of \cref{thm: main}.

Our $(k+1)$-query PCP is constructed by composing an outer PCP and an inner PCP. 
The outer PCP is exactly the same as that of \cite{KMS,DKKMS1,DKKMS2,KMS2, MZ24}. We give a brief overview of this construction below and refer the reader to~\cite{MZ24} for a 
more detailed discussion.
%and can be thought of as a $2$-Prover-$1$-Round-Game. We refer the reader to the introduction of \cite{MZ24} for background on $2$-Prover-$1$-Round-Games (we note that they are equivalent to $2$-CSPs or $2$-query PCPs). 
Our inner PCP is based on a higher query generalization of Grassmann consistency test of \cite{MZ24}, and our test uses $(k+1)$-queries rather than $2$ (and in return gains a power $k$ improvement in the exponent soundness). The design and analysis of this test is the main new ingredient in our construction. Once we have the outer PCP and  the inner we compose them in similar way to~\cite{MZ24}.

\subsubsection{Outer PCP}
The outer PCP is best described using the following $2$-Prover-$1$-Round Game (see~\cite{MZ24} for a discussion about $2$-Prover-$1$-Round Game and their equivalence to $2$-CSPs). It is based on the $\GapLin$ problem. An instance of this problem is a system of linear equations $\Eq$ over a set of variables $X$, each equation depending on $3$-variables in $X$, and the goal is to assign the variables values from $\Ff_2$ so as to satisfy as many of the equations as possible. By H\r{a}stad's result \cite{Hastad}, for all $\eta>0$ it is NP-hard to distinguish between the case there is an assignment satisfying $(1-\eta)$-fraction of the constraints, and the case every assignment satisfies at most $(1/2+\eta)$-fraction of the equations.

Given a $\GapLin$ instance $(X,\Eq)$, in our outer PCP the verifier first chooses $J$ equations, $e_1, \ldots, e_J \in \Eq$ uniformly at random. The verifier then sends all $3J$-variables involved in these equations to the first prover. As per the second prover, for each equation $e_i$ independently, the verifier picks all three of its variables with probability $1-\beta$, and otherwise picks one of them randomly. The verifier sends this collection of variables to the second prover. 
The two provers are expected to give consistent assignments to all variables they are sent, and furthermore the assignment of the first prover has to satisfy the equations $e_1, \ldots, e_J$.
This game is known as the smooth parallel repetition of the variables-versus-equation game, and in~\cite{KMS,MZ24} it
was shown to have near perfect completeness in the YES case, and 
soundness $2^{-\Theta(\beta J)}$ in 
the NO case. We refer the reader to~\cite[Section 1.2.4]{MZ24} for more details.\footnote{Strictly speaking, to be able to compose the outer PCP with an inner PCP we also require a feature known as ``advice''. The advice for the second prover can be thought of as a list of uniformly random vectors in the space $\Ff_2^{V}$, where $V$ is the set of variables sent to the second prover. Letting $U$ be the set of variables sent to the first prover, the advice for the first prover can be thought of as a list of vectors in $\Ff_2^U$, which are the result of filling $0$'s in coordinates of $U\setminus V$ of the second prover's advice vectors.}

\subsubsection{Inner PCP}\label{sec:inner_PCP}
Our inner PCP is based on a local test, called the $(k+1)$-query Grassmann consistency test. 
It broadly falls in the category of ``subspace versus subspace tests'', which are ubiquitous in PCP constructions, going back all the way to the original proof of the PCP theorem. These tests include the Raz-Safra Plane versus Point test \cite{RS} and its variants in other dimensions \cite{AroraSafra, MR, KS, BDN, MZ, HKSS}.  Despite all of these prior works, obtaining \cref{thm: main} requires us to consider a new setting wherein the number of queries made is greater than $2$. Furthermore, due to the tight alphabet-soundness tradeoff we are hoping for, we must make these extra queries in a non-trivial manner. At a high level, each extra query needs to nearly beget the power of a fully independent $2$-query test. 

The setting of the Grassmann consistency is as follows. 
Think of $n$ as large and let $\Grass(n, r)$ denote the set of subspaces of $\mathbb{F}_2^n$ of dimension $r$. Suppose that a linear function $f: \Ff_2^n \to \Ff_2$ is encoded via tables $T_1$ and $T_2$ as follows: $T_1$ assigns to each $L\in \Grass(n,2\ell)$ the linear function $T_1[L] = f|_L: L \to \Ff_2$, and $T_2$ assigns to each $R\in \Grass(n,2(1-\delta)\ell)$ the linear function $T_2[R] = f|_R: R \to \Ff_2$. Given two tables $T_1, T_2$ which assign linear functions to subspaces (not necessarily as above), how can one test if they truly come from a global linear function $f$ as above? An obvious $2$-query tester proceeds as follows:
\begin{itemize}
    \item Choose $R$ of dimension $2(1-\delta)\ell$ uniformly and $L \supseteq R$ of dimension $2\ell$ uniformly,
    \item Accept if and only if $T_1[L]|_R \equiv T_2[R]$.
\end{itemize}
It is clear that the test has completeness $1$: if the two tables truly come from the same global function, then the test always accepts. As for the soundness, it is shown in~\cite{MZ24} to be roughly $2^{-2(1-1000\delta)\ell}$, albeit under a non-traditional sense of soundness. Roughly
speaking, their result
states that there are subspaces $Q,W$ of constant dimension and codimension respectively, such that $T_1,T_2$ agree with a global linear function on non-trivial fraction of subspaces $L$ such that $Q\subseteq L\subseteq W$.\footnote{It can be shown that the more standard notion of soundness fails, making this non-standard version necessary. See~\cite{MZ24}.}
Since the tables have alphabet size $2^{2\ell}$ (coming from the number of linear functions over a $2\ell$ dimension subspace over $\Ff_2$), these soundness and alphabet size parameters match the nearly optimal tradeoff of \cref{thm: main} in the $2$-query case.

Getting a $(k+1)$-query test with near optimal soundness requires some more thought. Using the analysis of \cite{MZ24} in a black-box way seems insufficient, as it only allows one to repeat the above tester $(k+1)/2$ times, and thus  amplifies the soundness to just $2^{-2(1-1000\delta) \ell \cdot (k+1)/2}$. This is quadratically off from the soundness in~\cref{thm: main}. Therefore, to achieve the desired soundness for \cref{thm: main}, we need to make the queries in a more dependent manner while not compromising the soundness. Specifically, we are hoping to match the soundness of $k$ independent trials of the $2$-query test while making only half the number of queries. We achieve this via the following test:

\begin{itemize}
    \item Choose $R \in \Grass(n, 2(1-\delta)\ell)$ uniformly and $L_1, \ldots, L_k \in \Grass(n, 2\ell)$ such that $R \subseteq L_i$ for all $i \in [k]$ uniformly and independently.
    \item Accept if and only if $T_1[L_i]|_R \equiv T_2[R]$ for all $i \in [k]$.
\end{itemize}
At a high level, one can think of the test as performing $k$ repetitions of the $2$-query test, but reusing the subspace $R$ across each trial. By reusing $R$, we are able to save on the number of queries, and furthermore, we show that this change does not sacrifice soundness. Specifically we show that if the above test has soundness $2^{-2(1-1000\delta)\cdot k\ell}$, under the same non-traditional sense of soundness from \cite{MZ24}. Our analysis involves Fourier analysis and makes use of recent global hypercontractivity results over the bilinear scheme \cite{EvraKL}.

\subsubsection{Our Main Analytical Statement}
At the heart of our analysis lies an upper bound on the number of hyperedges 
between two \emph{pseudo-random sets} in 
Grassmann graphs, which may be of independent interest. 
\begin{definition}
  We say that a set $S\subseteq   \Grass(n, \ell)$ is \textit{$(r, \epsilon)$-pseudo-random} if for all subspaces $Q \subseteq W \subseteq \Ff_2^n$ satisfying $\dim(Q) + \codim(W) = r$, we have $\Pr_{L\in \Grass(n,\ell)}[L\in S~|~Q\subseteq L\subseteq W]  \leq \epsilon$.
\end{definition}
With this definition we have the following result:
\begin{lemma}[Informal version of~\cref{lm: pseudorandom edges}]
Suppose that $\mathcal{L}\subseteq \Grass(n,2\ell)$ is $(r,\eps)$-pseudo-random
and $\mathcal{R}\subseteq \Grass(n,2(1-\delta)\ell)$ is a set with density $\beta$. Then
\[
\Pr_{\substack{R\in \Grass(n,2(1-\delta)\ell)\\ \Grass(n,2\ell)\ni L_1,\ldots,L_k\supseteq R}}[R\in\mathcal{R}, L_1,\ldots,L_k\in \mathcal{L}]
\leq 2^{O_{r,k}(1)}\beta^{1-o(1)}\eps^{k(1-o(1))}+2^{-r\delta\ell+1}.
\]
\end{lemma}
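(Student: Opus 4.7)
My plan is to decouple the $k$ queries by conditioning on $R$, reduce to a single moment estimate via H\"older's inequality, and finish with a global hypercontractivity bound on the bilinear (Grassmann) scheme.

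First I would observe that, conditional on $R$, the subspaces $L_1,\ldots,L_k$ containing $R$ are i.i.d.\ uniform in $\{L \in \Grass(n,2\ell) : L \supseteq R\}$. Setting
\[
g(R) \;=\; \Pr_{L \supseteq R}[L \in \mathcal{L}],
\]
the target quantity rewrites as $P = \E_R[\ind_{\mathcal{R}}(R) \cdot g(R)^k]$, turning a $(k+1)$-query probability into a single expectation against a $k$-th moment. Then, for a small parameter $\eta>0$, H\"older with exponents $(1+\eta)/\eta$ and $1+\eta$ gives
\[
P \;\leq\; \beta^{\eta/(1+\eta)} \cdot \|g\|_{k(1+\eta)}^{k},
\]
so the first factor is $\beta^{1-o(1)}$ as $\eta \to 0$, and what remains is to bound $\|g\|_{k(1+\eta)}^k$ by roughly $\eps^{k(1-o(1))}$ up to constants and an additive residue.

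For that $L^p$ bound I would appeal to the global hypercontractivity theorem on the bilinear scheme from \cite{EvraKL}. The averaging operator $\ind_{\mathcal{L}} \mapsto g$ between $\Grass(n,2\ell)$ and $\Grass(n,2(1-\delta)\ell)$ is noise-like and its Fourier spectrum decays geometrically with ``degree''. Since $\ind_{\mathcal{L}}$ is $(r,\eps)$-pseudorandom, it is a global function in the sense of \cite{EvraKL}, and hypercontractivity upgrades the trivial Jensen bound $\|g\|_{k(1+\eta)} \leq \eps^{1/(k(1+\eta))}$ to approximately $\|g\|_{k(1+\eta)} \leq C_{r,k,\eta}^{1/(k(1+\eta))} \cdot \eps$, up to a contribution from the portion of $g$ supported on Fourier components of ``degree'' beyond the pseudorandomness parameter $r$. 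Combining with H\"older yields $P \leq 2^{O_{r,k}(1)}\, \beta^{1-o(1)}\, \eps^{k(1-o(1))}$ plus this tail contribution, which I would bound by $2^{-r\delta\ell+1}$ by exploiting the codimension gap between the two levels: each ``degree'' past $r$ suffers a shrinkage factor of roughly $2^{-\delta\ell}$ when averaging down from level $2\ell$ to level $2(1-\delta)\ell$, so $r$ such factors together produce the claimed additive error.

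The main obstacle will be extracting the exponent $\eps^{k(1-o(1))}$ rather than the weaker $\eps^{k/2}$. A naive Cauchy--Schwarz or $L^2$-based estimate would only give $\|g\|_2^k \leq \eps^{k/2}$, which corresponds exactly to the black-box $k/2$-fold repetition of the $2$-query test mentioned in the overview, and is quadratically off from the target soundness of \cref{thm: main}. The entire point of invoking truly global hypercontractivity -- rather than classical moment bounds -- is to save this quadratic factor, and getting the constants $C_{r,k,\eta}$ to collapse into a clean $2^{O_{r,k}(1)}$ prefactor while keeping the tail error at $2^{-r\delta\ell+1}$ is where I expect the bulk of the Fourier-analytic work on the bilinear scheme to lie.
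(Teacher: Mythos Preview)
Your overall strategy coincides with the paper's proof: rewrite the probability as $\E_R[\ind_{\mathcal{R}}(R)\, g(R)^k]$ with $g(R) = \Pr_{L \supseteq R}[L \in \mathcal{L}]$, apply H\"older to separate the $\beta$-factor from a $k$-th moment of $g$, split $g$ into low- and high-degree parts, bound the low-degree piece via the global hypercontractivity of \cite{EvraKL}, and control the high-degree tail using the $2^{-\Theta(\delta\ell)}$-per-level shrinkage of the averaging operator between the two Grassmann layers. The paper carries all of this out on the bilinear scheme (lifting the indicators to matrix functions $F,G$ so that $g$ becomes $\mathcal{T}F$), but the argument is identical in spirit.

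There is, however, one concrete slip in your H\"older step that would break the bound if left as written. With the exponent $(1+\eta)/\eta$ placed on $\ind_{\mathcal{R}}$ and $1+\eta$ on $g^k$, the first factor is $\beta^{\eta/(1+\eta)}$, which tends to $1$, not to $\beta^{1-o(1)}$, as $\eta \to 0$. The exponents must be swapped: put $1+\eta$ on $\ind_{\mathcal{R}}$ (giving $\beta^{1/(1+\eta)} = \beta^{1-o(1)}$) and $(1+\eta)/\eta$ on $g^k$, so the relevant norm is $\|g\|_{k(1+\eta)/\eta}^k$ with a \emph{large} moment parameter. This matters for the hypercontractive step as well: the EKL bound is $\|F^{=d}\|_t \lesssim \eps^{(t-2)/t}$, which only approaches $\eps$ as $t \to \infty$. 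With your exponents as written, $t = k(1+\eta)$ stays bounded and, after raising to the $k$-th power, you would only recover $\eps^{k-2}$ rather than $\eps^{k(1-o(1))}$. Once the exponents are swapped (the paper's parameter $t$ plays the role of your $(1+\eta)/\eta$), your plan goes through and matches the paper's argument.
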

In other words, taking $r$ large so that the $2^{-r\delta\ell+1}$ term is negligible, the lemma says that the probability that $R$ and $L_1,\ldots,L_k$ sampled as above are in $\mathcal{R}$ and 
$\mathcal{L}$ respectively is not
much more than in a random $\mathcal{L},\mathcal{R}$ of densities 
$\eps,\beta$, which would be $\beta \eps^{k}$. 
We remark that there is some analogy between the above lemma and~\cite[Lemma 3.1]{MNT}, which makes a similar in spirit assertion for the noisy hypercube. Their
result however requires a large noise rate, which is ultimately the reason their result has small completeness.

\section{Preliminaries}

\subsection{The Grassmann Domain}
A key component of our result is Fourier analysis over graphs related to the Grassmann domains, which we overview here. Throughout this section, we fix parameters $n$ and $\ell$ with $1 \ll \ell \ll n$.
\subsubsection{Basic Definitions}

We use $\Grass(n, \ell)$ to denote the set of $\ell$-dimensional subspaces $L \subseteq \Ff_2^n$. At times we will have a different vector space $V$ over $\Ff_2$ used as the ambient space, in which case $\Grass(V, \ell)$ denotes the set of $\ell$-dimensional subspaces of $V$. 
The number of $\ell$-dimensional subspaces of $\Ff_2^n$ is denoted by the Gaussian binomial coefficient $\qbin{n}{\ell} = \prod_{i=0}^{\ell-1}\frac{2^n - 2^i}{2^\ell - 2^i}$. Abusing notation, we denote  $\qbin{V}{\ell} = \Grass(V, \ell)$.

\paragraph{Bipartite Inclusion Graphs.} We will often consider the bipartite inclusion graph between $\Grass(n, \ell)$ and $\Grass(n, \ell')$ for $\ell' < \ell$. This graph has an edge between every $L \subseteq \Grass(n,\ell)$ and $L' \in \Grass(n, \ell')$ such that $L \supseteq L'$. We refer to the family of such graphs as \emph{Grassmann graphs}. The edges of these are the motivation for the test used in our inner PCP. While we never refer to this graph explicitly, it will be helpful to have it in mind --- particularly during \cref{sec: grassmann test}.

\paragraph{Zoom-ins and Zoom-outs.}
A feature of the Grassmann domain is that they contain many copies of lower dimensional Grassmann sub-domains. We refer to these sub-domains as zoom-ins and zoom-outs, and they play a large part in the analysis of our inner PCP and final PCP. For subspaces $Q \subseteq W \subseteq \Ff_2^n$, let 
\[
\Zoom[Q,W] = \{L \in \Grass(n, \ell) \; | \; Q \subseteq L \subseteq W\}.
\]
We refer to $Q$ as a zoom-in and $W$ as a zoom-out. When $W = \Ff_2^n$, $\Zoom[Q,W]$ is the zoom-in on $Q$, and when $Q = \{0\}$, $\Zoom[Q,W]$ is the zoom-out on $W$. 

 \subsubsection{Pseudo-randomness over the Grassmann graph}
The notion $(r,\epsilon)$-pseudo-randomness, which captures how much the measure of a set $\mc{L}$ can increase by a zoom-in/zoom-out restrictions of ``size $r$'', will be important for us. Denote $\mu(\mc{L}) = \Pr_{L \in \Grass(n, \ell)}[L \in \mc{L}]$, and for subspaces $Q \subseteq W \subseteq \Ff_2^n$, denote 
$\mu_{Q, W}(\mc{L}) = \Pr_{L \in \Zoom[Q,W]}[L \in \mc{L}]$.
\begin{definition}
  We say that a set $L \subseteq \Grass(n,\ell)$ is \textit{$(r, \epsilon)$-pseudo-random} if for all $Q \subseteq W \subseteq \Ff_2^n$ satisfying $\dim(Q) + \codim(W) = r$, we have $\mu_{Q, W}(\mc{L})  \leq \epsilon$.
\end{definition}

\subsection{The Bilinear Scheme} \label{sec: bilinear scheme}

The bilinear scheme is a domain closely related to the Grassmann domain: instead of considering $\ell$-dimensional subspaces, one considers $n \times \ell$ matrices over $\Ff_2$. To move from the Grassmann domain to the Bilinear scheme we will associate a matrix $M$ with its column span. 
Thus, the matrices of $\Ff_2^{n \times \ell}$ roughly correspond to the subspaces in $\Grass(n, \ell)$ and likewise the matrices of $\Ff_2^{n \times \ell'}$ and subspaces of $\Grass(n, \ell')$. This will be important for us later when we define the analogs of the Grassmann inclusion graphs in the Bilinear scheme. 
%The edges of the bilinear scheme can be defined in a natural and suitable way, but some care is needed to handle matrices with linearly dependent columns, i.e.\ whose span has dimension less than $\ell$. 
%This inconvenience is minor nevertheless, and we are able to translate results from the bilinear scheme to the Grassmann scheme without much loss. 
The main advantage of the bilinear scheme is that it is a Cayley graph and therefore Fourier analysis is more easily applicable, and in the rest of
this section we give some background on it.

\paragraph{The Bilinear Scheme:} 
We denote by $L_2\left(\Ff_2^{n \times 2\ell}\right)$ the space of complex valued functions $F: \Ff_2^{n \times 2\ell} \to \mathbb{C}$ equipped with the inner product $\langle F, G \rangle = \E_{M \in \Ff_2^{n \times 2\ell}}[F(M) \overline{G(M)}]$,
where the distribution taken over $M$ is uniform. 
For $s \in \Ff_2^{n}$ define $\chi_s\colon \mathbb{F}_2^n\to\{-1,1\}$ by $\chi_s(x) = (-1)^{s \cdot x}$. Then, an orthonormal basis of $L_2(\Ff_2^{n \times 2\ell})$ is given by the characters, $\chi_S: \Ff_2^{n \times 2\ell} \xrightarrow[]{} \{-1,1\}$ over all $S = (s_1,\ldots, s_{2\ell}) \in \Ff_2^{n \times 2\ell}$, where
\[
\chi_S(x_1,\ldots, x_{2\ell}) = \prod_{i=1}^{2\ell} \chi_{s_i}(x_i) =  (-1)^{\sum_{i=1}^{2\ell} s_i \cdot x_i}.
\]
Thus, any $F \in L_2(\Ff_2^{n \times 2\ell})$ can be expressed as $F = \sum_{S \in \Ff_2^{n \times 2\ell}}\widehat{F}(S) \chi_{S}$,
where $\widehat{F}(S) = \langle F, \chi_S \rangle$. The level $d$ component of $F$ is given by
\begin{equation} \label{eq: level d decomp}   
F^{=d} = \sum_{S: \; \rank(S) = d} \widehat{F}(S) \chi_{S}.
\end{equation}

If a function $F$ only consists of components up to level $d$, i.e.\ $\widehat{F}(S) = 0$ for all $\rank(S) > d$, then we say $F$ is of degree $d$.

We now describe the analogues of zoom-ins and zoom-outs on the bilinear scheme over $\Ff_2^{n \times 2\ell}$, as well as an analogous notion of $(r, \epsilon)$-pseudo-randomness for Boolean functions over $\Ff_2^{n \times 2\ell}$. 

\begin{definition}
  A zoom-in of dimension $d$ over $\Ff_2^{n \times 2\ell}$ is given by $d$-pairs of vectors $(u_1, v_1), \ldots, (u_r, v_r)$ where each $u_i \in \Ff_2^{2\ell}$ and each $v_i \in \Ff_2^{n}$. Let $U \in \Ff_2^{2\ell \times d}$ and $V \in \Ff_2^{n \times d}$ denote the matrices whose $i$th columns are $u_i$ and $v_i$ respectively. Then the zoom-in on $(U, V)$ is the set of $M \in \Ff_2^{n \times 2\ell}$ such that $MU = V$, or equivalently, $Mu_i = v_i$ for $1 \leq i \leq d$.  
\end{definition}

\begin{definition}
 A zoom-out of dimension $d$ is defined similarly, except by multiplication on the left. Given $X \in \Ff_2^{d \times n}$ and $Y \in \Ff_2^{d \times 2\ell}$, whose rows are given by $x_i$ and $y_i$ respectively, the zoom-out $(X, Y)$ is the set of $M \in \Ff_2^{n \times 2\ell}$ such that $XM = Y$, or equivalently, $x_iM = y_i$ for $1 \leq i \leq d$.   
\end{definition}

Let $\Zoom[(U, V), (X, Y)]$ denote the intersections of the zoom-in on $(U, V)$ and the zoom-out on $(X, Y)$. The codimension of $\Zoom[(U, V), (X, Y)]$ is the sum of the number of columns of $U$ and the number of rows of $X$, which we will denote by $\dim(U)$ and $\codim(X)$. For a zoom-in and zoom-out pair and a Boolean function $F$, we define $F_{(U, V), (X,Y)}: \Zoom[(U,V), (X,Y)] \xrightarrow[]{} \{0,1\}$ to be the restriction of $F$ which is given as 
\[
F_{(U, V), (X,Y)}(M) = F(M) \quad \text{for} \quad M \in  \Zoom[(U,V), (X,Y)].
\]
When $\dim(U) + \codim(X) = d$, we say that the restriction is of size $d$. We define $(d, \epsilon)$-pseudo-randomness in terms of the $L_2$-norms of restrictions of $F$ of size $d$.
Here and throughout, when we consider
restricted functions, the underlying 
measure is the uniform measure
over the corresponding zoom-in and zoom-out
set $\Zoom[(U,V),(X,Y)]$.
\begin{definition} \label{def: bilinear scheme pseudorandom}
We say that an indicator function $F \in L_2(\Ff_2^{n \times 2\ell})$ is \emph{$(d, \epsilon)$-pseudo-random} if for all zoom-in zoom-out combinations $\Zoom[(U, V), (X, Y)]$ such that $\dim(U) + \codim(X) = d$, we have
\[
\norm{F_{(U,V), (X, Y)}}_2^2 
\leq \epsilon.
\]
\end{definition}
We note that for Boolean functions $F$, 
$\norm{F_{(U,V), (X, Y)}}_2^2 
= \E_{M \in \Zoom[(U, V), (X, Y)]}[F(M)]$, and hence the 
definition above generalizes the definition we have for Boolean
functions. 

\section{The PCP Construction}
In this section we describe our PCP construction towards the proof of~\cref{thm: main}.
% Our PCP is constructed by composing an outer PCP with an inner PCP. This technique of composition is ubiquitous in the PCP literature and is present in almost all known PCP constructions. Our outer PCP is presented as a game between a verifier and two provers, or a $2$-Prover-$1$-Round Game, which is equivalently a $2$-query PCP. We then compose this outer PCP with an inner PCP, which can be thought of as a $k+1$-query test, and altogether this yields our composed, $k+1$-query PCP.

\subsection{Hardness of 3LIN}
As a starting point, we use a classical result of H\r{a}stad on the hardness of approximately solving systems of linear equations. To state this result, let us start by formally defining the problem $\Lin$.

\begin{definition}
    An instance of $\Lin$, written as
    $(X, \Eq)$, consists of a set of variables $X$ and 
    a set of linear equations $\Eq$ over $\mathbb{F}_2$. 
    Each equation in $\Eq$ depends on exactly three variables in $X$, each variable appears in at most $10$ equations, and any two distinct equations in $\Eq$ share at most a single variable. 
\end{definition}
The goal in $\Lin$ is to find an assignment 
$A\colon X\to\Ff_2$ satisfying as many of the equations
in $\Eq$ as possible. The maximum fraction of equations that 
can be satisfied is called the \textit{value} of the instance.
For $0<s<c\leq 1$, $\GapLin[c,s]$ is the promise
problem wherein the input is an instance $(X,\Eq)$ of $\Lin$
promised to either have value at least $c$ or at most $s$, 
and the goal is to distinguish between these two cases. 

The conditions that two equations in 
$\Eq$ share at most a single variable and the bound on the number of occurrences of each variable are often not included 
in the definition of $\Lin$.
However, combining the classical result of H\r{a}stad~\cite{Hastad} with elementary reductions (see~\cite{dor_thesis,MZ24}) one has the following result:
\begin{theorem} \label{th: 3lin hardness}
There exists $s<1$ such that for every constant $\eta > 0$, $\GapLin\left[1-\eta, s\right]$ is $\NP$-hard.
\end{theorem}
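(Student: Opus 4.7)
The plan is to start from H\r{a}stad's classical hardness theorem for $3$LIN, which asserts that for every $\eta>0$ the promise problem with completeness $1-\eta$ and soundness $\tfrac{1}{2}+\eta$ is NP-hard, but \emph{without} any of the structural conditions imposed in the definition of $\Lin$. The task is therefore to transform an arbitrary $3$LIN instance into one satisfying those structural constraints while losing only constant factors in completeness and soundness. I would do this in two successive reductions: a degree-reduction step ensuring that each variable appears in at most a constant number of equations, and a disentangling step ensuring that any two distinct equations share at most a single variable. Both are standard expander-based gadget reductions, and the final budget of ``at most $10$ occurrences per variable'' is absorbed into the constant degree of the expander used.

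For the first step, suppose a variable $v$ occurs in $d_v$ equations. I replace $v$ by $d_v$ fresh copies $v^{(1)},\dots,v^{(d_v)}$, identified with the vertices of a constant-degree spectral expander $G_v$, and substitute the $i$-th occurrence of $v$ by $v^{(i)}$. For each edge $(i,j)$ of $G_v$ I add an ``equality gadget'' encoded as a constant number of $3$LIN equations on $v^{(i)}, v^{(j)}$ and freshly introduced auxiliary variables, arranged so that the gadget is always satisfiable (hence preserves completeness up to $O(\eta)$), and so that any assignment violating the equality $v^{(i)}=v^{(j)}$ forces a positive constant fraction of the gadget's equations to be unsatisfied. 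In the soundness direction, if the new instance has value very close to $1$, then for most variables $v$ almost all copies are consistent among themselves by the expander mixing lemma; plurality-decoding each $v$ then recovers an assignment to the original instance whose value is close to the value of the new instance, contradicting the soundness guarantee of H\r{a}stad's theorem. Thus after this step we obtain $\GapLin[1-O(\eta), s_1]$ for some absolute constant $s_1<1$, with each variable appearing in at most some constant $D$ equations.

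For the second step, I identify any pair of distinct equations sharing two or more variables and break the multiple incidence using the same expander-based copy trick: replace the offending shared variables by fresh copies inside one of the two equations, and link the copies to the original variable via a new equality gadget as above. Since the degree is already bounded by a constant, only constantly many such repairs are needed per variable, so the parameters remain constant and the instance ends up with every pair of distinct equations sharing at most one variable. Choosing the expander degree and the gadget sizes appropriately yields the hard target occurrence bound of $10$, and by reparameterizing $\eta$ I obtain the claimed $\GapLin[1-\eta, s]$ hardness for an absolute constant $s<1$.

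The main obstacle is the soundness analysis of the degree-reduction step: I need the equality gadget to be both always satisfiable (to preserve completeness) and robust, so that inconsistent copies necessarily incur a positive constant penalty in the fraction of violated equations. This robustness, together with the expander mixing lemma, is what allows plurality-decoding to reconstruct an original assignment without losing more than a constant factor in the number of violated equations. Arranging the gadget entirely within the $3$LIN format (every equation depending on \emph{exactly} three variables, with no repeated variables inside an equation) requires some care, but is standard and is explicitly carried out in the references cited.
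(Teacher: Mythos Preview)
Your proposal is correct and matches the paper's approach: the paper does not give a proof but simply cites H\r{a}stad's theorem together with ``elementary reductions'' from~\cite{dor_thesis,MZ24}, which are exactly the expander-based degree-reduction and pair-disentangling steps you outline. Your sketch is in fact more detailed than what the paper provides.
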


\subsection{The Outer PCP}\label{sec:outer}  \label{sec: final outer pcp}
The outer PCP we use is the same as the outer PCP of \cite{MZ24}. Fix an instance of $\GapLin[1-\eps_1, 1-\eps_2]$ where $0 < \eps_1 < \eps_2$ and call the instance $(X, \Eq)$, so that the variables are $X$ and the equations are $\Eq$. Using, $(X, \Eq)$ we construct a $2$-Prover-$1$-Round game with arbitrarily small constant value and two additional features called ``smoothness'' and ``advice'', and this game serves as our outer PCP. 
%Due to the additional features, the final construction is somewhat complicated and seemingly unnatural when presented by itself. Behind the scenes though, this construction is actually developed through a gradual evolution from a more basic and intuitive game. 
This construction can be motivated more gradually, but for the sake of conciseness we only describe the final construction below. We refer the reader to~\cite[Section 3]{MZ24} for a more detailed discussion. 

Let $0 \leq \beta < 1$ be a smoothness parameter, $r \in \mathbb{N}$ be an advice parameter, and $J$ be a repetition parameter. Then the game $G^{\otimes J}_{\beta,r}$ proceeds as follows:
\begin{enumerate}
    \item The verifier chooses equations $e_1,\ldots,e_J \in \Eq$ uniformly and independently, and lets $U_i$ be the set of variables in $e_i$.
    \item For each $i$ independently, with probability $1-\beta$, the verifier chooses 
    $V_i = U_i$. With probability $\beta$, the verifier chooses
    $V_i\subseteq U_i$ randomly of size $1$.
    \item For each $i=1,\ldots,J$ independently, the verifier picks a vectors $v_1^{i},\ldots,v_r^{i}\in \Ff_2^V$ 
    uniformly and independently. If $U_i = V_i$ the verifier takes $u_j^{i} = v_j^{i}$ for $j=1,\ldots,r$, and 
    otherwise the verifier takes the vectors $u_1^{i},\ldots,u_r^{i}\in\Ff_2^U$
    where for all $j=1,\ldots,r$, the vector $u_j^{i}$ agrees with $v_j^{i}$ on the coordinate of $V_i$, and is 
    $0$ in the coordinates of $U_i\setminus V_i$.
    \item The verifier sets $U = \bigcup_{i=1}^J U_i$ and $u_j = (u_j^1,\ldots,u_j^J)$ for each $j=1,\ldots r$, and 
    $V = \cup_{i=1}^{J} V_i$ and $v_j = (v_j^1,\ldots,v_j^J)$ 
    for each $j=1,\ldots,r$. The verifier sends $U$ and 
    $u_1,\ldots,u_r$ to the first prover, and $V$ and 
    $v_1,\ldots,v_r$ to the second prover.
    \item The provers respond with assignments to the variables they receive, and the verifier accepts if and only if their 
    assignments agree on $V$ and the assignment to $U$ satisfies
    the equations $e_1,\ldots,e_J$.
\end{enumerate}
In~\cite{KMS,DKKMS1, MZ24} the above game is referred to as the $J$-fold parallel repetition of the smooth variable versus equation game with advice. The completeness and the soundness of $G_{\beta,r}^{\otimes J}$ is established in those works via the parallel repetition theorem~\cite{Raz,Rao}, and we state them below.
\begin{claim} \label{claim:soundness_of_outerpcp}
Given a $\Lin$ instance $(X, \Eq)$, the game $G^{\otimes J}_{\beta,r}$ satisfies the following:
\begin{itemize}
    \item \textbf{Completeness:} If $(X, \Eq)$ has an assignment satisfying at least $(1-\eps)$
of the equations, then the provers can win 
$G_{\beta,r}^{\otimes J}$ with probability at least $(1-J\eps)$.
\item \textbf{Soundness:} If no assignment to $(X, \Eq)$ satisfies more than $(1-\eps)$
of the equations, then the provers can win 
$G_{\beta,r}^{\otimes J}$ with probability at most 
$2^{-\Omega(\eps^2 2^{-r}\beta J)}$.
\end{itemize}
\end{claim}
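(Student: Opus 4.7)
The plan is to analyze the honest strategy. Fix any assignment $A^*: X \to \Ff_2$ satisfying at least a $(1-\eps)$ fraction of the equations in $\Eq$, and have both provers respond according to $A^*$ restricted to the variables they receive (namely $U$ for the first prover and $V$ for the second). The consistency check on $V$ is then automatic because both responses are restrictions of the same global function, so the verifier accepts precisely when all $J$ sampled equations $e_1,\ldots,e_J$ are satisfied by $A^*$. Since the $e_i$ are drawn independently and uniformly, this occurs with probability $(1-\eps)^J \geq 1-J\eps$ by Bernoulli's inequality. The advice vectors $u_1,\ldots,u_r$ and $v_1,\ldots,v_r$ are transmitted to the provers but never referenced by the acceptance predicate, so they play no role in this calculation.

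\textbf{Soundness.} The plan here is a reduction to a parallel repetition theorem, after bounding the value of the single-round base game $G_{\beta, r}$. First, I would establish that if no assignment to $(X, \Eq)$ satisfies more than $(1-\eps)$ of the equations, then $\val(G_{\beta, r}) \leq 1 - \delta$ for some $\delta$ polynomial in $\eps$, $\beta$, and $2^{-r}$. The $\eps$ factor comes from the assumed soundness of $\Lin$; the $\beta$ factor captures that with probability $\beta$ the second prover receives only a single variable, turning the game into a projection-like consistency check that ties local answers to a global object; the $2^{-r}$ factor accounts for the $2^r$-fold alphabet blowup induced by the $r$ random advice vectors. Second, with such a base game value in hand, I would apply a parallel repetition theorem (ideally Rao's sharper version for projection games, since the variable-vs-equation game is naturally projection) to the $J$-fold product game, obtaining the claimed upper bound $2^{-\Omega(\eps^2 \beta 2^{-r} J)}$.

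\textbf{Main obstacle.} The main technical difficulty is obtaining the correct quantitative dependence on $\eps$, $\beta$ and $r$ simultaneously in the single-round analysis. The smoothness and advice mechanisms interact in a delicate way within a single round, and a naive analysis that treats them independently can easily lose factors in $\beta$ or $2^r$. Fortunately, exactly this calculation has been carried out in the prior works \cite{KMS,DKKMS1,MZ24} (which in turn reduce the soundness to the classical parallel repetition results of \cite{Raz,Rao}), and the most efficient plan is to invoke their result as a black box after verifying that the game $G^{\otimes J}_{\beta,r}$ defined here is identical to the one they analyze.
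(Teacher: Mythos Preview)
Your proposal is correct and aligns with the paper's approach: the paper's own proof simply cites \cite[Claim 3.1]{MZ24} for completeness and \cite[Claim 3.2]{MZ24} for soundness, and you arrive at the same endpoint of invoking these prior results as a black box, while additionally spelling out the (standard) honest-strategy completeness argument and sketching the structure of the soundness reduction.
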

\begin{proof}
The completeness case is \cite[Claim 3.1]{MZ24}, and the soundness is \cite[Claim 3.2]{MZ24}.
\end{proof}

\paragraph{Viewing the advice as a subspace.} 
As each variable appears in at most $O(1)$
equations, with probability $1-O(J^2/n)$ all variables in $e_1,\ldots,e_J$ are distinct.
In this case, note that the $r$ vectors of advice 
to the second prover, $v_1, \ldots, v_r \in \Ff_2^V$, 
are uniform, and the second prover may consider their span $Q_V$. As for the first prover, the vectors $u_1,\ldots,u_r\in \Ff_2^U$ are not uniformly distributed. Nevertheless,
as shown by the covering property from~\cite{KS,KMS} (and presented later), the distribution of $u_1,\ldots,u_r$ is 
close to uniform over $r$-tuples of vectors from $\Ff_2^U$.
Thus, the first prover can also take their span, call it 
$Q_U$, and think of it as a random $r$-dimensional subspace
of $\Ff_2^U$.

\subsection{The Composed PCP}\label{sec:pcp_construct}
We now describe the composition of the outer PCP from~\cref{sec:outer} and the inner PCP from~\cref{sec:inner_PCP}. We describe this PCP as a CSP, and towards this end we define its vertex sets, alphabets and constraints. Henceforth, we fix $k \in \mathbb{N}$ to be a constant. The composed PCP will have $(k+1)$ queries and the equivalent CSP will have arity $(k+1)$.

Fix an instance $(X, \Eq)$ of $\GapLin$, consider the outer game $G_{\beta,r}^{\otimes J}$ from~\cref{sec:outer}, and let $\U$ denote the set of questions asked to the first prover. Specifically, $\U$ consists of all $J$-tuples of equations $U = (e_1, \ldots, e_J) \in \Eq^{J}$ from the $\GapLin$ instance $(X, \Eq)$. For $e \in \Eq$ let $x_e \in \Ff_2^{X}$ denote the indicator vector on the three variables appearing in $e$.
It will be convenient to only keep the $U = (e_1, \ldots, e_J)$ that satisfy the following properties:
\begin{itemize}
    \item The equations $e_1,\ldots, e_J$ are distinct and do not share variables.
    \item For any $i \neq j$ and pair of variables $x \in e_i$ and $y \in e_j$, the variables $x$ and $y$ do not appear together in any equation in the instance $(X, \Eq)$. 
\end{itemize}

The fraction of $U = (e_1, \ldots, e_J)$ that do not satisfy the above is $O(J^2/n)$ which is negligible for us, and dropping
them from $\U$ will only reduce our completeness by $o(1)$. This will not affect our analysis, and henceforth we will assume that all $U = (e_1, \ldots, e_J)$ satisfy the above properties. 

For each $U$, we will then include a copy of the Grassmann consistency test inside $\Ff_2^U$. As a result, all vertices in the underlying graph of our $(k+1)$-CSP will correspond to subspaces of $\Ff_2^X$.

\subsubsection{The Vertices}
For each question $U = (e_1, \ldots, e_J)$, let $H_U = \spa(x_{e_1}, \ldots, x_{e_J})$, where we recall that $x_{e_i}$ is the vector with ones at coordinates corresponding to variables appearing in $e_i$. By the first property described above we have $\dim(H_U) = J$ and $\dim(\Ff_2^{U}) = 3J$. The vertices of $\Psi$ are: 
\begin{align*}
&\A = \{L \oplus H_U \; | \; U \in \U, L \subseteq \Ff_2^U, \dim(L) = 2\ell, L \cap H_U = \{0\} \},\\
&\B = \{R \; | \; \exists U \in \U, \text{ s.t. } R \subseteq \Ff_2^U, \dim(R) = 2(1-\delta)\ell  \}.
\end{align*}
In words, the vertices on the side $\mathcal{A}$ correspond to $2\ell$-dimensional subspaces of some $\Ff_2^{U}$ over all $U\in \mathcal{U}$. For technical reasons, we require them to intersect
$H_U$ trivially (which is the case for a typical $2\ell$-dimensional space) and add to them the space $H_U$.\footnote{This has the effect of collapsing $L$ and $L'$ such that $L\oplus H_U = L'\oplus H_U$ to a single vertex.}
The vertices on the side $\mathcal{B}$ are all $2(1-\delta)\ell$-dimensional subspaces of $\Ff_2^U$ over all $U\in\U$.

\subsubsection{The Alphabets} 
The alphabets $\Sigma_1, \Sigma_2$ have sizes $|\Sigma_1| = 2^{2\ell}$ and $|\Sigma_2| = 2^{2(1-\delta)\ell}$. For each vertex $L \oplus H_U \in \A$, let $\psi: H_U \xrightarrow[]{} \Ff_2$ denote the function that satisfies the side conditions given by the equations in $U$. Namely, if $e_i \in U$ is the equation $\langle x, x_{e_i} \rangle = b_i$ for $x \in \Ff_2^U$, then $\psi(x_{e_i}) = b_i$. We say a linear function 
$f\colon L\oplus H_U\to\Ff_2$ satisfies the side conditions of $U$ if $f|_{H_{U}}\equiv \psi$. 
In this language, for a vertex $L \oplus H_U$ we identify $\Sigma_1$ with
\[
\{f : L \oplus H_U \xrightarrow[]{} \Ff_2 \; | \; f \text{ is linear function satisfying the side conditions of $U$} \}.
\]
As $L \cap H_U = \{0\}$ and $\dim(L) = 2\ell$, it is easy to see that the above set indeed has size $2^{2\ell}$. Analogously, for each right vertex $R\in\B$ we identify $\Sigma_2$ with $\{f : R \xrightarrow[]{} \Ff_2 \; | \; f \text{ is linear}\}$.

\subsubsection{The Constraints} \label{sec: constraint graph}
To define the constraints, we first need the following relation on the vertices in $A$. Say that $(L \oplus H_U) \sim (L' \oplus H_{U'})$ if 
\[
L + H_U + H_{U'} = L' + H_U + H_{U'}.
\]
Recall that all subspaces above are in $\Ff_2^{X}$ hence the sums and equality above are well defined. In \cite{MZ24}, it is shown that this relation is in fact an equivalence relation, meaning it is reflexive, symmetric, and transitive. As such, the vertices of $A$ are partitioned into equivalence classes, which we denote $[L \oplus H_U]$ and call cliques. Specifically, $[L \oplus H_U]$ is the equivalence class, under $\sim$, containing the vertex $L \oplus H_U$.

\begin{lemma} \label{lm: transitive}
    The relation $\sim$ described above is an equivalence relation. 
\end{lemma}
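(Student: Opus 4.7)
My plan is to verify reflexivity and symmetry trivially, and then concentrate on transitivity, which contains all the content. Assuming the two hypotheses $L + H_U + H_{U'} = L' + H_U + H_{U'}$ and $L' + H_{U'} + H_{U''} = L'' + H_{U'} + H_{U''}$, I would add $H_{U''}$ to the first and $H_U$ to the second and then chain the equalities to obtain
\[
L + H_U + H_{U'} + H_{U''} = L'' + H_U + H_{U'} + H_{U''}.
\]

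The crux is to eliminate the spurious $H_{U'}$ summand. The plan is to intersect both sides with $\Ff_2^{U \cup U''}$. Since $L, L'', H_U, H_{U''}$ all lie in $\Ff_2^{U \cup U''}$, the modular law yields
\[
L + H_U + H_{U''} + (H_{U'} \cap \Ff_2^{U \cup U''}) = L'' + H_U + H_{U''} + (H_{U'} \cap \Ff_2^{U \cup U''}).
\]
The whole proof then reduces to the key sub-claim $H_{U'} \cap \Ff_2^{U \cup U''} \subseteq H_U + H_{U''}$: once this holds, the extra summand is absorbed on both sides and we are left with $L + H_U + H_{U''} = L'' + H_U + H_{U''}$, i.e.\ $(L\oplus H_U)\sim(L''\oplus H_{U''})$.

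To establish the sub-claim I would exploit the structural properties imposed on $\mathcal{U}$. Since the equations making up $U'$ are pairwise disjoint, $H_{U'}$ has the disjoint-support basis $\{x_{e'} : e' \in U'\}$, so any element of $H_{U'} \cap \Ff_2^{U \cup U''}$ is a sum of $x_{e'}$ with $e' \subseteq U \cup U''$. For each such $e'$ I would case-analyze how its three variables are distributed between $U$ and $U''$. Using the second property of $\mathcal{U}$ (forbidden co-occurrences of variables drawn from distinct equations in $U$, respectively $U''$) together with the global fact that two distinct equations in $\Eq$ share at most one variable, every case collapses to $e'$ equaling some equation in $U$ or some equation in $U''$. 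Hence $x_{e'} \in H_U + H_{U''}$, proving the sub-claim.

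The main obstacle is precisely this case analysis, where the two properties of $\mathcal{U}$ must be played against each other: one shows that having two variables of $e'$ inside a single $e_i\in U$ forces $e'=e_i$ via the shared-variable cap, whereas having two variables of $e'$ sitting in two distinct equations of $U$ is excluded outright by the no-co-occurrence condition. Once this combinatorial fact is in hand, the modular-law bookkeeping is routine, and transitivity follows.
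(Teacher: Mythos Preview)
Your argument is correct: the pigeonhole step (three variables of $e'$ in $U\cup U''$ force at least two into the same side, after which the two structural properties of $\mathcal{U}$ pin $e'$ to an equation of $U$ or $U''$) cleanly yields the sub-claim, and the modular-law bookkeeping is sound. The paper itself does not prove the lemma but simply defers to \cite[Lemma~4.1]{MZ24}, so there is no in-paper argument to compare against; you have effectively supplied the omitted proof.
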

\begin{proof}
 See~\cite[Lemma 4.1]{MZ24}.
\end{proof}

One feature of the cliques $[L \oplus H_U]$ is that given any two $L \oplus H_U$ and $L' \oplus H'_{U'}$ in the same clique, there is a unique way to go from a linear function over one vertex to a linear function over another vertex. In this sense, the assignments over vertices in the same clique actually contain the same ``information''. 

\begin{lemma} \label{lm: clique extension}
    Suppose $L \oplus H_U \sim L' \oplus H_{U'}$ and that $f: L \oplus H_U \xrightarrow[]{} \Ff_2$ is a linear function satisfying the side conditions. Then there is a unique linear function $f': L' \oplus H_{U'} \xrightarrow[]{} \Ff_2$ that satisfies the side conditions such that there exists a linear function $g:  L + H_U + H_{U'} \xrightarrow[]{} \Ff_2$ satisfying the side conditions (of both $U$ and $U'$) such that
    \[
    g|_{ L \oplus H_U} = f \quad \text{ and } \quad g|_{ L' \oplus H_{U'}} = f'.
    \]
    In words, $g$ is a linear extension of both $f$ and $f'$.
\end{lemma}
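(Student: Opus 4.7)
The plan is to produce $f'$ as the restriction to $L'\oplus H_{U'}$ of a joint linear extension $g\colon W\to\Ff_2$, where $W := L + H_U + H_{U'} = L' + H_U + H_{U'}$; the equality is exactly the definition of $L\oplus H_U\sim L'\oplus H_{U'}$. I specify $g$ on $W=(L\oplus H_U)+H_{U'}$ by the prescriptions $g|_{L\oplus H_U}=f$ and $g|_{H_{U'}}=\psi_{U'}$; provided these agree on $(L\oplus H_U)\cap H_{U'}$, they glue to a unique linear function on $W$, and then $f':=g|_{L'\oplus H_{U'}}$ satisfies the $U'$ side conditions since $H_{U'}\subseteq L'\oplus H_{U'}$. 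Uniqueness of $f'$ is then immediate, because any other $f'$ meeting the conclusion is witnessed by some extension $g$ agreeing with $f$ on $L\oplus H_U$ and with $\psi_{U'}$ on $H_{U'}$, and these jointly determine $g$ on their span $W$.

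The main technical content is the consistency needed for gluing, which I would split into (a) $(L\oplus H_U)\cap H_{U'}\subseteq H_U\cap H_{U'}$, and (b) $\psi_U\equiv \psi_{U'}$ on $H_U\cap H_{U'}$. For (a), take $v=\ell+h\in H_{U'}$ with $\ell\in L$, $h\in H_U$, and expand $v=\sum_j d_j\,x_{e'_j}$; the representation is unique because the equations of $U'$ have pairwise disjoint variables. Since $L\oplus H_U\subseteq\Ff_2^U$ and $H_{U'}\subseteq\Ff_2^{U'}$, the support of $v$ lies inside $U\cap U'$, forcing every $e'_j$ with $d_j\ne 0$ to have its three variables contained in $U$. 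I now invoke the second property enforced when constructing $\U$: no two variables drawn from distinct equations of $U$ appear together in any equation of $\Eq$. Applied to such an $e'_j$, this forces its three variables to sit inside a single equation of $U$; the ``distinct equations share at most one variable'' property of the $\GapLin$ instance then forces $e'_j$ to equal that equation in $\Eq$, so $x_{e'_j}=x_{e_i}\in H_U$ for every $j$ with $d_j\ne0$. Consequently $v\in H_U$, and $\ell=v-h\in L\cap H_U=\{0\}$, giving $v\in H_U\cap H_{U'}$.

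For (b), any $v\in H_U\cap H_{U'}$ has unique representations $v=\sum_i c_i x_{e_i}=\sum_j d_j x_{e'_j}$, whose nonzero-coefficient equations partition $\supp(v)$ into $3$-element sets on each side. For each $e_i$ with $c_i=1$, its three variables lie in $\supp(v)$ and hence in the $U'$-blocks $\supp(e'_j)$ (with $d_j=1$); if two of them lay in different blocks, these two variables would originate from distinct equations of $U'$ yet co-appear in $e_i\in\Eq$, contradicting the same ``no shared pair'' property applied now to $U'$. Thus all three variables of $e_i$ sit inside a single $\supp(e'_j)$, and the share-at-most-one-variable rule upgrades this to $e_i=e'_j$ as elements of $\Eq$, so $b_i=b'_j$. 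This sets up a bijection between the nonzero-coefficient equations on both sides, giving $\psi_U(v)=\sum_i c_i b_i=\sum_j d_j b'_j=\psi_{U'}(v)$. With (a) and (b) established, the extension $g$ exists (and is unique), and $f':=g|_{L'\oplus H_{U'}}$ is the required map.
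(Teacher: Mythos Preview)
Your proposal is correct and follows the same approach as the paper's proof: extend $f$ uniquely to $L+H_U+H_{U'}$ in a manner respecting the $U'$ side conditions, then restrict to $L'\oplus H_{U'}$. The paper's proof is a single sentence that asserts this extension exists uniquely, whereas you have carefully verified the gluing consistency via (a) and (b), correctly exploiting the two structural properties imposed on $\U$ and the ``share at most one variable'' property of the $\GapLin$ instance; this added detail is sound and indeed fills in what the paper leaves implicit.
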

\begin{proof}
    Note that there is only one way to extend $f$ to $L + H_U + H_{U'}$ in a manner that satisfies the side conditions given by $U'$. Let this function be $g$. We take $f'$ to be $g|_{L' \oplus H_{U'}}$.
\end{proof}

We now describe the constraints of our $(k+1)$-CSP. Denote by $T_1$ the assignment to $\mathcal{A}$ 
that assigns, to each vertex $L\oplus H_U$, a linear function $T_1[L\oplus H_U]$ on $L\oplus H_U$ satisfying the side conditions, and by $T_2$ the assignment that assigns to each vertex $R\in \mathcal{B}$ a linear function $T_2[R]$ on $R$. 
The constraints of our $(k+1)$-CSP and their weights are given by the following randomized procedure:

\begin{enumerate}
    \item Choose $U$ uniformly at random from $\U$.
    \item Choose $R \subseteq \Ff_2^U$ with $\dim(R) = 2(1-\delta)\ell$ uniformly at random such that $R \cap H_U = \{0\}$.
    \item For each $i \in [k]$, choose $L_i \oplus H_U$ uniformly, such that $\dim(L_i) = 2\ell$, $L_i \cap H_U = \{0\}$, and $L_i \supseteq R$.
    \item For each $i \in [k]$, choose $L'_i \oplus H_{U'_i} \in [L_i \oplus H_U]$ uniformly and extend $T_1[L'_i \oplus H_{U'_i}]$ to $L'_i + H_{U'_i} + H_U$ 
    in the unique manner that respects the side conditions (as in~\cref{lm: clique extension}). 
    Let $\Tilde{T}_1[L_i \oplus H_U]$ be the restriction of this extension to $L_i \oplus H_U$.
    \item The constraint is over the vertices $L'_{i} \oplus H_{U'_i}$ and $R$, and is satisfied if and only if $\Tilde{T}_1[L_i \oplus H_U]|_{R} = T_2[R]$ for all $i \in [k]$.
\end{enumerate}
This finishes the description of our instance $(k+1)$-CSP, $\Psi$. It is 
clear that the running time and instance size is $n^{O(J)}$, and also that the alphabet size is $O(2^{2\ell})$. 
 As is often the case the 
 completeness of the reduction is relatively straightforward, and the soundness is more complicated. In the next section we develop some tools for the soundness analysis.

\section{A $(k+1)$-query Grassmann Consistency Test} \label{sec: grassmann test}
The $(k+1)$-query Grassmann consistency test, which we discuss in this section, is the combinatorial core of the inner PCP. It is similar to the
Grassmann-based PCPs of \cite{KMS, KMS2, MZ24} and in particular can be viewed as a $(k+1)$-query generalization of the \cite{MZ24} Grassmann consistency test.

Fix $U$ as a question to the first prover in the outer PCP game, i.e., the set of variables appearing in $J$ equations of the initial $\GapLin$ instance. Suppose $T_1,T_2$ are tables such that for each $L \in \Grass(U, 2\ell)$ such that $L \cap H_U = \{0\}$, $T_1[L \oplus H_U]: L \oplus H_U \to \Ff_2$ is a linear function satisfying the side conditions $H_U$, and for each $R \in \Grass(U, 2(1-\delta)\ell)$ such that $R \cap H_U = \{0\}$, $T_2[R]: R \to \Ff_2$ is a linear function. The $(k+1)$-query Grassmann consistency test proceeds as follows:
\begin{itemize}
    \item Choose $R \in \Grass(\Ff_2^U, 2(1-\delta)\ell)$ such that $R \cap H_U = \{0\}$ uniformly.
    \item Choose $L_1, \ldots, L_k \in \Grass(\Ff_2^U, 2\ell)$ uniformly such that for each $i \in [k]$, $L_i \cap H_U = \{0\}$ and $R \subseteq L_i$.
    \item Accept if and only if $T_1[L_i\oplus H_U]|_R \equiv T_2[R]$ for all $i \in [k]$.
\end{itemize}

To analyze this test we need the following lemma, which, bounds the probability that a randomly chosen $R \in \Grass(n, 2(1-\delta)\ell)$ along with $k$ randomly chosen $L_1, \ldots, L_k \in  \Grass(n, 2\ell)$ containing $R$ all fall in the sets $\mathcal{R}$ and $\Lcal$ respectively. For our purposes, the pseudo-randomness parameter $\eps$ of
$\Lcal$ should be thought of as close to $\mu(\Lcal)$, in which case the bound we get is achieved by randomly chosen $\Lcal, \Rcal$ of given densities.

\begin{lemma} \label{lm: pseudorandom edges}
For any $\delta > 0$, there is sufficiently large $\ell$ such that the following holds. Let $\Lcal \subseteq \Gras(n, 2\ell)$ and $\Rcal \subseteq \Gras(n, 2(1-\delta)\ell)$ be sets with fractional sizes $\alpha$ and $\beta$ respectively, and suppose that $\Lcal$ is $(r,\eps)$ pseudo-random.
Then for all $t\geq 4$ that are powers of $2$,
\[ \Pr_{R, L_1, \ldots, L_k \supseteq R}[R \in \Rcal, L_1, \ldots, L_k \in \Lcal] \leq 2^{O_{t,r,k}(1)} \beta^{(t-1)/t} \eps^{(kt-2)/t} + 2^{-r\delta\ell+2},
\]
where in the above probability $R \in \Grass(n, 2(1-\delta)\ell))$ is chosen uniformly, and each $L_i \in \Grass(n, 2\ell)$ is chosen independently and uniformly conditioned on $L_i\supseteq R$.
\end{lemma}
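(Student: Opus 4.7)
The plan is to first apply H\"older's inequality to reduce the lemma to a high-moment bound on the down-averaged function $g(R) := \Pr_{L\supseteq R}[L\in\Lcal]$, and then prove that moment bound via global hypercontractivity on the bilinear scheme. By the conditional independence of $L_1,\ldots,L_k$ given $R$, the target probability equals $\E_R[\ind_{\Rcal}(R)\,g(R)^k]$, and applying H\"older's inequality with conjugate exponents $t/(t-1)$ and $t$ gives
\[
\Pr[R\in\Rcal,\,L_1,\ldots,L_k\in\Lcal] \;\le\; \beta^{(t-1)/t}\,\|g\|_{kt}^{k}.
\]
The task thus reduces to establishing $\|g\|_{kt}^{kt} \le 2^{O_{t,r,k}(1)}\,\eps^{kt-2}$, plus an additive error accounting for the $2^{-r\delta\ell+2}$ term in the statement.

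To bound the $(kt)$-th moment I would transfer to the bilinear scheme of \cref{sec: bilinear scheme}. Let $F\colon \Ff_2^{n\times 2\ell}\to \{0,1\}$ be the indicator that the columns of $M$ have full rank and span a subspace in $\Lcal$; via the column-span map, the $(r,\eps)$-pseudo-randomness of $\Lcal$ transfers to $(r,\eps)$-pseudo-randomness of $F$ in the sense of \cref{def: bilinear scheme pseudorandom}, up to a $2^{-\Omega(\ell)}$ correction from rank-deficient matrices. Splitting $M=(M_1,M_2)$ with $M_1\in\Ff_2^{n\times 2(1-\delta)\ell}$ and $M_2\in\Ff_2^{n\times 2\delta\ell}$, set $\tilde g(M_1):=\E_{M_2}[F(M_1,M_2)]$; then $\|g\|_{kt}^{kt}$ and $\|\tilde g\|_{kt}^{kt}$ agree up to the same negligible error, and a short Cauchy--Schwarz argument shows that $\tilde g$ inherits $(r,\eps)$-pseudo-randomness. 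Expanding
\[
\|\tilde g\|_{kt}^{kt} = \E_{M_1,\,M_2^{(1)},\ldots,M_2^{(kt)}}\Bigl[\prod_{i=1}^{kt} F(M_1,M_2^{(i)})\Bigr],
\]
I would perform iterated Cauchy--Schwarz in $\log_2 t$ rounds on the $M_2^{(i)}$ variables --- this pairing is why the lemma requires $t$ to be a power of $2$. Each round produces a function of the form $F\cdot F'$ over a zoom-in/zoom-out, whose $L^2$-norm is controlled by the global hypercontractivity theorem for the bilinear scheme from \cite{EvraKL} combined with the $(r,\eps)$-pseudo-randomness of $F$, contributing a factor of $\eps$ per round. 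The final two ``base'' Cauchy--Schwarz steps, where the accumulated restriction size would exceed $r$, are bounded trivially by $\|\tilde g\|_2^2\le \alpha\le 1$, which accounts for the ``$-2$'' in the exponent $\eps^{kt-2}$.

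The main obstacle I foresee is the careful bookkeeping in the hypercontractivity step: tracking how the size of the zoom-in/zoom-out restrictions grows across each Cauchy--Schwarz round, verifying that the $(r,\eps)$-pseudo-randomness hypothesis remains applicable, and extracting the precise exponent $\eps^{kt-2}$ rather than a weaker bound that a blunt application of hypercontractivity would give. Whenever an intermediate restriction would grow beyond the allowed zoom-size $r$, the configuration is forced inside a rigid subspace of codimension $\geq r$ in the $2\delta\ell$-direction, an event of probability $\le 2^{-r\delta\ell}$ by counting in the Grassmann graph; summing these ``degenerate'' contributions across the iteration rounds yields the additive error $2^{-r\delta\ell+2}$ in the lemma.
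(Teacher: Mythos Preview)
Your overall plan --- pass to the bilinear scheme, apply H\"older, and invoke global hypercontractivity from \cite{EvraKL} --- matches the paper. The genuine gap is the mechanism producing the additive term $2^{-r\delta\ell+2}$. In the paper, after writing the probability as $\langle(\T F)^k,G\rangle$, one performs a \emph{degree decomposition} $\T F=L+H$ with $L=\sum_{d\le r}\T F^{=d}$ and $H=\sum_{d>r}\T F^{=d}$, and the additive error is entirely the contribution of the high-degree piece $H$. The point is that the averaging operator $\T$ contracts level-$d$ Fourier weight by a factor roughly $2^{-2d\delta\ell}$ (\cref{lm: not increase norm}), so $\|H\|_2^2\le 2^{-r(2\delta\ell-1)}$; a Markov bound on the event $\{|H|>2^{-0.9r\delta\ell}\}$, combined with the trivial bound $|\T F|\le 1$ on its complement, isolates the $2^{-r\delta\ell}$ term \emph{before} any H\"older step. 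Your ``degenerate configurations where the restriction size exceeds $r$ during iterated Cauchy--Schwarz'' is not this mechanism, and I do not see how to make it rigorous: no subspace-counting in the $2\delta\ell$-direction produces this error. Moreover, because you apply H\"older first, any additive error you later extract from $\|\tilde g\|_{kt}^{kt}$ gets raised to the $1/t$ power and multiplied by $\beta^{(t-1)/t}$, so it cannot emerge as the standalone $2^{-r\delta\ell+2}$ in the statement.

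On the main term, the paper bounds $\|L\|_{kt}\le\sum_{d=0}^r\|\T F^{=d}\|_{kt}\le\sum_{d=0}^r\|F^{=d}\|_{kt}$ by triangle inequality and Jensen, and then applies the black-box bound $\|F^{=d}\|_{kt}\le 2^{O(d^2kt)}\eps^{(kt-2)/(kt)}$ of \cite{EvraKL} level by level; the exponent $(kt-2)/t$ and the power-of-$2$ restriction on $t$ are both inherited directly from that theorem. Your ``iterated Cauchy--Schwarz in $\log_2 t$ rounds on the $M_2^{(i)}$ variables'' is, at best, a rough sketch of how the \cite{EvraKL} result is \emph{proved}, not how it is used here, and the specific pairing you describe (producing functions $F\cdot F'$ on zoom-in/zoom-outs and peeling off one factor of $\eps$ per round) does not correspond to the known iterated Cauchy--Schwarz arguments in this setting, which operate on the coordinate and degree structure of $F$ rather than on independent copies of $M_2$.
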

\begin{proof}
    Deferred to \cref{sec: proof of pseudorandom edges}.
\end{proof}

Using \cref{lm: pseudorandom edges} we are able to prove the soundness of the $(k+1)$-query Grassmann consistency test 
as stated below

\begin{thm}\label{th: consistent with side}
For any $\delta > 0$, there is sufficiently large $\ell$ such that the following holds for $r^\star = \frac{10k}{\delta}$. Let $U$ be a question to the first prover, let $T_1, T_2$ be tables as above, and suppose that 
\[
\Pr_{\substack{R: \; R \cap H_U = \{0\}\\ L_1, \ldots, L_k: \;  L_i \cap H_U = \{0\},  L_i\supseteq R}}[T_1[L_i\oplus H_U]|_R = T_2[R], \; \forall i \in [k]] = \epsilon \geq 2^{-2(1-1000\delta)\ell k},
\]
where $R \in \Grass(\Ff_2^U, 2(1-\delta)\ell)$ and each $L_i \in \Grass(\Ff_2^U, 2\ell)$ in the probability above. Then there are parameters $r_1$ and $r_2$ such that $r_1 + r_2 \leq r^\star$, such that for at least $2^{-6\ell^2}$ fraction of the $r_1$-dimensional subspaces $Q \subseteq \Ff_2^{U}$, there exists $W \subseteq \Ff_2^U$ of codimension $r_2$ containing $Q \oplus H_U$, and a global linear function $g_{Q,W}: W \xrightarrow[]{} \Ff_2$ that respects the side conditions on $H_U$ such that
    \[
    \Pr_{L}[g_{Q,W}|_{L\oplus H_U} = T_1[L\oplus H_U] \;|\; Q \subseteq L \subseteq W] \geq \frac{2^{-2(1-1000\delta^2)\ell}}{5}.
    \]
\end{thm}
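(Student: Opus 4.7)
The plan is a dichotomy argument using~\cref{lm: pseudorandom edges}: either the set of ``locally consistent'' subspaces $L$ is pseudo-random in the Grassmann graph (contradicting the hypothesis $\epsilon \geq 2^{-2(1-1000\delta)\ell k}$), or it concentrates in a zoom $[Q, W]$, from which we then extract the required global linear function $g_{Q, W}$.

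\textbf{Setup and auxiliary set.} For each $R$ with $R \cap H_U = \{0\}$, set $\alpha_R = \Pr_{L \supseteq R,\, L \cap H_U = \{0\}}[T_1[L \oplus H_U]|_R = T_2[R]]$, so conditional independence of $L_1, \ldots, L_k$ over $R$ gives $\E_R[\alpha_R^k] = \epsilon$. A Markov argument produces $\Rcal^\star = \{R : \alpha_R^k \geq \epsilon/2\}$ with density $\beta \geq \epsilon/2$ and $\alpha_R \geq (\epsilon/2)^{1/k}$ for $R \in \Rcal^\star$. Define
\[
\Lcal = \bigl\{ L \in \Grass(\Ff_2^U, 2\ell) : L \cap H_U = \{0\},\ \exists R \in \Rcal^\star,\ R \subseteq L,\ T_1[L \oplus H_U]|_R = T_2[R] \bigr\}.
\]
Writing $\Lcal_R^{\mathrm{good}} = \{L \supseteq R : T_1[L \oplus H_U]|_R = T_2[R]\}$, we have $\Lcal_R^{\mathrm{good}} \subseteq \Lcal$ for every $R \in \Rcal^\star$, and $\E_R[\mathbf{1}[R \in \Rcal^\star] \alpha_R^k] \geq \epsilon - \epsilon/2 = \epsilon/2$ by definition of $\Rcal^\star$. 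Therefore
\[
\Pr_{R, L_1, \ldots, L_k \supseteq R}[R \in \Rcal^\star,\, L_1, \ldots, L_k \in \Lcal] \geq \epsilon/2.
\]

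\textbf{Contrapositive of~\cref{lm: pseudorandom edges}.} Set $\epsilon^\star = \frac{2^{-2(1-1000\delta^2)\ell}}{5}$ and $r^\star = 10k/\delta$. Suppose toward contradiction that $\Lcal$ were $(r^\star, \epsilon^\star)$-pseudo-random. Applying~\cref{lm: pseudorandom edges} with a sufficiently large power-of-$2$ parameter $t$ upper bounds the above probability by $2^{O_{t, r^\star, k}(1)} \beta^{(t-1)/t} (\epsilon^\star)^{(kt-2)/t} + 2^{-r^\star \delta \ell + 2}$. The second term is $\ll \epsilon/2$ by our choice of $r^\star$. For the first, a direct computation yields $(\epsilon^\star)^k / \epsilon \leq 2^{-2000\delta(1-\delta)\ell k}/5^k$ since $1 - 1000\delta^2 > 1 - 1000\delta$, which for small $\delta$ and large $\ell$ easily dominates any $\beta \leq 1$ and the $2^{O(1)}$ prefactor (as well as the mild $o(1)$ losses from $t$ not being infinite). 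This contradicts the previous lower bound, so $\Lcal$ fails to be $(r^\star, \epsilon^\star)$-pseudo-random: there exist $Q \subseteq W \subseteq \Ff_2^U$ with $\dim(Q) + \codim(W) \leq r^\star$ and $\mu_{Q, W}(\Lcal) > \epsilon^\star$. A standard adjustment lets us assume $Q \oplus H_U \subseteq W$, absorbing at most $\dim(H_U) = J$ further codimensions into $r_2$.

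\textbf{Extracting $g_{Q, W}$ and covering many $Q$.} Inside $[Q, W]$, each $L \in \Lcal$ has some witness $R \in \Rcal^\star$ with $R \subseteq L$ and $T_1[L \oplus H_U]|_R = T_2[R]$. A pigeonhole over candidate pivots $R_0 \subseteq W$ in $\Rcal^\star$ selects a common $R_0$ used by a positive fraction of such $L$. Conditioning on $L \supseteq R_0$ pins $T_1[L \oplus H_U]|_{R_0}$ to the fixed value $T_2[R_0]$, so $T_1[L \oplus H_U]$ lies in a coset of size $2^{2\ell - \dim(R_0)} = 2^{2\delta \ell}$. A further pigeonhole over this coset, combined with the consistency $T_1[L \oplus H_U]|_R = T_2[R]$ on the other $R$'s covered by $L$, isolates a single linear $g_{Q, W} : W \to \Ff_2$ respecting the side conditions on $H_U$ such that $T_1[L \oplus H_U] = g_{Q, W}|_{L \oplus H_U}$ for at least an $\epsilon^\star$ fraction of $L \in [Q, W]$. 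Finally, to get $2^{-6\ell^2}$ fraction of $r_1$-dimensional $Q$'s we either iterate the argument (peeling off each dense zoom as we find it) or note that every dense zoom $[Q, W]$ contains exponentially many refined sub-zooms at dimension $r_1$, each inheriting enough density.

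\textbf{Main obstacle.} The crux is the linear-function extraction in the last step. A naive averaging over candidate $g : W \to \Ff_2$ only yields agreement on $\sim 2^{-2\ell}$ fraction of $L$'s, whereas the theorem demands $\sim 2^{-2(1-1000\delta^2)\ell}/5$, a factor of $\sim 2^{2000\delta^2 \ell}$ above naive. This gain must come from the local consistency structure of $T_2$ on the many $R \in \Rcal^\star$; quantifying the pigeonhole argument above so that the correct dependence on $\delta$ falls out is the main technical step.
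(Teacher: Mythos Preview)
Your proposal has a genuine gap in the extraction step, and the paper's route avoids it entirely via a different idea.

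The paper first proves the side-condition-free version (\cref{th: consistency}) using a trick of~\cite{BKS}: pick a uniformly random linear $f:\Ff_2^U\to\Ff_2$ and set $S_{L,f}=\{L:T_1[L]=f|_L\}$, $S_{R,f}=\{R:T_2[R]=f|_R\}$. A first-moment computation shows that for some fixed $f$ the number of consistent $(k{+}1)$-tuples $(R,L_1,\ldots,L_k)$ with $R\in S_{R,f}$ and each $L_i\in S_{L,f}$ is at least roughly $\eps\cdot 2^{-2(1-\delta)\ell-2k\delta\ell}$ times the total. \cref{lm: pseudorandom edges} is then applied to $S_{L,f}$, not to your $\Lcal$. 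The crucial point is that $S_{L,f}$ is \emph{by construction} the agreement set of $T_1$ with a single global linear function, so once non-pseudo-randomness yields a zoom $[Q,W]$ with $\mu_{Q,W}(S_{L,f})>\eps'$, the function $g_{Q,W}=f|_W$ is already in hand---no extraction needed. \cref{th: consistent with side} then follows by plugging \cref{th: consistency} into the proofs of \cite[Theorems~5.2--5.3]{MZ24}, which handle the side conditions on $H_U$ and produce the $2^{-6\ell^2}$ fraction of good $Q$'s.

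Your route applies \cref{lm: pseudorandom edges} to $\Lcal=\{L:\exists R\in\Rcal^\star,\ R\subseteq L,\ T_1[L\oplus H_U]|_R=T_2[R]\}$. Density of $\Lcal$ in a zoom $[Q,W]$ only says each such $L$ agrees with \emph{some} $T_2[R]$, a different $R$ for each $L$, and there is no reason these local pieces glue to one linear $g$ on $W$. Your proposed pigeonhole on a common pivot $R_0$ fails quantitatively: the number of $2(1-\delta)\ell$-dimensional subspaces $R_0\subseteq W$ is of order $2^{\Theta(\ell J)}$ (recall $\dim(\Ff_2^U)=3J$), so the pigeonhole leaves essentially nothing; and even after fixing $R_0$, different $L\supseteq R_0$ extend $T_2[R_0]$ in incompatible directions, so a further pigeonhole over extensions to $W$ costs another $2^{-(\dim W-2(1-\delta)\ell)}$. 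You correctly flag this as the ``main obstacle'' but do not resolve it; the random-$f$ trick is precisely the missing idea.

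There is a second, independent error: your ``standard adjustment'' to force $Q\oplus H_U\subseteq W$ by absorbing $\dim(H_U)=J$ extra codimensions into $r_2$ is impossible here, since $J=2^{100\ell^2}$ while $r_2\le r^\star=10k/\delta$ is a constant. Getting $H_U\subseteq W$ and the side conditions to hold is a genuine additional step (handled in \cite{MZ24} by working relative to $H_U$), not a bookkeeping adjustment.
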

In the remainder of the section, we first show how to obtain \cref{th: consistent with side} using \cref{lm: pseudorandom edges} and then prove \cref{lm: pseudorandom edges}.

\subsection{Proof of~\cref{th: consistent with side}}
We start by proving a simpler version of~\cref{th: consistent with side} without the side conditions $H_U$, 
and then explain how to deduce~\cref{th: consistent with side} from it.
Let $T_1$ be a table that assigns to each $L \in \Grass(\Ff_2^U, 2\ell)$ a linear function over $L$, and let $T_2$ be a table that assigns to each $R \in \Grass(\Ff_2^U, 2(1-\delta)\ell)$ a linear function over $R$. The test  chooses an entry $R$ of $T_2$, and then $k$ entries of $T_1$, $L_1, \ldots, L_k$ such that each $L_i \supseteq R$, and accepts if and only if $T_1[L_i]|_R \equiv T_2[R]$ for all $i \in [k]$. 
%After analyzing this test, it is straightforward to deduce a similar decoding result for the version with side conditions considered in \cref{th: consistent with side}. 

% Our analysis of the test of this subsection shows that if the pass probability at least $2^{-2(1-1000\delta)\cdot k}$, then there must be a zoom-in, zoom-out pair $(Q, W)$ of relatively small dimension such that the table $T_1$ has non-trivial agreement with a linear function inside of $\Zoom[Q, W]$. After we show \cref{th: consistency}, we use straightforward reductions from \cite{MZ24} to show a slightly stronger version wherein the decoding result holds for many zoom-ins $Q$, and also takes into account the side conditions. This latter result is what is ultimately used in our PCP's soundness analysis. 

\begin{thm} \label{th: consistency}
    Suppose that tables $T_1$ and $T_2$ satisfy
    \[
   \Pr_{R; L_1, \ldots, L_k \supseteq R}[T_1[L_i]|_R \equiv T_2[R] \; \forall i \in [k]] = \epsilon \geq 2^{-2(1-1000\delta)\ell k} .
    \]
     Then there exist subspaces $Q \subset W$ and a linear function $f: W \xrightarrow[]{} \Ff_2$ such that:
    \begin{enumerate}
    \item $\codim(Q) + \dim(W) = \frac{10k}{\delta} := r^\star$.
    \item $f|_L \equiv T_1[L]$  for $\Omega(\epsilon')$-fraction of $2\ell$-dimensional $L \in {\Zoom}[Q,W]$, where $\epsilon' := 2^{-2(1-1000\delta^2)\ell}$.
\end{enumerate}
\end{thm}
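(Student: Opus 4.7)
The plan is to prove \cref{th: consistency} by contradiction, using \cref{lm: pseudorandom edges} as the main tool. Assume the conclusion fails: for every $Q \subseteq W$ with $\dim(Q) + \codim(W) \leq r^\star$ and every global linear function $g \colon \Ff_2^n \to \Ff_2$, the set $\Lcal_g := \{L \in \Grass(n, 2\ell) : T_1[L] = g|_L\}$ satisfies $\mu_{Q, W}(\Lcal_g) < \eps'/5$. Equivalently, each $\Lcal_g$ is $(r^\star, \eps'/5)$-pseudo-random. The aim is to derive a contradiction with $\eps \geq 2^{-2(1-1000\delta)\ell k}$.

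The first step is to decompose the test acceptance probability over global linear extensions of $T_1[L]$. Conditioning on $R$ the subspaces $L_1, \ldots, L_k$ are independent, so $\eps = \E_R[p_R^k]$ with $p_R := \Pr_{L \supseteq R}[T_1[L]|_R = T_2[R]]$. The key algebraic identity
\[
p_R \cdot 2^{n-2\ell} \;=\; \sum_{g \,:\, g|_R = T_2[R]} \Pr_{L \supseteq R}[T_1[L] = g|_L]
\]
holds because each $L$ with $T_1[L]|_R = T_2[R]$ admits exactly $2^{n-2\ell}$ global extensions of its linear function $T_1[L]$, and every such extension already restricts to $T_2[R]$ on $R$. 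Applying the power-mean inequality to the resulting sum of $N := 2^{n-2(1-\delta)\ell}$ terms and averaging over $R$ yields
\[
\eps \;\leq\; 2^{-n + 2\ell(1 + (k-1)\delta)} \sum_{g \colon \Ff_2^n \to \Ff_2} \Pr_{R,\, L_1, \ldots, L_k \supseteq R}\bigl[R \in \Rcal_g,\, L_i \in \Lcal_g \ \forall i \in [k]\bigr],
\]
where $\Rcal_g := \{R : T_2[R] = g|_R\}$ has density $\beta_g$ satisfying $\sum_g \beta_g = 2^{n - 2(1-\delta)\ell}$.

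Next, I would apply \cref{lm: pseudorandom edges} term-by-term using the pseudo-randomness of each $\Lcal_g$, bounding each probability by $2^{O_{t, r^\star, k}(1)} \beta_g^{(t-1)/t} (\eps'/5)^{(kt-2)/t} + 2^{-r^\star \delta \ell + 2}$. Summing over $g$ and using H\"older's inequality together with $\sum_g \beta_g = 2^{n - 2(1-\delta)\ell}$ yields $\sum_g \beta_g^{(t-1)/t} \leq 2^{n - 2(1-\delta)\ell(t-1)/t}$, so the factor $2^n$ cancels exactly against the $2^{-n}$ prefactor from the decomposition, leaving a bound depending only on $\delta, k, \ell, t$.

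The final step is parameter balancing. With $r^\star = 10k/\delta$, the additive error, summed over the $2^n$ global $g$'s and multiplied by the prefactor, contributes at most $2^{O(\ell) - 10k\ell}$, which is negligible compared to $\eps$. Choosing $t$ to be a power of $2$ just above $1/(1000 k \delta)$, the exponent of the main term becomes $2\ell[1 + (k-1)\delta - (1-\delta)(t-1)/t - (1-1000\delta^2)(kt-2)/t]$, which drops strictly below $-2(1 - 1000\delta)\ell k$, contradicting the hypothesis. I expect the main obstacle to be the precise tracking of these exponents: the gap between $\eps' = 2^{-2(1-1000\delta^2)\ell}$ and the target $2^{-2(1-1000\delta)\ell k}$ is narrow, and the $\delta$-vs-$\delta^2$ slack must be spent exactly to cover both the power-mean cost (roughly $1/t$ in the exponent) and the $\delta$-loss incurred by the H\"older bound on $\sum_g \beta_g^{(t-1)/t}$.
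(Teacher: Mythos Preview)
Your approach is correct and genuinely different from the paper's. The paper follows a probabilistic ``BKS'' argument: it picks a uniformly random global linear $f$, shows by linearity of expectation that some fixed $f$ has
\[
|E_k(S_{L,f},S_{R,f})| \;\gtrsim\; \eps\,\mu(S_{R,f})\cdot M \cdot 2^{-2k\delta\ell},
\]
and then applies \cref{lm: pseudorandom edges} \emph{once} to that particular $f$ to conclude $S_{L,f}$ is not $(r^\star,\eps')$-pseudo-random. Your route instead expands $p_R$ as an average over the $2^{n-2(1-\delta)\ell}$ global extensions of $T_2[R]$, uses the power-mean inequality to push the $k$th power inside, and then applies \cref{lm: pseudorandom edges} to \emph{every} $g$ under the contradiction hypothesis, recombining via H\"older on $\sum_g \beta_g^{(t-1)/t}$. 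The paper's version is shorter and needs only one invocation of the key lemma; yours is more mechanical and avoids the ``subtract off $\mu(S_{R,f})$'' trick, at the cost of handling a sum over $2^n$ functions (which, as you observe, the $2^{-n}$ prefactor cancels exactly).

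One arithmetic slip to fix: your suggested $t\approx 1/(1000k\delta)$ is too small. Carrying out the exponent comparison you wrote down, the main-term exponent divided by $2\ell$ equals
\[
-k(1-\delta-1000\delta^2)+\frac{3-\delta-2000\delta^2}{t},
\]
and for this to drop below $-k(1-1000\delta)$ you need $t>\dfrac{3-\delta-2000\delta^2}{k\delta(999-1000\delta)}\approx \dfrac{1}{333k\delta}$, not $\dfrac{1}{1000k\delta}$. Any power of $2$ above, say, $\dfrac{1}{100k\delta}$ works; the rest of the argument then goes through as you outlined. Also, in bounding the additive error as $2^{O(\ell)-10k\ell}$, note the hidden constant in $O(\ell)$ depends on $k$ (it is $2(1+(k-1)\delta)\ell$), so you should check explicitly that $2(1+(k-1)\delta)<10k$, which of course holds.
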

\begin{proof}
We use an argument of~\cite{BKS}. Choose a linear function $f: \Ff_2^n \xrightarrow[]{} \Ff_2$ uniformly at random and define the (random) sets of vertices
\[
S_{L,f} = \{L \in \Grass(n, 2\ell) \; | \; f|_L \equiv T_1[L] \} \quad \text{and} \quad S_{R,f} = \{R \in \Grass(n, 2(1-\delta)\ell)\; | \; f|_R = T_2[R] \}.
\]

Denote by $E_k(S_{L,f}, S_{R,f})$ to be the set of tuples $(R, L_1, \ldots, L_k)$ such that $R \subseteq L_i$, $L_i \in S_{L, f}$ for each $i$ and $R \in S_{R, f}$. Let $M$ be the total number of tuples $(R, L_1, \ldots, L_k)$ where $R \subseteq L_i$ for each $i \in [k]$ and $L_i\cap L_j = R$ for each $i\neq j$. Then
\[
\E_f \left[\left|E_k(S_{L,f}, S_{R,f})\right| \right] \geq \eps M\cdot 2^{-2(1-1000\delta)\ell} 2^{-2k\delta \ell},
\qquad \E_f[\mu(S_{R,f})] = 2^{-2\ell(1-1000\delta)},
\]
so by linearity of expectation
\[
\E_f \left[\left|E_k(S_{L,f}, S_{R,f})\right| -  \eps \mu(S_{R,f}) M2^{-2k\delta \ell} /2  \right] \geq \frac{\eps \cdot 2^{-2(1-1000\delta)\ell}M 2^{-2k\delta \ell}}{2} .  
\]
It follows that there exists $f$ achieving this expectation and we fix this $f$ henceforth, so that
\begin{equation} \label{eq: grassmann soundness 1} 
\left|E_k(S_{L,f}, S_{R,f})\right| \geq \frac{\eps \mu(S_{R,f})M2^{-2k\delta \ell}}{2} + \frac{\eps \cdot 2^{-2(1-\delta)\ell}M2^{-2k\delta \ell}}{2}.
\end{equation}

Now we let $S_{L,f}$ and $S_{R,f}$ be defined relative to the $f$ we fixed and we claim that $S_{L, f}$ is not $(r^\star, \eps')$ pseudo-random for $\eps' = 2^{-2\ell(1-1000\delta^2)}$. Let $\beta := \mu(S_{R,f})$. First note that $\beta$ cannot be too small. Indeed,
\[
\beta = \mu(S_{R,f}) \geq \frac{|E_k(S_{L,f}, S_{R,f})|}{M},
\]
so $\beta \geq \eps \cdot 2^{-2\ell-2k\delta\ell} \geq 2^{-2(k+1)\ell}$. Now, if $S_{L, f}$ were $(r^\star,\eps')$-pseudo-random, then~\cref{lm: pseudorandom edges} would imply 
\begin{equation} \label{eq: grassmann soundness 2}   
\begin{split}
\frac{\left|E_k(S_{L,f}, S_{R,f})\right|}{M} &
\leq 2\Pr_{R, L_1, \ldots, L_k \supseteq R}[R \in \Rcal, L_1, \ldots, L_k \in \Lcal] \\
&\leq 2^{O_{r^\star,t,k}(1)}\beta^{(t-1)/t} \eps'^{(kt-2)/t} + 2^{-r^\star\delta \ell+2} \\
&\leq  2^{O_{r^\star,t,k}(1)}\beta^{(t-1)/t} \eps'^{(kt-2)/t}.
\end{split}
\end{equation}
In the first transition we used the fact that for randomly chosen $R, L_1,\ldots,L_k$, $L_i\cap L_j = R$ for all $i\neq j$ with probability at least $1/2$, and in the last transition we use the fact that the second term on the right hand side of the second line is at most $2^{-10\ell k}$ 
and is negligible compared to the first term there. Now, combining \eqref{eq: grassmann soundness 1} and \eqref{eq: grassmann soundness 2}, we have,
\[
\eps \beta \leq 2^{1+2k\delta\ell} \cdot \frac{\left|E_k(S_{L,f}, S_{R,f})\right|}{M}  \leq 2^{O_{r^\star,t,k}(1)}2^{2k\delta\ell}\beta^{(t-1)/t} \eps'^{(kt-2)/t},
\]
or equivalently $1 \leq  2^{O_{r^\star,t,k}(1)} 2^{2k\delta\ell}\beta^{-1/t} \frac{\eps'^{(kt-2)/t}}{\eps}$.
Using $\beta\geq 2^{-2(k+1)\ell}$, $\eps \geq 2^{-2(1-1000\delta)\ell k}$, and $\eps' = 2^{-2(1-1000\delta^2)\ell}$, the 
last inequality implies
\begin{align*}
1 &\leq  2^{O_{r^\star,t,k}(1)} 2^{2k\delta\ell}\beta^{-1/t} \frac{\eps'^{(kt-2)/t}}{\eps}\\
&=  2^{O_{r^\star,t,k}(1)}2^{2k\delta\ell}\cdot \beta^{-1/t} \cdot \eps'^{-2/t} \cdot \frac{\eps'^k}{\eps} \\
&\leq 2^{O_{r^{*},t,k}(1)}2^{2k\delta\ell}
\left(2^{-2(k+1)\ell}\eps'^2\right)^{-\frac{1}{t}}2^{-2000(\delta-\delta^2)\ell k}
 \\
 &\leq 
2^{O_{r^{*},t,k}(1)}2^{2k\delta\ell}
2^{\frac{4(k+1)}{t}\ell}
2^{-2000(\delta-\delta^2)\ell k},
\end{align*}
so for 
$t = (k+1)/(\delta-\delta^2)$ we get
$1\leq 2^{O_{r^{*},t,k}(1)}2^{2k\delta\ell}2^{4\cdot (\delta-\delta^2)\ell -2000(\delta-\delta^2)\ell k}$.
This is a contradiction for sufficiently large $\ell$ relative to $r^\star$ and $k$, and for this setting of $\ell$ it follows that $S_{L, f}$ is not $(r^\star, \eps')$ pseudo-random.
\end{proof}
% Passing from \cref{th: consistency} above to the version with side conditions in \cref{th: consistent with side} is straightforward and done in \cite{MZ24}.

% From \cref{th: consistency} we can deduce the version of the decoding theorem that we ultimately need using results from \cite{MZ24}.

\begin{proof} [Proof of \cref{th: consistent with side}]
    The proof is the same as that of \cite[Theorem 5.3]{MZ24} except one refers to \cref{th: consistency} whenever Theorem 5.1 from \cite{MZ24} is used. One must also show a version of \cite[Theorem 5.2]{MZ24}, 
    %which starts with the assumption that $T_1$ and $T_2$ pass the $(k+1)$-query Grassmann consistency test with probability at least $\eps$, 
    and this too can be done by following the proof of \cite[Theorem 5.2]{MZ24} and replacing their Theorem 5.1 with \cref{th: consistency} above.
\end{proof}
\subsection{Proving \cref{lm: pseudorandom edges} by Passing to the Bilinear Scheme} \label{sec: proof of pseudorandom edges}
In this section we prove~\cref{lm: pseudorandom edges}. To do so, we will  express the probability of interest there in terms of an inner product of functions over the bilinear scheme, and then bound it using Fourier analysis. At a high level, we do the following steps:
\begin{enumerate}
    \item Define Boolean functions $F: \Ff_2^{n \times 2\ell} \to \{0,1\}$ and $G: \Ff_2^{n \times 2(1-\delta)\ell} \to \{0,1\}$ which act as indicator functions for $\Lcal$ and $\Rcal$ respectively.
    \item Write the probability from \cref{lm: pseudorandom edges} as an expression involving $F$ and $G$.
    \item Show that $\Lcal$ being pseudo-random implies that $F$ is pseudo-random.
    \item Bound the expression from step 2 using  global hypercontractivity.
\end{enumerate}

Fix $\Lcal$ and $\Rcal$ as in \cref{lm: pseudorandom edges} and towards the first step define the functions $F: \Ff_2^{n \times 2\ell} \to \{0,1\}$ and $G: \Ff_2^{n \times 2(1-\delta)\ell} \to \{0,1\}$ as:

  \begin{equation*}
    F(M')=
    \begin{cases}
      1  & \text{if } \im(M') \in \Lcal, \\
      0 & \text{otherwise},
    \end{cases}
    \qquad\qquad 
    G(M)=
    \begin{cases}
     1 & \text{if } \im(M) \in \Rcal, \\
      0 & \text{otherwise}.
    \end{cases}
  \end{equation*}
  In the above, $\im(\cdot)$ is the usual definition of matrix image and refers to the span of the columns of the matrix. Implicit in the definitions is the fact that $F$ evaluates to $0$ if the columns of $M'$ are not linearly independent, as in this case their span is not even a dimension $2\ell$ subspace, and likewise $G$ evaluates to $0$ if the columns of $M$ are not linearly independent. We also note that $F$ and $G$ both satisfy a property called \emph{basis invariance}, which is needed to apply results from \cite{MZ24}. A function $H \in L_2(\Ff_2^{n \times 2 \ell})$ is called basis invariant if $H(M) = H(MA)$ for any full rank $A \in \Ff_2^{2\ell \times 2\ell}$. 
  %It is clear that both $F$ and $G$ are basis invariant.

To write down the expression for the probability from \cref{lm: pseudorandom edges}, we use an adjacency operator $\T: L_2\left(\Ff_2^{n \times 2\ell}\right) \xrightarrow{} L_2\left(\Ff_2^{n \times 2(1-\delta)\ell} \right)$, which corresponds to a bipartite graph between $\Ff_2^{n \times 2\ell}$ to $\Ff_2^{n \times 2(1-\delta)\ell}$. This graph morally corresponds to the bipartite graph between $2\ell$-dimensional subspaces and $2(1-\delta)\ell$-dimensional subspaces whose edges are $(L,R)$ satisfying $L \supseteq R$. Formally, for any $H \in L_2(\Ff_2^{n \times 2\ell})$, the function 
$\T H \colon \Ff_2^{n \times 2(1-\delta)\ell}\to\mathbb{C}$ is given by
\[
\T F(M) = \E_{v_1,\ldots, v_{2\delta \ell}}[F\left([M, v_1,\ldots, v_{2\delta \ell}]\right)].
\]
Here, $[M, v_1,\ldots, v_{2\delta \ell}]$ refers to the matrix obtained by adding the columns $v_i$ to $M$ on the right, for all $i \in [2\delta\ell]$. In words, the operator $\T$ averages over extensions of the matrix
$M$ to an $n\times 2\ell$ matrix by adding to it $2\delta\ell$ random columns. To see how the induced edges of $\T$ relate to subspaces $R \subseteq L$, observe that for $M \in \Ff_2^{n \times 2(1-\delta)\ell}$ and $M' = [M, v_1,\ldots, v_{2\delta \ell}]$, we indeed have $\im(M) \subseteq \im(M')$. Using the adjacency operator $\T$, we can now relate the probability from \cref{lm: pseudorandom edges} to an inner product of functions over the bilinear scheme.

\begin{lemma} \label{lm: relate edges to inner product}
    For $\Lcal, \Rcal, F, G$ as above we have $
    \Pr_{R; L_1, \ldots, L_k \supseteq R}[R \in \Rcal, L_1, \ldots, L_k \in \Lcal] \leq 2 \inner{(\T F)^k}{G}$.
\end{lemma}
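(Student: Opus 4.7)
The plan is to unfold the inner product $\inner{(\T F)^k}{G} = \E_M[G(M)(\T F(M))^k]$ over uniform $M \in \Ff_2^{n \times 2(1-\delta)\ell}$ and identify the result with the subspace probability from the lemma, up to a multiplicative constant that tends to $1$ as $n/\ell \to \infty$. The first observation is that $G(M) = 1$ forces $M$ to have full column rank (since $\Rcal \subseteq \Grass(n, 2(1-\delta)\ell)$ requires $\dim \im(M) = 2(1-\delta)\ell$). Letting $p$ denote the probability that a uniform $n\times 2(1-\delta)\ell$ matrix has full rank, conditional on this event $\im(M)$ is uniformly distributed over $\Grass(n, 2(1-\delta)\ell)$, and $p = 1 - 2^{-\Omega(n - \ell)}$.

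Next I would analyze $\T F(M)$ for full-rank $M$, say with $\im(M) = R$. By definition $\T F(M) = \E_V[F([M,V])]$, and $F([M,V]) = 1$ requires $[M,V]$ to be full rank with $\im([M,V]) \in \Lcal$. A direct symmetry count (the number of $V$ producing a given $L \supseteq R$ of dimension $2\ell$ is the number of ordered bases of $L/R$ times $|R|^{2\delta\ell}$, which is independent of the choice of $L$) shows that conditional on $[M,V]$ being full rank --- an event of some probability $q' = 1 - 2^{-\Omega(n - \ell)}$ depending only on $\dim R$ --- the image $\im([M,V])$ is uniform on $\{L \in \Grass(n,2\ell) : L \supseteq R\}$. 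Hence
\[
\T F(M) \;=\; q' \cdot \Pr_{L \in \Grass(n,2\ell),\, L\supseteq R}[L \in \Lcal],
\]
which is exactly $q'$ times the conditional expression appearing inside the subspace probability.

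Combining the two ingredients and using conditional independence of $L_1,\ldots,L_k$ given $R$, I obtain the clean identity
\[
\inner{(\T F)^k}{G} \;=\; p\,(q')^k\cdot \Pr_{R;\, L_1,\ldots,L_k \supseteq R}\!\left[R \in \Rcal,\; L_i \in \Lcal\ \forall i\right].
\]
The lemma then reduces to checking $p\,(q')^k \geq 1/2$, which in the ambient regime $1 \ll \ell \ll n$ is a routine verification from the explicit formulas $p = \prod_{i=0}^{2(1-\delta)\ell-1}(1 - 2^{i-n})$ and $q' = \prod_{i=0}^{2\delta\ell-1}(1 - 2^{i - (n - 2(1-\delta)\ell)})$. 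There is no substantive obstacle; the only slightly delicate piece of bookkeeping is the symmetry count that justifies uniformity of $\im([M,V])$ over subspaces containing $R$.
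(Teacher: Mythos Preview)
Your proposal is correct and follows essentially the same approach as the paper: both arguments condition on the matrices having full column rank, observe that conditionally the images are uniformly distributed over the relevant Grassmannians, and then verify that the full-rank event has probability at least $1/2$. Your version is slightly cleaner in that you derive an exact identity $\inner{(\T F)^k}{G} = p\,(q')^k\cdot p_k$ (using that $F$ and $G$ vanish on non-full-rank matrices), whereas the paper packages the same computation as a single event ${\sf A}$ that all of $M, M'_1,\ldots,M'_k$ are full rank and union-bounds $\Pr[\bar{{\sf A}}]$.
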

\begin{proof}
For simplicity, we let $p_{k}$ denote the probability on the left hand side in the lemma statement. We expand the inner product on the right hand side:
   \begin{align*}
   \inner{(\T F)^k}{G} = \E_{M}\left[G(M) \left(\E_{M'}[F(M')]\right)^k\right] = \E_{M, M'_1, \ldots, M'_k}[G(M) \cdot F(M'_1) \cdots F(M'_k)],
   \end{align*} 
   where in the above expectations $M$ is uniformly random in $\Ff_2^{n \times 2(1-\delta)\ell}$ and $M'$ as well as each $M'_i$ are uniformly random in $\Ff_2^{n \times 2\ell}$ conditioned on the first $2(1-\delta)\ell$ columns equaling $M$. Now let ${\sf A}$ be the event each one of $M, M'_1, \ldots, M'_k$ in the last expectation has linearly independent columns. Note that conditioned on ${\sf A}$, $\im(M)$ is a uniformly random subspace in $\Grass(n, 2(1-\delta)\ell)$, and each $\im(M'_i)$ is a uniformly random subspace in $\Grass(n, 2\ell)$ containing $\im(M)$. Unpacking the definitions of $G$ and $F$, we get that, 
   \[
   \E_{M, M'_1, \ldots, M'_k}[G(M) \cdot F(M'_1) \cdots F(M'_k) \; | \; {\sf A}] =  p_k,
   \]
   Using the fact that the quantity in the expectation always has magnitude at most $1$, we get
   \[
    \inner{(\T F)^k}{G} \geq \frac{1}{1-\Pr[{\sf A}]} \cdot p_k.
   \]
   To conclude, we use the following simple bound on $\Pr[{\sf A}]$:
   \[
   \Pr[{\sf A}] \leq (2\ell + 2\delta k \ell) \cdot \frac{2^{2\ell}}{2^n} \leq \frac{1}{2}.
   \]
  To obtain this bound note that choosing $M_1, M'_1, \ldots, M'_k$ can be thought of as choosing $2\ell + 2\delta \ell k$ vectors from $\Ff_2^{n}$, so we can union bound over the probability that any vector creates a linear dependence in its respective matrix. Putting everything together, we get $p_k \leq 2\inner{(\T F)^k}{G}$.
\end{proof}

With \cref{lm: relate edges to inner product} we have reduced showing \cref{lm: pseudorandom edges} to bounding the analytical quantity $\inner{(\T F)^k}{G}$. Moreover, we know that $F$ comes from a pseudo-random set of subspaces. Intuition from Fourier analysis over the bilinear scheme says that if the function $F$ is pseudo-random (in the sense of \cref{def: bilinear scheme pseudorandom}) then this quantity should behave as if $F$ is a randomly chosen indicator function. Using global hypercontractivity, we will show that this is indeed the case. 
\begin{lemma} \label{lm: preserve pseudorandom}
   The function $F$ is $(r, 2\epsilon)$-pseudo-random.
\end{lemma}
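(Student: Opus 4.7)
The plan is to translate the bilinear zoom-in/zoom-out condition on $M$ into a Grassmann zoom-in/zoom-out condition on $\im(M)$ via the basis-invariance of $F$, and then apply the $(r,\eps)$-pseudo-randomness of $\Lcal$. Using $F(M)=F(MA)$ for invertible $A$, right-multiplication by an appropriate matrix lets us assume $U$ has $i$th column equal to the standard basis vector $e_i\in\Ff_2^{2\ell}$ for $i=1,\dots,d_1:=\dim(U)$, so the zoom-in $MU=V$ simply fixes the first $d_1$ columns of $M$ to $V$. A change of coordinates on $\Ff_2^n$ (which merely relabels $\Lcal$ and leaves its pseudo-randomness intact) similarly reduces $X$ to $[I_{d_2},0]$ with $d_2:=\codim(X)$, fixing the first $d_2$ rows of $M$ to $Y$. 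If $\rank(V)<d_1$, then every $M$ in the zoom-in/zoom-out has $\dim\im(M)\leq\rank(V)+(2\ell-d_1)<2\ell$, so $F\equiv 0$ and the bound is trivial. Otherwise $\rank(V)=d_1$; set $Q:=\spa(V)$, $r_Y:=\rank(Y)$, and $W_*:=X^{-1}(\spa(Y))$.

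For every $M$ in the bilinear zoom-in/zoom-out, the zoom-in forces $Q\subseteq\im(M)$ and the zoom-out forces $X(\im(M))=\spa(Y)$; hence whenever $\dim\im(M)=2\ell$, $\im(M)$ lies in the set $\mc{V}:=\{L\in\Gras(n,2\ell):Q\subseteq L\subseteq W_*,\;X(L)=\spa(Y)\}$ of \emph{valid} subspaces. A direct counting argument shows that the number of matrices $M\in\Zoom_{\text{bilin}}$ with $\im(M)=L$ is the same for every $L\in\mc{V}$: after fixing the first $d_1$ columns of $M$ to the prescribed basis of $Q$, the remaining $2\ell-d_1$ columns must form a basis of $L/Q$ while each lying in a prescribed affine coset in $L$ (given by $X c_j=y_j$), and the isomorphism type of this counting problem depends only on $(L/Q, X|_{L/Q}, \{y_j\bmod X(Q)\})$, which is the same for all $L\in\mc{V}$. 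Consequently $\im(M)$ is uniform on $\mc{V}$ conditioned on $\dim\im(M)=2\ell$, and therefore
\[
\|F_{(U,V),(X,Y)}\|_2^2=\Pr_M[\im(M)\in\Lcal]\;\leq\;\frac{|\Lcal\cap\mc{V}|}{|\mc{V}|}\;\leq\;\frac{|\Lcal\cap\Zoom[Q,W_*]|}{|\mc{V}|}.
\]

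Applying the $(r,\eps)$-pseudo-randomness of $\Lcal$ to the Grassmann zoom-in/zoom-out $(Q,W_*)$ (whose size parameter $d_1+(d_2-r_Y)$ is at most $r$) gives $|\Lcal\cap\Zoom[Q,W_*]|\leq\eps\cdot|\Zoom[Q,W_*]|$. The non-valid $L\in\Zoom[Q,W_*]$ are exactly those with $X(L)\subsetneq\spa(Y)$, and hence contained in $\bigcup_H\Zoom[Q,X^{-1}(H)]$ over the at most $2^{r_Y}$ hyperplanes $H$ of $\spa(Y)$ containing $X(Q)$. A standard Gaussian-binomial computation shows each $\Zoom[Q,X^{-1}(H)]$ has relative density $O(2^{-(2\ell-d_1)})$ inside $\Zoom[Q,W_*]$, so $|\mc{V}|\geq|\Zoom[Q,W_*]|/2$ for $\ell$ sufficiently large. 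Combining the bounds yields $\|F_{(U,V),(X,Y)}\|_2^2\leq 2\eps$, as claimed.

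The main subtlety is that the affine bilinear zoom-out $XM=Y$ with $Y\neq 0$ does not literally correspond to a subspace constraint on $\im(M)$; the correct translation uses $W_*=X^{-1}(\spa Y)$ as the Grassmann zoom-out subspace together with the auxiliary surjectivity $X(\im M)=\spa Y$. The resulting uniform distribution of $\im(M)$ over the valid subspaces is what permits the Grassmann pseudo-randomness of $\Lcal$ to transfer cleanly to bilinear pseudo-randomness of $F$.
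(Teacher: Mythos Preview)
Your argument is correct in substance and follows the natural route: translate the bilinear constraints $MU=V$, $XM=Y$ into the Grassmann constraints $Q\subseteq\im(M)\subseteq W_*$ together with the surjectivity $X(\im M)=\spa(Y)$, use basis invariance to see that $\im(M)$ is uniform over the ``valid'' subspaces $\mc{V}$, and then compare $|\mc V|$ to $|\Zoom[Q,W_*]|$. The paper itself gives no proof here---it simply defers to \cite[Lemma~A.18]{MZ24}---and your argument is presumably along the same lines as that reference.

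Two small points are worth tightening. First, your reductions to $U=[e_1,\dots,e_{d_1}]$ and $X=[I_{d_2},0]$ tacitly assume $U$ has full column rank and $X$ has full row rank; if not, either the zoom set is empty or one can remove redundant constraints (shrinking $d_1$ or $d_2$) before proceeding. Second, you invoke the $(r,\eps)$-pseudo-randomness of $\Lcal$ on the pair $(Q,W_*)$, whose size parameter is $d_1+(d_2-r_Y)=r-r_Y\le r$, whereas the paper's definition fixes the size to equal $r$ exactly. The needed monotonicity (that $(r,\eps)$-pseudo-random implies $(r',\eps)$-pseudo-random for $r'\le r$) follows by a one-line averaging argument---write $\mu_{Q,W_*}(\Lcal)$ as an average of $\mu_{Q',W_*}(\Lcal)$ over $Q'\supseteq Q$ with $\dim(Q')+\codim(W_*)=r$---but it deserves an explicit sentence.
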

\begin{proof}
See \cite[Lemma A.18]{MZ24}.
\end{proof}
We now introduce the global hypercontractivity machinery, and we present a few tools from \cite{EvraKL} and \cite{MZ24}. We remark that the next three results hold for any function $F: \Ff_2^{n \times 2\ell} \to \{0,1\}$ that is pseudo-random, not just the one that we consider. 
The first theorem is the ``global hypercontractivity'' theorem over the bilinear scheme and bounds the norm of the level $d$ of a pseudo-random function.

\begin{thm}[\cite{EvraKL}] \label{th: EKL}
    Let $t \geq 4$ be a power of $2$. If $F: \Ff_2^{n \times 2\ell} \xrightarrow[]{} \{0,1\}$ is $(r, \epsilon)$-pseudo-random and $d\leq r$, then $\norm{F^{=d}}_t \leq 2^{500d^2 t}\epsilon^{\frac{t-2}{t}}$.
\end{thm}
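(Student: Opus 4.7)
The plan is to follow the inductive scheme for global hypercontractivity results that has been refined through the Boolean slice, multislice, and Grassmann, and extended to the bilinear scheme in \cite{EvraKL}. The induction is on the level $d$. For the base case $d = 0$, the component $F^{=0}$ equals the constant $\E[F]$, i.e., the density of the indicator set; applying the pseudo-randomness definition with the trivial zoom-in/zoom-out of size $0$ (allowed since $0 \leq r$) gives $\E[F] \leq \epsilon$, and since $\epsilon \leq 1$ and $t \geq 4$ this yields $\|F^{=0}\|_t \leq \epsilon \leq \epsilon^{(t-2)/t}$.

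For the inductive step from $d-1$ to $d$, I would decompose $F^{=d}$ into a sum of rank-$1$ ``derivatives'' of $F^{=d-1}$: every character $\chi_S$ with $\rank(S) = d$ factors as $\chi_{(u,v)} \cdot \chi_{S'}$ where $(u, v)$ is a rank-$1$ pair (with $u\in\Ff_2^{2\ell}$, $v \in \Ff_2^n$) and $\rank(S') = d-1$. Grouping Fourier coefficients by such a leading direction, one can express $F^{=d}$ as an average over directions $(u,v)$ of conditional level-$(d-1)$ parts of the restricted functions $F|_{Mu=v}$. Applying H\"older's inequality across this expansion, one bounds $\|F^{=d}\|_t^t$ by a combinatorial factor times the maximum of $\|(F|_{Mu=v})^{=d-1}\|_t^t$ over $(u,v)$. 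Since $F$ being $(r,\epsilon)$-pseudo-random implies each restriction $F|_{Mu=v}$ is $(r-1, \epsilon)$-pseudo-random (the pseudo-randomness condition for the restricted function at size $r-1$ is exactly a condition on $F$ at size $r$), the inductive hypothesis at parameters $(d-1, r-1)$ applies and gives the claimed bound, with the combinatorial factor, which is of size $2^{O(dt)}$ per level, accumulating to $2^{500 d^2 t}$ after all $d$ inductive steps.

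The main obstacle is carefully tracking the constants and ensuring that the H\"older step distributes the $L_t$-norm across restrictions without losing a factor of $\epsilon$, which would weaken the $(t-2)/t$ exponent. For the bilinear scheme specifically, the two-sided row/column structure means the derivative decomposition must respect both zoom-ins and zoom-outs, and the rank constraint on characters must be compatible with the rank-$1$ factoring; getting both sides to fit within a single inductive framework is the technical heart of the argument. An alternative route would be to lift to the product model $(\Ff_2^n)^{2\ell}$ and invoke classical Bonami-Beckner hypercontractivity, but pushing the result back onto $\Ff_2^{n\times 2\ell}$ loses control of the constants in an essential way, and the direct intrinsic Fourier-analytic proof of \cite{EvraKL} is what produces the sharp exponent $\epsilon^{(t-2)/t}$ with only a $2^{O(d^2 t)}$ prefactor.
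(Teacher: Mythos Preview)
The paper does not prove this statement at all: its entire ``proof'' reads ``The theorem is implicit in \cite{EvraKL}, and we refer to \cite[Theorem A.7]{MZ24} for this exact formulation.'' So there is nothing to compare your argument against in this paper; the result is simply imported as a black box.

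Your sketch is a reasonable caricature of the global-hypercontractivity machinery, but a couple of points are imprecise enough that they would not survive as written. First, the pseudo-randomness hypothesis in \cref{def: bilinear scheme pseudorandom} is stated for restrictions of size \emph{exactly} $r$, not at most $r$; your base case appeals to the ``trivial zoom of size $0$'', which is not literally covered by the definition. One needs the easy observation that $(r,\epsilon)$-pseudo-randomness implies $(r',\epsilon)$-pseudo-randomness for every $r'\le r$ (by averaging a size-$r'$ restriction over the finer size-$r$ restrictions), and you should say so. Second, the character factorization you describe is not correct as stated: if $S$ has rank $d$, it is not the case that $\chi_S=\chi_{(u,v)}\cdot\chi_{S'}$ with $\rank(S')=d-1$ in any canonical way, since $\chi_S\cdot\chi_{S'}=\chi_{S+S'}$ and rank is not additive. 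The actual inductive mechanism in \cite{EvraKL} proceeds via a derivative/Laplacian operator and a two-step bootstrapping (from $L_2$ to $L_4$, then doubling via Cauchy--Schwarz to reach powers of $2$), rather than a direct rank-one peeling of characters; your outline conflates the combinatorial derivative on the function side with a factorization on the character side. These are fixable, but as a self-contained proof your sketch has genuine gaps in the inductive step.
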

\begin{proof}
The theorem is implicit in \cite{EvraKL}, and we refer to \cite[Theorem A.7]{MZ24} for this exact formulation.
\end{proof}
The next result states $\T$ decays the norm of the level $d$ part of basis-invariant $F$ by factor close to $2^{-2d\delta \ell}$. 
% This comes from the fact that the level $d$ portion, $F^{=d}$, is, up to a small error, an eigenfunction of $\T$ with singular value $2^{-d\delta \ell}$. 

\begin{lemma} \label{lm: not increase norm}
    If $F \in L_2\left(\Ff_2^{n \times 2 \ell}\right)$ is basis invariant, 
    then for any $0 \leq d \leq 2 \ell$, we have
    \[
    \norm{\T F^{=d}}^2_2 \leq \left(2^{-d (2\delta \ell-1)} + 3 \cdot 2^{d-n}\right)\norm{F^{=d}}^2_2.
    \]
\end{lemma}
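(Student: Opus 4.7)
The plan is a direct Fourier-analytic computation combined with an orbit-counting argument that exploits the basis-invariance hypothesis. Writing $F = \sum_{S} \widehat{F}(S) \chi_S$, and for $S \in \Ff_2^{n \times 2\ell}$ with columns $s_1,\ldots,s_{2\ell}$, let $S^\circ \in \Ff_2^{n \times 2(1-\delta)\ell}$ denote the submatrix of its first $2(1-\delta)\ell$ columns. From the definition of $\T$,
\[
\T\chi_S(M) = \chi_{S^\circ}(M)\prod_{j=1}^{2\delta\ell}\E_{v_j}\bigl[\chi_{s_{2(1-\delta)\ell+j}}(v_j)\bigr] = \chi_{S^\circ}(M)\cdot \ind\bigl[s_{2(1-\delta)\ell+j}=0 \text{ for all } j\bigr],
\]
so $\T$ annihilates every character whose last $2\delta\ell$ columns are not all zero and otherwise identifies it with the corresponding character on $\Ff_2^{n \times 2(1-\delta)\ell}$. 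Plancherel then yields
\[
\|\T F^{=d}\|_2^2 \;=\; \sum_{\substack{\rank(S)=d \\ s_{2(1-\delta)\ell+1}=\cdots=s_{2\ell}=0}}|\widehat{F}(S)|^2.
\]

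The next step is to use basis invariance. A change of variables in the inner product defining $\widehat F$ converts $F(MA)=F(M)$ for all $A\in GL_{2\ell}(\Ff_2)$ into the equivalent statement $\widehat F(S)=\widehat F(SB)$ for all $B\in GL_{2\ell}(\Ff_2)$. An elementary orbit argument identifies the orbits of this right-multiplication action with the fibers of the column-span map $S \mapsto \im(S)$: if $\im(S) = \im(S') = V$, one lifts a basis of $V$ through both maps and a basis of $\ker S$, then extends to produce an invertible $B$ with $SB = S'$. Consequently $|\widehat F(S)|^2$ is a function $\phi(V)$ of $V = \im(S)$ alone. For a fixed $d$-dimensional $V \subseteq \Ff_2^n$, the number of matrices with $\im(S)=V$ equals the number of spanning $2\ell$-tuples in $V$, namely $\prod_{i=0}^{d-1}(2^{2\ell}-2^i)$; restricting to $S$ whose last $2\delta\ell$ columns are zero reduces this to $\prod_{i=0}^{d-1}(2^{2(1-\delta)\ell}-2^i)$. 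Collecting orbit by orbit yields
\[
\frac{\|\T F^{=d}\|_2^2}{\|F^{=d}\|_2^2} \;=\; \prod_{i=0}^{d-1}\frac{2^{2(1-\delta)\ell}-2^i}{2^{2\ell}-2^i},
\]
with both sides automatically vanishing when $d > 2(1-\delta)\ell$.

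Finally, bounding the product is routine: pulling out $2^{2(1-\delta)\ell}$ from each numerator and $2^{2\ell}$ from each denominator extracts the main factor $2^{-2\delta\ell d}$ and leaves a residual $\prod_{i=0}^{d-1}\tfrac{1-2^{i-2(1-\delta)\ell}}{1-2^{i-2\ell}}$, whose numerator is at most $1$ and whose denominator is a product of $d$ terms each at least $1/2$ (since $i \leq 2\ell - 1$). This yields the main term $2^{-d(2\delta\ell - 1)}$, and the additive slack $3\cdot 2^{d-n}$ absorbs the lower-order corrections one gets by expanding the denominator factors (e.g., via $(1-x)^{-1}\leq 1+2x$) more carefully. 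The main technical point in the argument is the orbit analysis in the second step, i.e., verifying that $|\widehat F(S)|^2$ is a function of $\im(S)$ alone; once that is established, the remainder is direct Fourier algebra combined with a Gaussian-binomial count of orbit sizes, with no recourse to hypercontractivity or any deeper analytic input.
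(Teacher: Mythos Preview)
Your argument is correct and in fact yields a sharper statement than the lemma: your exact identity
\[
\frac{\|\T F^{=d}\|_2^2}{\|F^{=d}\|_2^2} \;=\; \prod_{i=0}^{d-1}\frac{2^{2(1-\delta)\ell}-2^i}{2^{2\ell}-2^i}
\]
(when $\|F^{=d}\|_2\neq 0$) already gives $\|\T F^{=d}\|_2^2 \le 2^{-d(2\delta\ell-1)}\|F^{=d}\|_2^2$ by the crude bound you wrote, so the additional $3\cdot 2^{d-n}$ term is never needed. Your closing sentence about that term ``absorbing lower-order corrections'' is therefore unnecessary and a bit misleading, but harmless.

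Methodologically your route differs from the paper's. The paper does not compute $\T\chi_S$ directly; instead it imports from~\cite{MZ24} an auxiliary operator $\Phi$ on $L_2(\Ff_2^{n\times 2\ell})$, establishes $\|\T F^{=d}\|_2^2 = \langle F^{=d},\Phi F^{=d}\rangle$ for basis-invariant $F$, and then invokes a black-box eigenvalue bound on $\Phi$ (Lemma~A.13 of~\cite{MZ24}), which is where the $3\cdot 2^{d-n}$ term originates. Your approach is more self-contained: the observation that $\T$ simply kills characters with a nonzero tail and preserves the rest, together with the $GL_{2\ell}(\Ff_2)$-orbit count (which is where basis invariance enters), replaces all of the imported machinery. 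The paper's route has the advantage of reusing lemmas that are already needed elsewhere (e.g., for~\eqref{eq:to_parsveal}), whereas yours has the advantage of being elementary, giving an exact constant, and making transparent exactly how basis invariance is used.
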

\begin{proof}
In Appendix A of \cite{MZ24} an operator $\mc{G}: L_2\left( \Ff_2^{n \times 2(1-\delta)\ell}\right) \to  L_2\left( \Ff_2^{n \times 2\ell}\right)$ is defined along with an operator  $\Phi:   L_2\left( \Ff_2^{n \times 2\ell}\right) \to L_2\left( \Ff_2^{n \times 2\ell}\right)$, which is almost the adjacency operator of a Cayley Graph, and is given by 
\[
\Phi G (M) = \E_{\substack{B \in \Ff_2^{n \times 2\delta \ell}\\ C \in \Ff_2^{2\delta \ell \times 2\ell}\\ \rank(C) = 2\delta\ell}} \left[G(M + BC) \right].
\]
In Lemma A.12 of \cite{MZ24} it is shown that, since $F$ is basis invariant, we have 
\begin{equation} \label{eq: phi}   
\mc{G} \T F= \Phi F.
\end{equation}
By Lemma A.10 of~\cite{MZ24} we have that $F^{=d}$ is basis invariant, so by Lemma A.11 there we get
\begin{equation} \label{eq: pseudo adjoint}
 \norm{\T F^{=d}}^2_2 =   \inner{\T F^{=d}}{\T F^{=d}} = \inner{F^{=d}}{\mc{G} \T F^{=d}} = \inner{F^{=d}}{\Phi F^{=d}}.
\end{equation}
The first equality is by definition of $2$-norm and the last equality uses \eqref{eq: phi}.

Now we use the fact that $F^{=d}$ can be written as a linear combination of $\chi_S$ for $S \in \Ff_2^{n \times 2 \ell}$ of rank $d$, as described in \eqref{eq: level d decomp}. By Lemma A.13 of~\cite{MZ24}, each such $\chi_S$ for $S$ of rank $d$ is an eigenvector of $\Phi$ with eigenvalue at most $2^{-d (2\delta \ell-1)} + 3 \cdot 2^{d-n}$. Thus, for each $S$ of rank $d$
\[
\norm{\Phi \chi_S}_2 \leq \left(2^{-d (2\delta \ell-1)} + 3 \cdot 2^{d-n}\right) \cdot \norm{\chi_S}_2.
\]
Because the $\chi_S$ are orthogonal we conclude that
\begin{equation} \label{eq: eigenfunction}
   \norm{\Phi F^{=d}}_2 \leq \left(2^{-d (2\delta \ell-1)} + 3 \cdot 2^{d-n}\right) \cdot \norm{F^{=d}}_2. 
\end{equation}
To conclude, we combine \eqref{eq: eigenfunction} with \eqref{eq: pseudo adjoint} and the Cauchy-Schwarz inequality 
\end{proof}

We now state and prove a version of~\cref{lm: pseudorandom edges} for the bilinear scheme:
\begin{lemma} \label{lm: pseudorandom edge count}
    Let $F \in L_2\left( \Ff_2^{n \times 2 \ell} \right)$ and $G \in L_2\left(\Ff_2^{n \times 2(1-\delta)\ell} \right)$ be basis invariant indicator functions where $F$ is $(r, \eps)$ pseudo-random and $G$ has expectation $\E[G] = \beta$. Then for all $t \geq 4$ that are powers of $2$, we have 
    \[
    \inner{(\T F)^k}{G} \leq  2^{O_{r,t,k}(1)}  \beta^{(t-1)/t}\eps^{(kt-2)/t} + 2^{-r\delta \ell+1}.
    \]
\end{lemma}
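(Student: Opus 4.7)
The plan is to decompose $F = F^{\leq r} + F^{>r}$, with $F^{\leq r} := \sum_{d=0}^r F^{=d}$, and split
\[
\inner{(\T F)^k}{G} = \inner{(\T F^{\leq r})^k}{G} + \inner{(\T F)^k - (\T F^{\leq r})^k}{G}.
\]
The first summand will yield the main $2^{O_{r,t,k}(1)}\beta^{(t-1)/t}\eps^{(kt-2)/t}$ contribution, while the second (an error term involving only $\T F^{>r}$) will yield the $2^{-r\delta\ell+1}$ piece.

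For the main term, I apply generalized H\"older's inequality with exponents $(\underbrace{kt,\ldots,kt}_{k}, t/(t-1))$:
\[
\inner{(\T F^{\leq r})^k}{G} \le \|\T F^{\leq r}\|_{kt}^{k}\cdot \|G\|_{t/(t-1)} = \|\T F^{\leq r}\|_{kt}^{k}\cdot \beta^{(t-1)/t},
\]
where the last equality uses that $G$ is Boolean. Since $\T$ is an averaging operator, Jensen's inequality gives $\|\T H\|_{kt}\le \|H\|_{kt}$ for every $H$, so by the triangle inequality $\|\T F^{\leq r}\|_{kt}\le \sum_{d\leq r}\|F^{=d}\|_{kt}$. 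Each term on the right is controlled by \cref{th: EKL} (applied with Lebesgue exponent equal to the smallest power of $2$ that is at least $kt$, and using monotonicity of $L^p$ norms), giving $\|F^{=d}\|_{kt} \leq 2^{O_{r,t,k}(1)} \eps^{(kt-2)/(kt)}$. Raising to the $k$-th power produces the claimed main-term bound.

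For the error term, I use the algebraic factorization $a^k - b^k = (a-b)\sum_{i=0}^{k-1}a^{k-1-i}b^i$ with $a=\T F$ and $b=\T F^{\leq r}$, yielding
\[
(\T F)^k - (\T F^{\leq r})^k = \T F^{>r}\cdot H, \qquad H := \sum_{i=0}^{k-1}(\T F)^{k-1-i}(\T F^{\leq r})^i.
\]
By Cauchy--Schwarz and $|G|\le 1$, the error is at most $\|\T F^{>r}\|_2\cdot \|H\|_2$. For the first factor, \cref{lm: not increase norm} gives $\|\T F^{=d}\|_2^2 \leq 2\cdot 2^{-d(2\delta\ell-1)}\|F^{=d}\|_2^2$ for $d>r$ and $\ell$ large; summing the geometric tail against $\sum_d\|F^{=d}\|_2^2 = \|F\|_2^2 \leq 1$ yields $\|\T F^{>r}\|_2 \leq 2^{-(r+1)\delta\ell + O(r)}$. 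For $\|H\|_2$, I bound each summand using $\|\T F\|_\infty \leq 1$ (so the $(\T F)^{k-1-i}$ factor is harmless in $L^\infty$) together with the hypercontractive estimate $\|(\T F^{\leq r})^i\|_2 = \|\T F^{\leq r}\|_{2i}^{i} \le 2^{O_{r,k}(1)}\eps^{\max(i-1,0)}$ from \cref{th: EKL}; the geometric sum over $i$ gives $\|H\|_2 \leq 2^{O_{r,k}(1)}$. Combining, for $\ell$ sufficiently large relative to $r$ and $k$, the factor $2^{-\delta\ell}$ present in $\|\T F^{>r}\|_2$ beyond $2^{-r\delta\ell}$ swallows $\|H\|_2$, producing an error at most $2^{-r\delta\ell+1}$.

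The main obstacle I expect is the error bound: because $\T F^{\leq r}$ need not be pointwise bounded by $1$, one cannot naively estimate $\|H\|_\infty$. The trick is to mix the pointwise bound on $\T F$ with hypercontractive $L^{2i}$ control on the low-degree projection $\T F^{\leq r}$, and then verify that the constants produced can be absorbed by the sharp $2^{-d(2\delta\ell-1)}$ decay for high-degree pieces provided by \cref{lm: not increase norm}.
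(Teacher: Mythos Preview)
Your proof is correct and shares the same skeleton as the paper's: split $\T F$ into its degree-$\le r$ and degree-$>r$ parts, bound the low-degree contribution by H\"older with exponents $(kt,\ldots,kt,t/(t-1))$ followed by the triangle inequality, Jensen on $\T$, and \cref{th: EKL}, and control the high-degree tail via \cref{lm: not increase norm}. The main-term analysis is essentially identical.

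Where you differ is in the error term. The paper does not use the algebraic identity $a^k-b^k=(a-b)\sum a^{k-1-i}b^i$; instead it writes $\T F=L+H$ with $L=\T F^{\le r}$ and $H=\T F^{>r}$, applies Markov to the bound $\|H\|_2^2\le 2^{-r(2\delta\ell-1)}$ to get an event $E=\{|H|\le 2^{-0.9r\delta\ell}\}$ of probability $\ge 1-2^{-r\delta\ell}$, uses the pointwise bound $|\T F|,|G|\le 1$ on the complement $\bar E$, and on $E$ invokes $|L+H|^k\le 2^k(|L|^k+|H|^k)$. This way the paper only appeals to global hypercontractivity once (for the main term), at the cost of an extra $2^k$ that is absorbed into $2^{O_{r,t,k}(1)}$. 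Your route is more algebraic but needs a second, mild use of \cref{th: EKL} to bound $\|(\T F^{\le r})^i\|_2$; the resulting $2^{O_{r,k}(1)}$ is then swallowed by the extra factor $2^{-\delta\ell}$ in $\|\T F^{>r}\|_2$. Both arguments implicitly need $\ell$ large relative to $r,k$ for the final bookkeeping, which is consistent with how the lemma is applied.
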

\begin{proof}
    Using the degree decomposition, we can write $\T F = \sum_{i = 0}^{2\ell} \T F^{=i}$. We partition the terms of the summation into $L = \sum_{i = 0}^{r} \T F^{=i}$ and $H = \sum_{i = r+1}^{2\ell} \T F^{=i}$, so that altogether $L + H = \T F$. We will want to start by bounding the probability that $H$ is large, and to this end we first calculate its norm. Note that for $i> j$ 
    we have that
    \begin{equation}\label{eq:to_parsveal}
    \inner{\T F^{=i}}{\T F^{=j}}
    =\inner{F^{=i}}{\mathcal{G}\T F^{=j}}
    =\inner{F^{=i}}{\Phi F^{=j}}
    =0.
    \end{equation}
    The first transition is by combining Lemmas A.10 and A.11 of~\cite{MZ24}, 
    the second transition is by Lemma A.12 of~\cite{MZ24}, and the last lemma is because by Lemma A.13 of~\cite{MZ24}, the operator $\Phi$
    preserves the space of degree at most $j$ functions. Combining with~\Cref{lm: not increase norm} it follows that
    \[
    \E_{M}[|H(M)|^2] = \norm{H}_2^2 = \sum_{i = r+1}^{2\ell} \norm{\T F^{=i}}_2^2 \leq \left(2^{-r(2\delta \ell-1)} + 3 \cdot 2^{r-n} \right) \sum_{i=r+1}^{2\ell} \norm{F^{=d}}_2^2 \leq 2^{-r(2\delta\ell-1)}.
    \]
    Letting $E$ be the event that $|H(M)| \leq 2^{-0.9r\delta \ell}$, we get by Markov's inequality that $\Pr[\bar{E}] \leq 2^{-r\delta\ell}$. We have:
     \begin{align*}     
     \inner{(\T F)^k}{G}  &= \E_{M}[(\T F)(M)^k \cdot G(M) \cdot \ind(E)] +  \E_{M}[(\T F)(M)^k \cdot G(M)  \cdot \ind(\bar{E})] \\
     &\leq \E_{M}[G(M) \cdot (L(M) + H(M))^k \cdot \ind(E)] + \Pr[\bar{E}] \\
     &\leq 2^k\left(\E_{M}[G(M) \cdot |L(M)|^k \cdot \ind(E)] +   \E_{M}[G(M) \cdot |H(M)|^k \cdot \ind(E)] \right) + 2^{-r\delta\ell}.
     \end{align*}
     In the second transition we used the fact that $\T F$ and $G$ are $1$-bounded, and in the third transition we used $|a+b|^k\leq 2^k(|a|^k+|b|^k)$.
     Next, the second term on the right hand side is upper bounded as:
     \[
     \E_{M}[G(M) \cdot |H(M)|^k \cdot \ind(E)] \leq 2^{-0.9r\delta \ell k } \E_{M}[G(M)] \leq 2^{-0.9r\delta \ell k }. 
     \]
     For the first term above, we first use H\"{o}lder's inequality  to get
     \begin{align*}
         \E_{M}[G(M) \cdot |L(M)|^k \cdot \ind(E)] &\leq  \E_{M}[G(M) \cdot |L(M)|^k ]  
         \leq \norm{G}_{t/(t-1)} \norm{L^k}_{t} 
         = \norm{G}_{t/(t-1)} \norm{L}^k_{kt}.
     \end{align*}
     By Booleanity of $G$, 
     $\norm{G}_{t/(t-1)} = \beta^{(t-1)/t}$. To bound $\norm{L}^k_{kt}$ we use the triangle inequality along with the fact that $\norm{\T F^{=i}}_{kt}\leq \norm{F^{=i}}_{kt}$ by Jensen's inequality and~\cref{th: EKL} as follows:
     \begin{align*}
    \norm{L}^k_{kt} &\leq \left(\sum_{i=0}^r \norm{ \T F^{=i}}_{kt} \right)^k 
    \leq  \left(\sum_{i=0}^{r} \norm{F^{=i}}_{kt} \right)^k  
    \leq  \left( 2^{O_{r,t,k}(1)} \eps^{\frac{kt-2}{kt}} \right)^k 
    \leq 2^{O_{r,t,k}(1)} \eps^{(kt-2)/t}.
         \end{align*}
    Putting everything together gives
    \[
      \inner{(\T F)^k}{G} \leq \beta^{(t-1)/t} 2^{O_{r,t,k}(1)} \eps^{(kt-2)/t} 
      + 2^k \cdot 2^{-0.9r\delta \ell k } + 2^{-r\delta \ell },
    \]
    where the second term is negligible compared to the third term.
\end{proof}

Finally, we can put all of the pieces together and show \cref{lm: pseudorandom edges}.
\begin{proof}[Proof of~\cref{lm: pseudorandom edges}]
Let $F$ and $G$ be as above. We have
\[
\norm{F}_2^2 \leq \mu(\Lcal) = \alpha, \quad \norm{G}_2^2 \leq \mu(\Rcal) = \beta.
\]
By \cref{lm: relate edges to inner product} we have
$\Pr_{R; L_1, \ldots, L_k \supseteq R}[R \in \Rcal, L_1, \ldots, L_k \in \Lcal] \leq  2\langle \T F, G \rangle$,
and by \cref{lm: preserve pseudorandom} $F$ is $(r, 2\epsilon)$-pseudo-random. Now, applying \cref{lm: pseudorandom edge count} we conclude
\begin{align*}
\Pr_{R; L_1, \ldots, L_k \supseteq R}[R \in \Rcal, L_1, \ldots, L_k \in \Lcal] \leq 2\langle \left(\T F\right)^k, G \rangle\leq 2^{O_{r,t,k}(1)} \beta^{(t-1)/t} (2\eps)^{(kt-2)/t} + 2^{-r\delta\ell+2},
\end{align*}
which is at most $2^{O_{r,t,k}(1)} \beta^{(t-1)/t} \eps^{(kt-2)/t} + 2^{-r\delta\ell+2}$.
\end{proof}

\section{Analysis of the PCP} 
In this section, we analyze the composed PCP and prove \cref{thm: main}. 

\subsection{Tools for the Soundness Analysis}
\label{sec:tools_for_pcp_analysis} \label{sec: covering properties} \label{sec: bounded zoom-outs statement}
To start, we present various tools from \cite{MZ24} that are needed for our soundness analysis.
\paragraph{The Covering Property} 
The covering properties we use are the same as those of \cite{KS, KMS}. Set the following parameters:
\begin{equation} \label{eq: pcp parameters}
    J = 2^{100\ell^2} \quad , \quad \beta = \frac{\log \log J}{J}.
\end{equation}
%where $c > 0$ is a constant arbitrarily small relative to $\delta$. 
We also set $\ell$ sufficiently large relative to $r^\star$ throughout the section. 

Fix a question $U = (e_1, \ldots, e_J)$ to the first prover and recall that $H_U = \spa(x_{e_1}, \ldots, x_{e_J})$ where $x_{e_i}$ is the vector that is one at coordinates corresponding to variables in $e_i$ and $0$ elsewhere. Let $Q \subseteq \Ff_2^U$ be a subspace of dimension $r$, for some $r \leq r^\star$. We will need covering properties to relate the following pairs of distributions. The first pair of distributions is for the advice subspace, while the second pair of distributions are for $2\ell$-dimensional subspaces conditioned on containing some advice subspace $Q$.

\paragraph{$\D_{r_1}$ and $\D'_{r_1}$ for some integer $r_1$.}
\begin{itemize}
    \item $\mc{D}_{r_1}:$ Choose a uniformly random $r_1$-dimensional subspace $Q \subseteq \Ff_2^U$.
    \item $\mc{D}'_{r_1}:$ Choose $V \subseteq U$ according to the outer PCP. Then, choose a uniformly random $r_1$-dimensional subspace $Q \subseteq \Ff_2^V$ and consider $Q$ as a subspace of $\Ff_2^U$ by inserting $0$'s in the missing coordinates.
\end{itemize}
\paragraph{$\D_{Q,W}$ and $\D'_{Q,W}$ for subspaces $Q \subseteq W$.}
\begin{itemize}
    \item $\mc{D}_{Q,W}:$ Choose a uniformly random $2\ell$-dimensional subspace $L \subseteq \Ff_2^U$ conditioned on $Q \subseteq L \subseteq W$.
    \item $\mc{D}'_{Q,W}:$ Sample a $2\ell$-space $L$ according to the following distribution conditioned on $Q \subseteq L \subseteq W$. Choose $V \subseteq U$ according to the outer PCP. Then, choose a uniformly random $L \subseteq \Ff_2^V$ conditioned on $L \supseteq Q$, and consider $L$ as a subspace of $\Ff_2^U$ by inserting $0$'s in the missing coordinates.  
\end{itemize}

In words, $\mc{D}_Q$ outputs the span of $2\ell$ vectors uniformly from $\Ff_2^U$ conditioned on the first $r$ of them spanning $Q$. The distribution $\mc{D}'_Q$ also outputs the span of $2\ell$ vectors from $\Ff_2^U$ conditioned on the first $r$ spanning $Q$, but these vectors are obtained first from $\Ff_2^V$ uniformly, and then lifted to $\Ff_2^U$. 
% The covering property asserts that for sufficiently large sets, the difference in measures of these two distributions is negligible. 
% Set $\eta = 2^{-100\ell^{100}}$ to be a negligible value. 
The following two lemmas bound the \emph{statistical distance} between the pairs of distributions above. 
%We define statistical distance formally below. 
% They are in the spirit of the covering properties of \cite{KS, KMS2}, and we refer to statements in \cite{dor_thesis} for reference. 

\begin{definition}
    Given two discrete distributions $\D, \D'$ over a set $\mc{A}$, their statistical distance is
    \[
    {\sf SD}(\D, \D') := \max_{A \subseteq \mc{A}} |\D(A) - \D'(A)|. 
    \]
\end{definition}

\begin{lemma}\cite[Lemma 3.6]{dor_thesis} \label{lm: covering advice}
    For any dimension $r_1 \leq \frac{10}{\delta}$ we have ${\sf SD}(\D_{r_1}, \D'_{r_1}) \leq \beta \sqrt{J} \cdot 2^{r_1 + 4}$.
\end{lemma}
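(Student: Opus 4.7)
The plan is to bound the total variation distance by a chi-squared (second moment) argument on the likelihood ratio. For $Q \in \Grass(\Ff_2^U, r_1)$ write $\D_{r_1}(Q) = 1/N_U$ and $\D'_{r_1}(Q) = \sum_V \Pr[V]\,\ind[Q \subseteq \Ff_2^V]/N_V$, where $N_W := \qbin{|W|}{r_1}$ and $\Pr[V]$ denotes the marginal placed on $V$ by the outer PCP. Setting $Y(Q) := \D'_{r_1}(Q)/\D_{r_1}(Q)$ and using the elementary identity $\Pr_{Q \sim \D_{r_1}}[Q \subseteq \Ff_2^V] = N_V/N_U$, one obtains $\E_{Q \sim \D_{r_1}}[Y] = 1$, whence Cauchy--Schwarz yields $2\,\mathsf{SD}(\D_{r_1}, \D'_{r_1}) = \E_Q[|1 - Y|] \leq \sqrt{\E_Q[Y^2] - 1}$.

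Next, the second moment expands as $\E_Q[Y^2] = \sum_{V, V'}\Pr[V]\Pr[V']\, p(V\cap V')/(p(V) p(V'))$ with $p(W) := N_W/N_U$. The Gaussian binomial estimate $\qbin{m}{r_1}/\qbin{|U|}{r_1} = (1\pm o(1))\, 2^{r_1 (m - |U|)}$, which is valid because $|V| \geq J \gg r_1$, simplifies this ratio to $(1\pm o(1))\, 2^{r_1\,|U \setminus (V \cup V')|}$. The key structural observation is that, by the assumption on the question $U$, the sets $U_1, \ldots, U_J$ are disjoint, so the outer PCP samples $V_i, V'_i$ independently across $i$, and $|U \setminus (V \cup V')| = \sum_i |U_i \setminus (V_i \cup V'_i)|$. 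Therefore $\E_Q[Y^2]$ factorizes as $(1\pm o(1))\prod_{i=1}^J \phi_i$, where $\phi_i := \E_{V_i, V'_i}\bigl[2^{r_1 |U_i \setminus (V_i \cup V'_i)|}\bigr]$.

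A direct case analysis, using that each $V_i$ equals $U_i$ with probability $1-\beta$ and is otherwise a uniform singleton in $U_i$, shows that $|U_i \setminus (V_i \cup V'_i)|$ is nonzero only when both $V_i$ and $V'_i$ are singletons (probability $\beta^2$); conditionally, it equals $2$ with probability $1/3$ and $1$ with probability $2/3$. Hence each $\phi_i = 1 + \beta^2(c - 1)$ with $c := \tfrac{1}{3}\cdot 2^{2r_1} + \tfrac{2}{3}\cdot 2^{r_1} \leq 2^{2r_1}$, and so $\E_Q[Y^2] - 1 \leq O(\beta^2 J \cdot 2^{2r_1})$ in the parameter regime of~\eqref{eq: pcp parameters} (using $(1+x)^J \leq 1 + 2xJ$ when $xJ$ is small, which holds since $\beta^2 J \cdot 2^{2r_1} \leq (\log\log J)^2 \cdot 2^{2 r^\star}/J$). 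Taking square roots gives $\mathsf{SD}(\D_{r_1}, \D'_{r_1}) \leq O(\beta \sqrt{J}\cdot 2^{r_1})$, which fits comfortably inside the slack provided by the $2^4$ factor in the stated bound. The main obstacle is keeping careful track of the multiplicative $(1\pm o(1))$ corrections in the Gaussian binomial estimate, though these corrections are benign because $|V| \geq J$ is vastly larger than $r_1 \leq r^\star$.
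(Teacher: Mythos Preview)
Your argument is correct. The paper does not prove this lemma itself but cites \cite[Lemma 3.6]{dor_thesis}; your second-moment (chi-squared) bound on the likelihood ratio is precisely the standard route used for these covering lemmas in the KMS line of work, and the computation you outline --- factorizing $\E_{Q}[Y^2]$ over the $J$ independent blocks via $|U\setminus(V\cup V')|=\sum_i |U_i\setminus(V_i\cup V'_i)|$ and then reading off $\phi_i = 1+\beta^2(c-1)$ from the singleton/singleton case --- is exactly what the cited proof does.

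One small point worth tightening: you write the Gaussian-binomial correction as a uniform $(1\pm o(1))$, but the correction in $p(V\cap V')$ depends on $|V\cap V'|$, which in principle can be much smaller than $J$ (when many blocks have $V_i,V_i'$ distinct singletons). The clean way to handle this is to split on the event $\{|V\cap V'|\ge J/2\}$: on this event the correction is $1+O(r_1 2^{r_1-J/2})$ uniformly, while the complementary event has probability at most $2^{J}(\tfrac{2}{3}\beta^2)^{J/2}$, which, against the crude bound $p(V\cap V')/(p(V)p(V'))\le 2^{O(r_1 J)}$, is still negligible under~\eqref{eq: pcp parameters}. With this split the $(1\pm o(1))$ is justified and your conclusion $\mathsf{SD}\le O(\beta\sqrt{J}\cdot 2^{r_1})$ follows, well within the stated $2^{r_1+4}$ slack.
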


\begin{lemma}\cite[Lemma 3.8]{dor_thesis}\label{lm: covering}
For all dimensions $0 \leq r \leq 2\ell-1$, and $r' \in \mathbb{N}$ the following holds. For at least $\left(1 - \sqrt{\beta} J^{1/4}\right)$-fraction of $r$-dimensional subspaces $Q \subseteq \Ff_2^U$, we have
    \[
    {\sf SD}(\mc{D}_{Q,W}, \mc{D}'_{Q, W}) \leq \sqrt{\beta} J^{1/4} \cdot 2^{2\ell + 5} \cdot 2^{r'(2\ell-r)+5},
    \]
    for all subspaces $W \subseteq \Ff_2^U$ of codimension $r'$ containing $Q$.
\end{lemma}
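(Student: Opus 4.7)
The plan is to view $\mc{D}_{Q,W}$ and $\mc{D}'_{Q,W}$ as the conditional versions of the simpler unconditional distributions $\mc{D}_Q$ and $\mc{D}'_Q$ (sampling a uniform $L \supseteq Q$ in $\Ff_2^U$ and in the lifted $\Ff_2^V$ respectively, without the $L \subseteq W$ requirement), conditioned on the event $E_W := \{L \subseteq W\}$. Using the elementary inequality
\[
{\sf SD}(\mu|_{E}, \nu|_{E}) \;\leq\; \frac{{\sf SD}(\mu, \nu) + |\mu(E) - \nu(E)|}{\min(\mu(E), \nu(E))},
\]
this reduces the task to (a) bounding ${\sf SD}(\mc{D}_Q, \mc{D}'_Q)$ unconditionally, and (b) lower-bounding both $\mc{D}_Q(E_W)$ and $\mc{D}'_Q(E_W)$ uniformly in $W$.

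For (b), a direct Gaussian-binomial count shows that a uniformly random $2\ell$-dimensional subspace $L \supseteq Q$ falls inside a fixed codimension-$r'$ subspace $W \supseteq Q$ with probability at least $2^{-r'(2\ell - r) - O(1)}$, which is exactly where the factor $2^{r'(2\ell - r)}$ in the statement comes from. The corresponding lower bound for $\mc{D}'_Q$ holds as long as $V$ is ``typical'' -- specifically, that $Q \subseteq \Ff_2^V$ (so that $\mc{D}'_Q$ is supported on subspaces containing $Q$) and that $\dim \Ff_2^V \geq n - 2\beta J$ -- both of which occur with probability $1 - o(1)$ by a Chernoff bound on the number of ``short'' coordinates in the outer PCP. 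For (a), I would follow the strategy already used to prove~\cref{lm: covering advice}: couple $\mc{D}_Q$ and $\mc{D}'_Q$ so that when $V = U$ on every block (which happens with probability $(1-\beta)^J$) the outputs agree, and more generally use Chernoff to control the expected number of disagreeing coordinates. The difference from~\cref{lm: covering advice} is that one now samples $2\ell$ vectors instead of $r_1$, so the SD picks up an extra $2^{2\ell}$ factor from the count of possible tuples of vectors lying in a ``bad'' coordinate block, giving ${\sf SD}(\mc{D}_Q, \mc{D}'_Q) \lesssim \beta \sqrt{J} \cdot 2^{2\ell + O(1)}$.

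Combining (a) and (b) produces a bound of order $\beta \sqrt{J} \cdot 2^{2\ell} \cdot 2^{r'(2\ell - r)}$ on ${\sf SD}(\mc{D}_{Q,W}, \mc{D}'_{Q,W})$ for \emph{fixed} $Q$ and $W$. The main obstacle is the ``for all $W$'' quantifier in the lemma: naively union-bounding over the doubly exponentially many codimension-$r'$ subspaces $W$ is prohibitive. The way around this is to consider the random variable $X(Q) = \max_{W \supseteq Q,\, \codim(W) = r'} {\sf SD}(\mc{D}_{Q,W}, \mc{D}'_{Q,W}) / 2^{r'(2\ell - r) + 5}$, bound $\E_Q[X(Q)]$ by $\beta \sqrt{J} \cdot 2^{2\ell + O(1)}$ by exchanging expectations and exploiting the randomness of $Q$ to effectively absorb the worst-case $W$, and then apply Markov's inequality at threshold $\sqrt{\beta} J^{1/4} \cdot 2^{2\ell + 5}$ to obtain the claimed $(1 - \sqrt{\beta} J^{1/4})$ fraction of good $Q$. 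The square root on $\beta$ and the fourth root on $J$ in the final statement are precisely the price paid for this Markov step, and the care needed to average out the $W$-dependence is the most delicate part of the argument.
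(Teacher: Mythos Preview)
The paper does not prove this lemma: it is quoted verbatim from \cite[Lemma 3.8]{dor_thesis} and invoked as a black box, with no argument supplied here. So there is no ``paper's own proof'' against which to compare your proposal.

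That said, a brief remark on your sketch. The overall shape---condition the unconditioned distributions $\mc D_Q,\mc D'_Q$ on the event $E_W=\{L\subseteq W\}$, use the elementary bound on the statistical distance of conditioned measures, and then lower-bound $\mc D_Q(E_W)$ by a Gaussian-binomial count of order $2^{-r'(2\ell-r)}$---is the natural one and is how covering statements of this type are established. However, your discussion of the ``for all $W$'' quantifier is misdirected. Once you have a bound on ${\sf SD}(\mc D_Q,\mc D'_Q)$ for a fixed $Q$, the conditioning inequality yields a bound on ${\sf SD}(\mc D_{Q,W},\mc D'_{Q,W})$ that is automatically uniform over all $W\supseteq Q$ of a given codimension: the numerator is at most $2\,{\sf SD}(\mc D_Q,\mc D'_Q)$ (since $|\mc D_Q(E_W)-\mc D'_Q(E_W)|\le {\sf SD}(\mc D_Q,\mc D'_Q)$), and the denominator $\mc D_Q(E_W)$ depends only on $\codim(W)$. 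No union bound over $W$ is needed, and your proposed ``absorb the worst-case $W$ into the randomness of $Q$'' step is not what is happening. The Markov step is genuinely over $Q$: one first shows $\E_Q[{\sf SD}(\mc D_Q,\mc D'_Q)]\lesssim \beta\sqrt J\cdot 2^{2\ell}$ (this is the analogue of \cref{lm: covering advice} with $2\ell$ vectors), and Markov at threshold $\sqrt\beta J^{1/4}\cdot 2^{2\ell+5}$ then gives the claimed $(1-\sqrt\beta J^{1/4})$ fraction of good $Q$. After that, the $W$-uniform bound follows for every good $Q$ simultaneously.
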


% \begin{lemma} \label{lm: smooth Q}
%     For any $\Lcal \subseteq (\Ff_2^{U})^{2\ell}$, we have
%     \[
%     \Pr_{Q}\left[\D'_Q(\Lcal) \geq 0.8 \cdot \D_Q(\Lcal) - \eta^{20} \right] \geq 1 - 2\eta^{20},
%     \]
%     where $Q$ in the probability above is the span of $r_1$ uniformly random vectors in $\Ff_2^U$.
% \end{lemma}
% \begin{proof}
%     See~\cite[Lemma 5.5]{MZ24}.
% \end{proof}
%  Finally, we also need the distributions $\mc{D}_r$ and $\mc{D}'_r$, for some parameter $r$. These distributions are simply the distributions of the first $r$ vectors, $x_1, \ldots, x_r$ of $\mc{D}_Q$ and $\mc{D'}_Q$ respectively in the case $Q = \{0\}$. In other words, $\mc{D}_r$ is obtained by sampling $(x_1, \ldots, x_r) \in \Ff_2^{U \times r}$ uniformly at random, while $\mc{D}'_r$ is sampled by lifting $r$ vectors from $\Ff_2^V$ to $\Ff_2^U$ and adding $w_1, \ldots, w_r$ as in $\mc{D}'_Q$, and finally ignoring the conditioning in the last step. The point of these distributions is that they are how the advice is sampled, and the next lemma asserts that, for our purposes, they are the same up to negligible constant factors.

% \begin{lemma} \label{lm: covering zoom-in} 
% Let $\mathcal{Q}$ be a set of $r_1$-dimensional subspaces in $\Ff_2^U$ satisfying $\D_{r_1}(\mathcal{Q}) \geq 2^{-10\ell^{10}}$. Then, 
% \[
% \D'_{r_1}(\mathcal{Q}) \geq 0.8 \cdot \D_{r_1}(\mc{Q}).
% \]
% \end{lemma}
% \begin{proof}
% See~\cite[Lemma 5.6]{MZ24}
% \end{proof}
\paragraph{A Bound on the Number of Decoded Functions} 
In \cref{th: consistent with side}, we showed a decoding theorem that says if the inner test passes with sufficient probability, then for a significant fraction of low dimension zoom-ins $Q$, there is a low codimension zoom-out $W$ such that $T_1$ agrees non-trivially with a linear function inside $\Zoom[Q, W]$. Roughly speaking, showing soundness for the composed PCP requires the provers to agree on the same linear function with non-trivial property, and this in turn requires them to agree on the same zoom-in $Q$ and same zoom-out $W$.  
Since the decoding theorem holds for a significant fraction of low dimension zoom-ins $Q$, the provers can agree on a shared zoom-in with reasonable probability using their advice. The same cannot be said for zoom-outs however, and to circumvent this issue we need another tool. 

In this section, we present the notion of \emph{maximal} zoom-outs and the result from \cite{MZ24} which shows that the number of maximal zoom-outs relative to any fixed zoom-in is bounded.

\begin{definition} \label{def: maximal}
    Given a table $T$ on $\Grass(\Ff_2^V, 2\ell)$, a subspace $Q \subseteq \Ff_2^V$, a zoom-out $W \supseteq Q$, and a linear function $f: W \to \Ff_2$, we say that the zoom-out, function pair $(W, g_W)$ is $(C, s)$-maximal with respect to $T$ on $Q$ if the following holds. 
    \begin{itemize}
        \item 
        %$g|_W$ agrees with at least $C$-fraction of $T$'s entries in $\Zoom[Q, W]$,
        $\Pr_{L \in \Zoom[Q,W]}[g_W|_L \equiv T[L] ] \geq C$.
    \item  There does not exist another zoom-out function pair, $(W', g_{W'})$ such that $\Ff_2^V \supseteq W' \supsetneq W$, $g_{W'}: W' \xrightarrow[]{} \Ff_2$, $g_{W'}|_{W}\equiv g|_{W}$ and
    $\Pr_{L \in \Zoom[Q,W']}[g_{W'}|_L \equiv T[L]] \geq sC$.
    \end{itemize}
    In the case that $Q = \{0\}$, we simply say that $(W, g_W)$ is $(C, s)$-maximal with respect to $T$. We say that the codimension of the zoom-out, function pair $(W, g_W)$ is $\codim(W)$.
\end{definition}
In the above statement, $C$ should be thought of as small and 
$s$ should be thought of as an absolute constant. 
With this in mind, a zoom-out $W$ and a linear function on it 
$g_W$ is called maximal if there is no zoom-out $W'$
that strictly contains $W$, and an extension of $g_W$ 
to $g_{W'}$, that has the same agreement
with $T$ as $g_W$ (up to constant factors). Intuitively, a zoom-out, function pair is ``maximal'' if its agreement with the table $T$ is not explained by any larger zoom-out function pair. In \cite{MZ24}, it is shown that the number of maximal zoom-out function pairs is bounded with respect to every zoom-out. 

% To present this lemma, set the following constants, which should all be considered constant:
% \[
% \xi = \delta^5, \quad \delta_2 = \xi/100, \quad t = \left(2^{2+10/\delta_2}\right)!.
% \]

% \begin{lemma} \label{lm: zoom out contained in maximal}
%    Let $T$ be a table on $\Grass(\Ff_2^V, 2\ell)$, $Q \subseteq \Ff_2^V$, and $W \subseteq \Ff_2^V$ be a subspace of codimension $r$ containing $Q$. Suppose that there exists a linear function $g_{W}: W \xrightarrow[]{} \Ff_2$ such that 
%     \[
%     \Pr_{L \in \Grass(\Ff_2^V, 2\ell)}[g_{W}|_L \equiv T[L] \; | \; Q \subseteq L \subseteq W] \geq C.
%     \]
% Then there exists a subspace $W' \supseteq W$ and a linear function $g_{W'}: W' \xrightarrow[]{} \Ff_2$ such that $g_{W'}|_{W} \equiv g_W$ and $(g_{W'}, W')$ is $(C s^{-r}, s)$-maximal and a linear function $g_{W'}: W' \xrightarrow[]{} \Ff_2$ such that 
%     \[
%     \Pr_{L \in \Grass(\Ff_2^V, 2\ell)}[g_{W'}|_L \equiv T[L] \; | \; Q \subseteq L \subseteq W] \geq C.
%     \]
% \end{lemma}
% \begin{proof}
%  See Lemma 5.18 of \cite{MZ24}.
% \end{proof}
\begin{thm} \label{th: bounded zoom-out with zoom-in}
For any $\delta > 0$ there is sufficiently large $\ell$ such that the following holds.  For any table $T$ on $\Grass(\Ff_2^V, 2\ell)$ satisfying $|V| \geq 2^{\ell}$, any subspace $Q \subseteq \Ff_2^V$ of dimension $r_1 \leq 10J/\delta$ and any $C \geq 2^{-2(1-\delta^3)\ell}$, the number of $(C,\frac{1}{5})$-maximal zoom-out, function pairs of codimension at most $10k/\delta$ with respect to $T$ on $Q$ is at most $C^{-2} \cdot  2^{O_{k,\delta}(\ell )}$.
\end{thm}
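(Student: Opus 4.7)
The plan is to adapt the list-decoding second-moment argument that establishes the $Q = \{0\}$ analog of this statement in \cite{MZ24} to the present setting with a general zoom-in $Q$. Let $\{(W_i, g_i)\}_{i=1}^{N}$ enumerate the $(C, 1/5)$-maximal zoom-out/function pairs with respect to $T$ on $Q$, each of codimension at most $r^{**} := 10k/\delta$. For a uniformly random $L \in \Grass(V, 2\ell)$ conditioned on $L \supseteq Q$, let
\[
X_L := |\{ i : L \subseteq W_i, \; g_i|_L \equiv T[L]\}|.
\]
Combining the first maximality condition (agreement $\geq C$ on $\Zoom[Q, W_i]$) with a standard Grassmann counting estimate using $\codim(W_i) \leq r^{**}$ and $\dim(V)$ much larger than $\ell$, one has $\E[X_L] \geq N \cdot C \cdot 2^{-O_{k,\delta}(\ell)}$.

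The heart of the argument is an upper bound on the second moment $\E[X_L^2] = \sum_{i,j} p_{ij}$, where $p_{ij} := \Pr[L \in \Zoom[Q, W_i \cap W_j],\; g_i|_L = g_j|_L = T[L]]$. The diagonal contributes $\E[X_L]$. For the off-diagonal terms $i \neq j$, the key structural dichotomy is the following. If $g_i|_{W_i \cap W_j} \neq g_j|_{W_i \cap W_j}$, then the event $g_i|_L = g_j|_L$ forces $L$ into the kernel of $g_i - g_j$, which is a proper subspace of $W_i \cap W_j$, making $p_{ij}$ small by Grassmann counting. Otherwise, $g_i$ and $g_j$ agree on $W_i \cap W_j$ and glue to a common linear function $g_{ij}$ on $W_i + W_j$. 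Since $(W_i, g_i) \neq (W_j, g_j)$, at least one of the inclusions $W_i + W_j \supsetneq W_i$ or $W_i + W_j \supsetneq W_j$ is strict, so the second part of maximality applied to the strictly smaller pair gives $\Pr[g_{ij}|_L \equiv T[L] \mid L \in \Zoom[Q, W_i + W_j]] < C/5$. As the event defining $p_{ij}$ implies $g_{ij}|_L = T[L]$ and $L \in \Zoom[Q, W_i + W_j]$, this yields $p_{ij} < (C/5) \cdot \Pr[L \in \Zoom[Q, W_i + W_j] \mid L \supseteq Q]$.

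To conclude, one sums the two cases using the Grassmann dimension identity $\codim(W_i + W_j) + \codim(W_i \cap W_j) = \codim(W_i) + \codim(W_j)$, which relates the four relevant zoom-out probabilities, and then applies Cauchy--Schwarz in the form $(\E[X_L])^2 \leq \E[X_L^2]$ to obtain the desired bound $N \leq C^{-2} \cdot 2^{O_{k,\delta}(\ell)}$. The main obstacle---and the reason the hypothesis $C \geq 2^{-2(1 - \delta^3)\ell}$ enters---is to ensure that the diagonal noise term $\E[X_L]$ and the various Grassmann counting error terms are all dominated by the main off-diagonal contribution, so that squaring through Cauchy--Schwarz produces the sharp $C^{-2}$ dependence rather than a weaker $C^{-1}$ bound; a naive off-diagonal accounting in either direction loses a factor of $C$ in the exponent, so the two bounds on $p_{ij}$ (kernel argument versus maximality) must be combined carefully with the dimension identity to extract the right power of $C$.
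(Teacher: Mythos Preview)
The paper's own proof is a one-line black-box citation: it invokes \cite[Theorem 5.26]{MZ24} with the parameter $\delta$ there replaced by $\delta/k$. In particular, the referenced result in \cite{MZ24} already treats a general zoom-in $Q$, so no adaptation from the $Q=\{0\}$ case is required; the only change is a parameter substitution that converts the codimension bound $10/\delta$ there into $10k/\delta$ here. Your proposal instead attempts to reconstruct the underlying argument from scratch.

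Your outline is the right shape for such a list-decoding bound---the random-variable $X_L$, the diagonal/off-diagonal split, the kernel argument when $g_i$ and $g_j$ disagree on $W_i\cap W_j$, and the use of maximality on $W_i+W_j$ when they agree are all the correct ingredients. However, the final paragraph is where the real content lies, and your sketch does not actually close the argument. Concretely: from Case~2 you get $p_{ij} < (C/5)\,q_{W_i+W_j}$ and trivially $p_{ij}\le q_{W_i\cap W_j}$, and the dimension identity gives $q_{W_i+W_j}\,q_{W_i\cap W_j}\approx q_{W_i}q_{W_j}$, but neither bound alone, nor their naive product, yields the summable estimate $\sum_{i\neq j}p_{ij}\lesssim (\text{something})\cdot\bigl(\sum_i q_{W_i}\bigr)^2$ needed to extract $N\le C^{-2}\cdot 2^{O_{k,\delta}(\ell)}$ after Cauchy--Schwarz. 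You acknowledge this (``must be combined carefully''), but the careful combination---stratifying by $\codim(W_i+W_j)$, controlling how many $j$ can share a given $W_i+W_j$ with $i$, and balancing the two cases---is precisely the non-obvious part of the \cite{MZ24} proof, and your proposal stops short of it. As written, the sketch would only recover a weaker bound (e.g.\ $C^{-1}\cdot 2^{O(\ell)}$ or $C^{-2}\cdot 2^{O(\ell^2)}$) without that missing step.
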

\begin{proof}
We apply \cite[Theorem 5.26]{MZ24} with the parameter $\delta$ there set to $\delta/k$.
\end{proof} 

\paragraph{An Auxiliary Lemma} The last tool we present is an auxiliary lemma from \cite{KMS}. We restate the version from \cite{dor_thesis} below.
\begin{lemma} \label{lm: retain codim}
    Let $U$ be a fixed question to the first prover in the Outer PCP consisting of $3J$-variables in some set of $J$ equations. Let $V \subseteq U$ be a random question to the second prover chosen according to the Outer PCP. Let $W \subseteq \Ff_2^U$ be a subspace of codimension $s$. Then, with probability at least $1 - 2^{s+3} \beta^2 J$ over the choice of the question $V$, we have
    $\dim(W \cap \Ff_2^V) = |V|-s$.
\end{lemma}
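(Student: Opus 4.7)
The plan is to recast the event in the lemma as the injectivity of a restriction map on the annihilator $W^{\perp}$, and then to bound the probability of non-injectivity via a union bound that carefully exploits the disjointness of the variable sets of the equations $e_1,\ldots,e_J$.

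First, I would observe that for any $V \subseteq U$, $\dim(W \cap \Ff_2^V) = |V| - \rank(\phi_V)$, where $\phi_V\colon W^{\perp} \to \Ff_2^V$ is the restriction map $w \mapsto w|_V$, and that $\rank(\phi_V) = s$ exactly when $\phi_V$ is injective. Hence the failure event---where $\dim(W \cap \Ff_2^V) > |V| - s$---occurs if and only if there exists a nonzero $w \in W^{\perp}$ with $w|_V = 0$, equivalently with $\supp(w) \subseteq U \setminus V$. The task is therefore to upper bound the probability that such a ``bad'' $w$ exists.

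Next, I would compute the per-vector probability. Since the equations in $U$ have pairwise disjoint variable sets $U_1,\ldots,U_J$ (a property guaranteed by the assumption on $U$), the sets $V_1,\ldots,V_J$ are mutually independent, and for fixed $w$ the event $\supp(w) \subseteq U \setminus V$ factors over $i$ as the intersection of $\{\supp(w) \cap U_i \subseteq U_i \setminus V_i\}$. Writing $n_i = |\supp(w) \cap U_i|$, each factor has probability $1$ if $n_i=0$, at most $\beta$ if $n_i \in \{1,2\}$ (we need $V_i$ to be a singleton disjoint from $\supp(w)$, contributing $\beta(3-n_i)/3$), and $0$ if $n_i=3$. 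Consequently $\Pr[\supp(w) \subseteq U \setminus V] \leq \beta^{m(w)}$ where $m(w) := |\{i : 1 \leq n_i \leq 2\}|$; in particular this is at most $\beta^2$ whenever $\supp(w)$ meets two or more equations nontrivially.

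Finally, I would apply a union bound over $w \in W^{\perp} \setminus \{0\}$, splitting into \emph{multi-equation} vectors ($m(w) \geq 2$) and \emph{single-equation} vectors ($\supp(w) \subseteq U_i$ for some unique $i$, so $m(w)=1$). The multi-equation contribution is at most $|W^{\perp}| \cdot \beta^2 = 2^s \beta^2$. For the single-equation contribution I would use the key observation that the subspaces $\Ff_2^{U_i} \cap W^{\perp}$ live in pairwise disjoint coordinate blocks of $\Ff_2^U$, hence are in direct sum, so $\sum_i \dim(\Ff_2^{U_i} \cap W^{\perp}) \leq s$. A simple convexity inequality then yields $\sum_i (2^{\dim(\Ff_2^{U_i} \cap W^{\perp})} - 1) \leq 2^s - 1$, bounding the total number of single-equation candidates by $2^s$, each contributing at most $\beta$. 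Combining gives $\Pr[\text{bad}] \leq 2^s\beta^2 + 2^s\beta \leq 2^{s+1}\beta \leq 2^{s+3}\beta^2 J$ using $\beta J = \log\log J \geq 1/4$. The main subtlety is precisely this direct-sum observation: a naive union bound would count $\Omega(J)$ single-equation vectors per equation and produce a useless $O(2^s \beta J) = \Omega(\log\log J)$ bound, and it is the disjointness of equations in $U$---enforced when trimming $\mathcal{U}$ in Section 3.3---that recovers the needed $\beta$-improvement.
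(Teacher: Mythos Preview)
Your argument is correct. The paper itself does not supply a proof of this lemma but defers to \cite[Lemma 3.16]{dor_thesis}, so there is no in-paper proof to compare against; your approach---reducing to injectivity of the restriction $W^{\perp}\to\Ff_2^V$ and then union-bounding over $W^{\perp}\setminus\{0\}$ with a single-equation/multi-equation split---is the standard one, and the direct-sum observation you highlight is precisely the point that rescues the single-equation case.

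Two minor remarks. First, your parenthetical ``($m(w)\geq 2$)'' for the multi-equation case is slightly imprecise: a vector meeting two equations can still have $m(w)=1$ if one of the $n_i$ equals $3$, but such vectors have probability $0$ anyway, so the bound $\leq\beta^2$ you claim for all multi-equation vectors is still valid. With ``multi-equation'' read as ``$\supp(w)$ meets at least two $U_i$'' (as your earlier sentence suggests), the two cases genuinely partition $W^{\perp}\setminus\{0\}$. Second, the bound you actually prove, $2^{s+1}\beta$, is strictly stronger than the stated $2^{s+3}\beta^2 J$ under the paper's parameter choice $\beta J=\log\log J\geq 1/4$; this is fine and only means the lemma as stated is a little loose.
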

\begin{proof}
See~\cite[Lemma 3.16]{dor_thesis}.
\end{proof}
\subsection{Proof of \cref{thm: main}}
We are now ready to analyze our $(k+1)$-query PCP, and show the completeness and soundness claims of \cref{thm: main}. We remind the reader that our final composed PCP, $\Psi_{k+1}$, is described in~\cref{sec:pcp_construct}. 
\subsection{Completeness}
Suppose that the $\Lin$ instance $(X, \Eq)$ has an assignment $\sigma: X \xrightarrow[]{} \Ff_2$ that satisfies at least $1-\epsilon_1$ of the equations in $\Eq$. Let $\Us \subseteq \U$ be the set of all $U = (e_1, \ldots, e_J)$ where all $J$ equations $e_1, \ldots, e_J$ are satisfied. Then, $|\Us| \geq (1- J\epsilon_1) \U$. We identify $\sigma$ with the linear function from $\Ff_2^{X}\to\Ff_2$, assigning the value $\sigma(i)$ to the $i$th elementary basis element $e_i$. Abusing notation, we denote this linear map by $\sigma$ 
as well.

For each $U \in \Us$ and vertex $L \oplus H_U$, we set $T_1[L \oplus H_U] \equiv \sigma|_{L \oplus H_U}$. Since $U \in \Us$, these assignments satisfy the side conditions. For all other $U$'s, set $T_1[L \oplus H_U]$ so that the side conditions of $H_U$ are satisfied and $T_1[L \oplus H_U]|_{L} \equiv \sigma|_L$. Such an assignment is possible because $L \cap H_U = \{0\}$. Similarly,
the table $T_2$ is defined as $T_2[R] \equiv \sigma|_{R}$.

Sampling a constraint, note that the constraint is satisfied whenever the $L' \oplus H_{U'}$ chosen in step $3$ of the test satisfies that $U' \in \Us$. As the marginal distribution of $L' \oplus H_{U'}$ is uniform,\footnote{This is true because first a clique is chosen with probability that is proportional to its size and then a vertex is sampled uniformly from the clique.} the distribution of $U'$ is uniform. It follows that the constraint
is satisfied whenever $U'\in \mathcal{U}_{{\sf sat}}$, which 
happens with probability at least $1-J\eps_1$. Thus, 
${\sf val}(\Psi_{k+1})\geq 1-J\eps_1$.
\subsection{Soundness} \label{sec: PCP soundness proof}
Here, we relate the soundness of the composed PCP to that of the outer PCP. More precisely, we show:
\begin{lemma} \label{lm: soundness outer to inner}
 For all $\delta>0$ there exists a constants $r\in\mathbb{N}$ and $c(\delta, k) > 0$
 such that the following holds.
 Let $G_{\beta, r}^{\otimes J}$ be the parallel repetition of the Smooth Variable versus Equation Game with advice described in~\cref{sec: final outer pcp}, and let $\Psi_{k+1}$ be the $(k+1)$-CSP described in~\cref{sec:pcp_construct} constructed using sufficiently large $\ell$. If $\val(G_{\beta, r}^{\otimes J}) < 2^{-c(\delta, k) \cdot \ell^2}$, then $\val(\Psi_{k+1}) \leq 2^{-2(1-1000\delta)\ell k}$. 
 %where $c(\delta)$ is some constant depending only on the value $\delta$ used in defining $\Psi_{k+1}$.
\end{lemma}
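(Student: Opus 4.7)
The plan is to argue the contrapositive: starting from tables $T_1, T_2$ witnessing $\val(\Psi_{k+1}) \geq \eps := 2^{-2(1-1000\delta)\ell k}$, I will build prover strategies for $G_{\beta,r}^{\otimes J}$ achieving value at least $2^{-c(\delta,k)\ell^2}$, with $r = r^\star := 10k/\delta$. Averaging over the outer question $U$ in the constraint from \cref{sec: constraint graph}, at least $\eps/2$ fraction of $U\in\U$ (call these \emph{lucky}) have the property that the $(k+1)$-query Grassmann consistency test of \cref{sec: grassmann test} on the restricted tables $T_1|_U, T_2|_U$ succeeds with probability at least $\eps/2 \geq 2^{-2(1-1000\delta)\ell k}$.

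For each lucky $U$, apply \cref{th: consistent with side} to extract parameters $r_1^U + r_2^U \leq r^\star$ and a family $\Q_U$ of $r_1^U$-dimensional subspaces $Q \subseteq \Ff_2^U$ of density at least $2^{-6\ell^2}$, each admitting some $W \supseteq Q \oplus H_U$ of codimension $r_2^U$ and a linear $g_{Q,W}^{(U)}\colon W\to \Ff_2$ respecting the side conditions, with agreement at least $C := 2^{-2(1-1000\delta^2)\ell}/5$ against $T_1|_U$ on $\Zoom[Q,W]$. By replacing each $(W,g_{Q,W}^{(U)})$ by a $(C,1/5)$-maximal extension (per \cref{def: maximal}), I may invoke \cref{th: bounded zoom-out with zoom-in} to bound the number of such maximal pairs for each $Q$ by $N := C^{-2}\cdot 2^{O_{k,\delta}(\ell)}$.

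I now specify the prover strategies. The first prover, on question $U$ with advice $u_1,\ldots,u_{r^\star}$, builds $Q_U := \mathrm{span}(u_1,\ldots,u_{r^\star}) \subseteq \Ff_2^U$, guesses a split $r_1+r_2 \leq r^\star$ (paying an $(r^\star)^2$ factor), projects $Q_U$ onto a canonical $r_1$-dimensional subspace $Q$, and---if $Q\in\Q_U$---samples one $(W, g_{Q,W}^{(U)})$ uniformly from the at most $N$ maximal pairs. The answer to $U$ is any assignment consistent with $g_{Q,W}^{(U)}$; this automatically satisfies the equations $e_1,\ldots,e_J$ because $g_{Q,W}^{(U)}$ respects the side conditions on $H_U$. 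The second prover is symmetric, forming $Q_V := \mathrm{span}(v_1,\ldots,v_{r^\star})\subseteq \Ff_2^V$ and decoding against $T_2$ in an analogous maximal fashion. By \cref{lm: covering advice}, both advice subspaces, after lifting to $\Ff_2^U$, are within negligible statistical distance of uniform on $r^\star$-dim subspaces of $\Ff_2^U$, so with probability essentially $2^{-6\ell^2}$ the first prover's projected $Q$ lies in $\Q_U$ for a lucky $U$.

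The main obstacle, and the step I expect to be most delicate, is ensuring that the independently chosen maximal pairs $(W, g)$ on the two sides actually coincide on $\Ff_2^V$. Here \cref{lm: retain codim} ensures that the codimension-$r_2$ zoom-out $W\subseteq \Ff_2^U$ intersects $\Ff_2^V$ in a codimension-$r_2$ subspace with probability $1-o(1)$, and \cref{lm: covering} converts agreement of $g_{Q,W}^{(U)}$ with $T_1$ under $\D_{Q,W}$ into agreement of its restriction with $T_2$ under $\D'_{Q,W}$---hence the restricted function is itself an $\Omega(C)$-agreeing, $(1/5)$-maximal candidate for the second prover's decoding on $(Q, W\cap \Ff_2^V)$. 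Invoking the bound $N$ from \cref{th: bounded zoom-out with zoom-in} on both sides, the two provers pick the \emph{same} maximal pair with probability at least $1/N^2 \geq 2^{-O_{k,\delta}(\ell)}$. Multiplying all contributions---lucky $U$, decodable $Q$, matching zoom-outs, and per-test agreement $C$---yields a total success probability of at least $2^{-c(\delta,k)\ell^2}$, contradicting the hypothesis on $\val(G_{\beta,r}^{\otimes J})$ and completing the argument. This mirrors the composition framework of~\cite{MZ24}; the only new input is the improved inner soundness $C$ from \cref{th: consistent with side}, which is precisely what boosts the overall soundness exponent from $2(1-1000\delta)\ell$ to $2(1-1000\delta)\ell k$.
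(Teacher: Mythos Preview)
Your overall framework---contrapositive, averaging over $U$, applying \cref{th: consistent with side}, and then the covering and maximal-zoom-out machinery---matches the paper. However, there is a genuine gap in the second prover's strategy.

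You have the second prover ``decode against $T_2$'', and you assert that \cref{lm: covering} ``converts agreement of $g_{Q,W}^{(U)}$ with $T_1$ under $\D_{Q,W}$ into agreement of its restriction with $T_2$ under $\D'_{Q,W}$.'' This is incorrect: both $\D_{Q,W}$ and $\D'_{Q,W}$ are distributions over $2\ell$-dimensional subspaces $L$, so the covering lemma only tells you that $g$ still agrees with $T_1[L\oplus H_U]$ for many $L\subseteq \Ff_2^V$. It says nothing about the table $T_2$, which is indexed by $2(1-\delta)\ell$-dimensional subspaces. Consequently there is no reason the first prover's $g$ should appear among the second prover's maximal pairs computed over $T_2$, and your ``match with probability $\geq 1/N^2$'' step has no foundation.

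The paper closes this gap via a step you omitted: the clique-consistency reduction (\cref{lem:assume_clique_consistent}). Once $T_1$ is clique-consistent, the second prover---who knows only $V$, not $U$---can form the auxiliary table $\widetilde{T}_1[L] := T_1[L\oplus H_U]|_L$ for $L\subseteq \Ff_2^V$, which is well-defined independent of the choice of $U\supseteq V$ precisely because of clique-consistency. She then selects a maximal pair with respect to $\widetilde{T}_1$ (not $T_2$) on $Q$. Now the covering lemma legitimately transfers the first prover's agreement with $T_1$ to agreement with $\widetilde{T}_1$ inside $\Ff_2^V$, so his restricted function $g_{Q,W_Q}|_{W_Q\cap \Ff_2^V}$ is dominated by one of the second prover's at most $N$ maximal candidates, giving the $1/N$ match probability. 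Note also that this same clique-consistency step is what lets you equate $\eps$-consistency of $\Psi_{k+1}$---whose constraints query a random clique representative $L'_i\oplus H_{U'_i}$, not $L_i\oplus H_U$---with the Grassmann test on $(T_1|_U, T_2|_U)$ in your first paragraph.
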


We fix $r$ as in \cref{lm: soundness outer to inner} for the remainder of the section. With the completeness already established, \cref{lm: soundness outer to inner} finishes the proof of \cref{thm: main} by showing the soundness claim. The rest of the section is devoted to the proof of~\cref{lm: soundness outer to inner} and makes heavy use of the tools from~\cref{sec:tools_for_pcp_analysis}. 

Towards the proof, assume that $\val(G_{\beta,r}^{\otimes J}) < 2^{- c(\delta, k) \cdot \ell^2}$, and suppose for the sake of contradiction that 
there are tables $T_1$ and $T_2$ that are $\epsilon$-consistent for $\epsilon \geq 2^{-2\ell(1-1000\delta)}$. To arrive at a contradiction, we will show that this implies strategies in the outer game, $G_{\beta,r}^{\otimes J}$, for the two provers that with success probability greater than $2^{- c(\delta, k) \cdot \ell^2}$. 
%In more detail, the two provers will use the $\epsilon$-consistent tables $T_1$ and $T_2$ to derive strategies which pass $G_{\beta,r}^{\otimes J}$ with probability at least $2^{-c(\delta)\cdot O(\ell)^2}$. Since this contradicts our starting assumption on the value of $G_{\beta,r}^{\otimes J}$, it must be the case that $T_1$ and $T_2$ cannot be $\epsilon$-consistent, and \cref{lm: soundness outer to inner} is shown.

\subsubsection{Clique Consistency}
To start, we first reduce to the case where $T_1$ satisfies a condition called \emph{clique-consistency}, as this property will be convenient to work with.

\begin{definition}
We say an assignment $T$ to $\mathcal{A}$ is clique-consistent 
if for every vertex $L_1 \oplus H_{U_1}$ and for every 
$L_2\oplus H_{U_2},L_3\oplus H_{U_3}\in [L_1\oplus H_{U_1}]$, 
the assignments $T[L_2\oplus H_{U_2}]$ and $T[L_3\oplus H_{U_3}]$
satisfy the $1$-to-$1$ constraint between $L_2\oplus H_{U_2}$
and $L_3\oplus H_{U_3}$ as specified in~\cref{lm: clique extension}.
\end{definition}

The following lemma shows that if 
$T_1$ and $T_2$ are $\eps$-consistent assignments to 
$\Psi_{k+1}$, then there is a clique-consistent assignment 
$\widetilde{T}_1$ such that $\widetilde{T}_1$ and $T_2$ are $\eps'$-consistent for $\eps'$ which is only negligibly smaller than $\eps$.
\begin{lemma}\label{lem:assume_clique_consistent}
    Suppose that the assignments $T_1$ and $T_2 $ are $\epsilon$-consistent, then there is a clique-consistent assignment $\widetilde{T}_1$ such that $\widetilde{T}_1$ and $T_2$ are $\left(\epsilon - 2^{-J}\right)$-consistent. 
\end{lemma}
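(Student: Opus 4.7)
The plan is to define $\widetilde{T}_1$ through a random clique-representative construction. For each equivalence class $\mathcal{C}$ of $\mathcal{A}$ under $\sim$, pick a uniformly random $\mathrm{rep}(\mathcal{C}) \in \mathcal{C}$, independently across cliques, and for each vertex $L \oplus H_U \in \mathcal{C}$ set $\widetilde{T}_1[L \oplus H_U]$ to be the unique function on $L \oplus H_U$ obtained from $T_1[\mathrm{rep}(\mathcal{C})]$ via \cref{lm: clique extension}. By construction $\widetilde{T}_1$ is clique-consistent, so it suffices to show that the expected acceptance probability of $(\widetilde{T}_1, T_2)$ over the random representatives is at least $\epsilon - 2^{-J}$; any deterministic assignment of representatives attaining this expectation then yields the desired $\widetilde{T}_1$.

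Two observations drive the analysis. First, since $\widetilde{T}_1$ is clique-consistent, the internal randomness $L'_i \oplus H_{U'_i}$ used in step 4 of the constraint-generating process is vacuous: for any choice $L'_i \oplus H_{U'_i} \in [L_i \oplus H_U]$, the extension of $\widetilde{T}_1[L'_i \oplus H_{U'_i}]$ restricted back to $L_i \oplus H_U$ equals $\widetilde{T}_1[L_i \oplus H_U]$. Hence the acceptance event with $\widetilde{T}_1$ depends only on $U, R, L_1, \ldots, L_k$ and on the representatives. Second, couple $\mathrm{rep}([L_i \oplus H_U])$ with the test's $L'_i \oplus H_{U'_i}$: both are uniform on the same clique. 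When the $k$ cliques $[L_1 \oplus H_U], \ldots, [L_k \oplus H_U]$ are pairwise distinct---call this event $\mathcal{G}$---the $k$ representatives are mutually independent and their joint distribution matches that of $(L'_i \oplus H_{U'_i})_i$ in the original test. On $\mathcal{G}$ the derived $\widetilde{T}_1[L_i \oplus H_U]$ therefore coincides with $\Tilde{T}_1[L_i \oplus H_U]$ for every $i$, so the two acceptance events agree. This yields
\[
\E_{\mathrm{rep}}\left[\Pr_{\text{test}}[(\widetilde{T}_1, T_2)\text{ accept}]\right] \geq \Pr_{\text{test}}[(T_1, T_2)\text{ accept and } \mathcal{G}] \geq \epsilon - \Pr[\bar{\mathcal{G}}].
\]

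It remains to bound $\Pr[\bar{\mathcal{G}}]$. Fix $U$ and an admissible $L_i$ from step 3. Inside the same $U$, the vertices $L' \oplus H_U$ lying in the clique $[L_i \oplus H_U]$ are exactly the $2\ell$-dimensional complements of $H_U$ in $L_i + H_U$ (a $(2\ell + J)$-dimensional space), a set of cardinality exactly $2^{2\ell J}$; imposing additionally $L' \supseteq R$ reduces this to $2^{2\delta\ell J}$ by an identical complement count inside $(L_i + H_U)/R$. The total number of admissible $L_j$'s sampled in step 3 is of order $2^{2\delta\ell(3J-2\ell)}$, so for a uniform $L_j$ the probability of landing in the clique of $L_i$ is at most $2^{-4\delta\ell J + O_{\delta}(\ell^2)}$. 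Union-bounding over the $\binom{k}{2}$ pairs and using $J = 2^{100\ell^2}$ with $\ell$ sufficiently large yields $\Pr[\bar{\mathcal{G}}] \ll 2^{-J}$, finishing the proof. The main obstacle is really just this clique-count estimate (a standard Gaussian-binomial computation); the rest is a clean coupling/averaging argument.
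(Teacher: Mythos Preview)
Your proof is correct and follows essentially the same approach as the paper's: define $\widetilde{T}_1$ via a uniformly random representative per clique, couple those representatives with the test's step-4 choices $L'_i \oplus H_{U'_i}$ on the event that the $k$ cliques are pairwise distinct, and union-bound the collision probability. Your clique-collision estimate (conditioning on $L_j \supseteq R$ to get $2^{-4\delta\ell J + O_\delta(\ell^2)}$) is in fact more careful than the paper's cruder bound $\qbin{J+2\ell}{2\ell}\big/\qbin{3J}{2\ell} \leq 2^{-2J}$, but both are far below $2^{-J}$ given $J = 2^{100\ell^2}$.
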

\begin{proof}
    Partition $\mathcal{A}$ into cliques, $\mathcal{A} = \textsf{Clique}_1 \sqcup \cdots \sqcup \textsf{Clique}_{m}$. For each $i$, choose a random $L \oplus H_U \in \Cl_i$ uniformly, and for every $L' \oplus H_{U'} \in \Cl_i$ assign $\widetilde{T}_1[L' \oplus H_{U'}]$ in the unique way that is consistent with $T_1[L \oplus H_U]$ and the side conditions of $U'$ as described in~\cref{lm: clique extension}. It is clear that $\widetilde{T}_1$ is clique-consistent, and we next analyze the expected fraction
    of constraints that $\widetilde{T}_1$ and $T_2$ satisfy.
    
      Let $E$ be the event that $L_1 \oplus H_U, \ldots, L_k \oplus H_U$ are all in different cliques when a constraint is sampled. Note that conditioned on $E$, we have that the pairs of tables $T_1, T_2$ and $\widetilde{T}_1, T_2$ have the same pass probability. To see this, we can think of $\widetilde{T}_1$ as being defined in a manner coupled with the sampling of a constraint, conditioned on $E$:
      \begin{itemize}
          \item Choose $U \in \mathcal{U}$ uniformly at random, and $R, L_1, \ldots, L_k \subseteq \Ff_2^U$ according to the constraint sampling procedure in \cref{sec: constraint graph}, conditioned on $E$. Say $L_i \oplus H_U$ is in the clique $\mathcal{C}_i$,
          \item For each $i \in [k]$, choose $L'_i \oplus H_{U'_i} \in \mathcal{C}_i$ uniformly at random,
          \item For each $i \in [k]$ assign all entries in $\widetilde{T}_1$ from the clique $\mathcal{C}_i$ according to $T_1[L'_i \oplus H_{U'_i}]$ as described above.
          \item Assign the remaining entries of $\widetilde{T}_1$ according to the original process described above.
      \end{itemize}
      Thus, letting $\eps'$ be the fraction of constraints that $\widetilde{T}_1, T_2$ satisfy, we get
      $\eps' \geq \eps - \Pr[E]$.
      To conclude, we show that the probability of the event $E$ is negligible. Suppose that $R$ is chosen. Then for any distinct $i, j \in [k]$ we have that the probability of $L_i \oplus H_U$ and $L_j \oplus H_U$ being in the same clique is 
      \[
      \Pr_{L_i, L_j}[L_i \oplus H_U \in [L_j \oplus H_U]] \leq \frac{\qbin{J+2\ell}{2\ell}}{\qbin{3J}{2\ell}} \leq 2^{-2J}.
      \]
    Union bounding over at most $\binom{k}{2}$ pairs of $i, j$, we get $\Pr[E] \leq k^2 2^{-2J} \leq 2^{-J}$.
\end{proof}
Applying~\cref{lem:assume_clique_consistent} we conclude that there are clique-consistent assignments to $\Psi_{k+1}$ that are $(\eps-2^{-J})$-consistent, and henceforth we assume that $T_1$ is clique-consistent to begin with. Also, for the sake of simplicity, we will ignore the $-2^{-J}$ term as it is negligible, and assume $T_1$ is $\eps$-consistent.

We remark that, in the notation of~\cref{sec: constraint graph}, the benefit of having a 
clique-consistent assignment is that the 
constraint that the verifier checks is equivalent to checking that $T_1[L_i\oplus H_U]|_{R} \equiv T_2[R]$ for $k$ subspaces $L_1, \ldots, L_k$. These checks correspond to a $(k+1)$-query test performed within
the space $\Ff_2^U$, making the soundness analysis of the Grassmann consistency test in \cref{th: consistent with side} applicable. 

\subsubsection{A Strategy for the First Prover}
Let $p(U)$ be the consistency of $T_1$ and $T_2$ conditioned on $U$ being the question to the first prover and 
let $\Ug = \{U\in\U~|~p(U)\geq \eps/2\}$.
As $E_U[p(U)] \geq \epsilon$, 
by an averaging argument we have
\begin{equation} \label{eq: good U}
    \Pr_{U}[U \in \Ug] \geq \frac{\eps}{2}.
\end{equation}

Let $U \in \U$ be the question to the first prover and let $Q$ be the advice. If $U \notin \Ug$, then the first prover gives up, so henceforth assume that $U \in \Ug$. For such $U$, the test of the inner PCP passes with probability at least $\frac{\epsilon}{2}$. Formally, this means that for good $U$,
\begin{equation} \label{eq: inner grassmann pass}   
\Pr_{\substack{R : \; R \cap H_U = \{0\} \\ L_1, \ldots, L_k: \; L_i \cap H_U = \{0\}, L_i \supseteq R}}[T_1[L_i \oplus H_U]|_R \equiv T_2[R], \; \forall i \in [k]] \geq\frac{\epsilon}{2}.
\end{equation}

In the probability above, $R \in \Grass(\Ff_2^U, 2(1-\delta)\ell)$ and each $L_i \in \Grass(\Ff_2^U, 2\ell)$.
Next, the first prover chooses an integer $0 \leq r_1 \leq r$ uniformly, and takes $Q$ to be the span of the first $r_1$ advice vectors. Note that~\eqref{eq: inner grassmann pass} is exactly the condition of our decoding theorem for the Grassmann consistency test, \cref{th: consistent with side}. Thus applying~\cref{th: consistent with side}, we get that there are $r_1, r_2$ satisfying $r_1 + r_2 \leq r$ such that for at least $2^{-6\ell^2}$ of the $Q \subseteq \Ff_2^U$ of dimension $r_1$, there exists $W_Q \subseteq \Ff_2^U$, of codimension $r_2$, containing $Q \oplus H_U$  along with a linear function $g_{Q,W_Q}: W_Q \xrightarrow[]{} \Ff_2$ that satisfies
\begin{equation} \label{eq: p1 consistency}
\Pr_{L \subseteq \Ff_2^U}[g_{Q,W_Q}|_{L \oplus H_U} \equiv T_1[L \oplus H_U] \; | \; Q \subseteq L \subseteq W_Q]  \geq \frac{2^{-2(1-1000\delta^2)\ell}}{5} := C.
\end{equation}
We fix this particular pair of dimension and codimension $r_1,r_2$ henceforth. With probability at least $\frac{1}{r+1}$, the first prover chooses this dimension $r_1$, such that $Q, W_Q,$ and $g_{Q, W_Q}$ that satisfy \eqref{eq: p1 consistency} exist. Call an $r_1$-dimensional subspace $Q$ lucky if there are $W_Q \supseteq Q \oplus H_U$ and $g_{Q,W_Q}$ satisfying \eqref{eq: p1 consistency} and let $\Ql$ be the set of all lucky $Q \subseteq \Ff_2^U$. 

For each $Q$ such that $Q \in \Ql$ and $Q \cap H_U = \{0\}$, the first prover chooses a $W_Q$ of codimension at most $r$ and a linear function $g_{Q,W_Q}: W_Q \xrightarrow[]{} \Ff_2$ that satisfies the side conditions on $H_U$ and such that the inequality in~\eqref{eq: p1 consistency} holds. The first prover then extends $g_{Q,W_Q}$ to 
a linear function $y\rightarrow \inner{s_{Q,U}}{y}$ on $\mathbb{F}_2^U$ randomly, and outputs $s_{Q,U}$ as their answer.

 \subsubsection{A Strategy for the Second Prover}
Let $V$ be the question to the second prover. The second prover will use a table $\widetilde{T}_1$ to derive their strategy. The table $\widetilde{T}_1$ is obtained from $T_1$ as follows. For a question $V$ to the second prover, let $U \supseteq V$ be an arbitrary question to the first prover. For all $2\ell$-dimensional subspaces $L \subseteq \Ff_2^V$, define
\[
\widetilde{T}_1[L] \equiv T_1[L \oplus H_U]|_L.
\]
In order to make sure that $\widetilde{T}_1$ is well defined, we note two things. First, the subspace $L \oplus H_U$ can be viewed as a subspace of $\Ff_2^U$ because each $L \subseteq \Ff_2^{V}$ can be ``lifted'' to a subspace of $\Ff_2^U$ by inserting $0$'s into the coordinates corresponding to $V \setminus U$. Second, note that the choice of $U \supseteq V$ does not matter when defining $\widetilde{T}_1[L]$. Indeed, for a fixed $L$, the vertices $L \oplus H_U$ over all $U \supseteq V$ are in the same clique. Since $T_1$ is clique-consistent, it does not matter which $U$ is chosen when defining $\widetilde{T}_1[L]$, as all choices lead to the same function $T_1[L \oplus H_U]|_L$. Therefore the second prover can construct the table $\widetilde{T}_1$.

After constructing $\widetilde{T}_1$, the second prover then chooses a dimension $0 \leq r' \leq r$ uniformly for the advice $Q$. With probability at least $\frac{1}{r+1}$ the second prover also chooses $r' = r_1$. The second prover then chooses a zoom-out function pair $(\Ws, g_{Q, \Ws})$ that is 
$\left(\frac{C}{8 \cdot 5^{r}}, \frac{1}{5} \right)$-maximal with respect to $\widetilde{T}_1$ on $Q$ if one exists (and gives up otherwise).
Finally, the second prover extends the function $g_{Q, \Ws}$ randomly to a linear function on $\Ff_2^{V}$ to arrive at their answer. The resulting function is linear and it is equal to the inner product function $y \xrightarrow[]{} \langle s_{Q,V}, y \rangle$ for some unique string $s_{Q,V} \in \Ff_2^{V}$. The second prover outputs $s_{Q,V}$ as their answer.

\subsubsection{The Success Probability of the Provers}
In order to be successful, a series of events must occur. We go through each one and state the probability that each occurs. At the end this yields a lower bound on the provers' success probability. Recall that $J$ and $\beta$ are set according to~\eqref{eq: pcp parameters} in \cref{sec: covering properties} so that \cref{lm: covering} holds.

First, the provers need $U \in \Ug$, which occurs with probability at least $\frac{\epsilon}{2}$ by~\eqref{eq: good U}. Assuming that this occurs, the provers then both need to choose $r_1$ for the dimension of their zoom-in, which happens with probability at least $\left(\frac{1}{r+1} \right)^2 \geq \frac{1}{4r^2}$. In this case, both provers receive advice $Q$ as the span of $r_1$ random vectors chosen according to the distribution $\D'_{r_1}$.

The provers then need the advice to satisfy: $Q \in \Ql$, $Q \cap H_U = \{0\}$, and $Q \in \Qs$. When analyzing the probability that these three events occur, we need to recall that the advice vectors are actually drawn according to distribution $\mathcal{D}'_{r_1}$, the distribution described in \cref{sec: covering properties}. Fortunately, it is sufficient for us to analyze the probabilities under the uniform $\mc{D}_{r_1}$ and then apply \cref{lm: covering advice}.  By~\cref{th: consistent with side}, the first item occurs with probability at least $2^{-6\ell^2}$. On the other hand the probability that the second item does not occur is at most $\sum_{i=0}^{r_1} \frac{2^{i}2^J}{2^{3J}} \leq 2^{r_1+1 - 2J}$, while the probability that the third item does not occur is at most $\beta \cdot J^{1/4}$ by~\cref{lm: covering}. Altogether we get that with probability at least 
\[
2^{-6\ell^2}
-2^{r_1+1-2J}
-\beta \cdot J^{1/4} 
\geq 2^{-7\ell^2}
\]
under $\mathcal{D}_{r_1}$, we have $Q \in \Ql$, $Q \cap H_U = \{0\}$, and $Q \in \Qs$. By~\cref{lm: covering} we have that $Q \in \Ql$, $Q \cap H_U = \{0\}$, and $Q \in \Qs$ with probability at least $2^{-8\ell^2}$ under $\mathcal{D}_{r_1}'$ - the distribution which the $Q$ is actually drawn from.

Now let us assume that $U \in \Ug$, and both provers receive an $r_1$-dimensional advice $Q$ such that $Q \in \Ql$, $Q \cap H_U = \{0\}$, and $Q \in \Qs$. The first prover chooses the function $g_{Q,W_Q}: W_Q \xrightarrow[]{} \Ff_2$. Write $\codim(W_Q) = r_2$. Since $Q \in \Ql$,  we have by \eqref{eq: p1 consistency} that
\begin{equation} \label{eq: first prover dist}  
 \Pr_{L \subseteq \Ff_2^U}[g_{Q, W_Q}|_L \equiv T_1[L \oplus H_U]|_L \; | \; Q \subseteq L \subseteq W_Q] \geq C
\end{equation}
Since $Q \in \Qs$, we have
\begin{equation} \label{eq: D'_Q ineq}   
\E_{V} \left[\Pr_{L \subseteq \Ff_2^V}[g_{Q, W_Q}|_{L} \equiv T_1[L \oplus H_U]|_L \; | \; Q \subseteq L \subseteq W_Q \cap \Ff_2^V] \right] \geq \frac{C}{4}.
\end{equation}
Here we are using \cref{lm: covering} and the fact that the distributions over $L$ in \eqref{eq: first prover dist} and \eqref{eq: D'_Q ineq} are $\D_{Q, W_Q}$ and $\D'_{Q, W_Q}$ respectively.
Defining $W_Q[V] := W_Q \cap \Ff_2^V$, an averaging argument yields that with probability at least $\frac{C}{8}$, we have 
\begin{equation} \label{eq: V consistent}
\Pr_{L \subseteq \Ff_2^V}[g_{Q, W_Q}|_{L} \equiv T_1[L \oplus H_U]|_L \; | \; Q \subseteq L \subseteq W_Q[V]] \geq \frac{C}{8}.
\end{equation}

In addition, by~\cref{lm: retain codim}, we have that with probability at least $1 - 2^{r_2+3}\beta^2 J$, the codimension of $W_Q[V]$ inside of $\Ff_2^V$ is $r_2$. We call such $V$ consistent. Our argument above shows that $V$ is consistent with probability at least $\frac{C}{8} - 2^{r_2+3}\beta^2 J \geq \frac{C}{9}$

Suppose $V$ is consistent. In this case, the first prover's function, $g_{Q, W_Q}$, restricted to $W_Q[V]$ is also a candidate function in the second prover's table, $\widetilde{T}_1$. Define this function as $g_{Q, W_Q[V]}: \Ff_2^V \to \Ff_2$, given by $g_{Q, W_Q[V]} := g_{Q, W_Q}|_{W_Q[V]}$. By definition of $\widetilde{T}_1$, if $L \subseteq \Ff_2^V$ and $g_{Q, W_Q}|_{L} \equiv T_1[L \oplus H_U]|_L$, then we also have $g_{Q, W_Q[V]}|_{L} \equiv \widetilde{T}_1[L]$, so by \eqref{eq: V consistent}
\[
\Pr_{L \subseteq \Ff_2^V}[g_{Q, W_Q[V]}|_{L} \equiv \widetilde{T}_1[L] \; | \; Q \subseteq L \subseteq W_Q[V]] \geq \frac{C}{8}.
\]

From the definition of maximal-zoom outs, this means that there exists some zoom-out function pair, $(W'_Q[V], g_{Q, W'_Q[V]})$, that is $\left(\frac{C}{8 \cdot 5^{r^{2}}}, \frac{1}{5} \right)$-maximal and satisfies the following three properties: $W'_Q[V] \supseteq W_Q[V]$,  $g_{Q, W'_Q[V]}: W'_Q[V] \xrightarrow[]{} \Ff_2$ is linear, and $g_{Q, W'_Q[V]}|_{W[V]} = g_Q|_{W_Q[V]}$.  By~\cref{th: bounded zoom-out with zoom-in}, the number of $\left(\frac{C}{8 \cdot 5^{r}}, \frac{1}{5}\right)$-maximal zoom-out function pairs containing $Q$ that the second prover chooses from is at most 
\[
M \leq C^{-2} 2^{O_{k,\delta}(\ell)} = 2^{O_{k,\delta}(\ell)}.
\]

Thus, the second prover chooses $(W'_Q[V], g_{Q, W'_Q[V]})$ with probability at least $\frac{1}{M}$. Finally, if the second prover chooses $(W'_Q[V], g_{Q, W'_Q[V]})$, then the provers succeed if both provers extend their functions, $g_Q|_{W[V]}$ and $g_{Q, W'_Q[V]}$ in the same manner. This occurs with probability at least $2^{-\codim(W[V])} \geq 2^{-r_2}$. 

Putting everything together, we get that the provers succeed with probability at least
\[
\frac{\epsilon}{2} \cdot \frac{1}{4r^{2}} \cdot 2^{-8\ell^2} \cdot \frac{C}{9} \cdot \frac{1}{M} \cdot 2^{-r_2} = 2^{-c(\delta, k)\cdot \ell^2},
\]
for some constant $c(\delta, k) > 0$. In the above, the first term is the probability that $U \in \Ug$, the second term is the probability that both provers choose the same zoom-in dimension, the third term is the probability that $Q \in \Ql$, $Q \cap H_U = \{0\}$, $Q \in \Qs$, the fourth term is the probability that $V$ is consistent, the fifth term is the probability that the second prover chooses the a function that extends the first prover's answer, and the final term is the probability that both provers extend their functions in the same manner. This completes the proof of~\cref{lm: soundness outer to inner}. \hfill\qedsymbol

\section{Proof of \cref{thm: k dim matching}}
 \cref{thm: k dim matching} follows by plugging a modified version of our improved $k$-CSP hardness result into the reduction of \cite{LST}. Their reduction requires a degree $R$-bounded, fully-regular, $k$-partite, $k$-CSP as a starting point though, and our construction does not immediately have these properties. However, we can transform our $k$-CSP into one with these properties without compromising the parameters much. In the remainder of the section we make our $k$-CSP construction $k$-partite, then partwise-regular, and finally fully-regular, all while maintaining similar completeness and soundness. 

The first step, is to convert our weighted $k$-CSP, from \cref{thm: basic k query} into an unweighted instance. This can be done as in~\cite[Theorem 22]{crescenzi2001weighted}.\footnote{The result there discusses the ratio between the completeness and soundness, but it is easy to see that their reduction roughly preserves both. See~\cite[Theorem 4, Claim 23, Claim 24]{crescenzi2001weighted}.}

 % which works as follows. Suppose $\Psi$ is a hard instance of $k$-partite, $k$-CSP with alphabet size $R$, meaning the variables can be partitioned into $k$-parts such that each constraint contains exactly one variable per part. In \cite{LST}, it is shown that such a $\Psi$ can be randomly sparsified into a degree-$R$ bounded, $k$-partite $k$-CSP with alphabet size $R$, while maintaining a gap between completeness and soundness. Here, degree $R$-bounded means that each variable appears in at most $R$ constraints. This degree bounded CSP can thenbe  reduced to a $k$-Dimensional Matching instance with gap equal to that of the degree bounded CSP. Starting with our improved soundness in \cref{thm: main}, we end up with a better hardness of approximation factor for the final $k$-Dimensional Matching instance, but before we can apply the \cite{LST} reduction we need to transform our $k$-CSP in \cref{thm: main} to a $k$-partite one. This can be done generically while essentially keeping the same soundness as described below.

\subsection{Gaining $k$-partiteness}
A $k$-CSP is $k$-partite if its vertex set $V$ can be partitioned into parts $V_1, \ldots, V_k$ such that each constraint contains exactly one vertex in each part. In this case, we write the $k$-CSP as $\Psi = (V = V_1\cup \cdots \cup V_k, E, \Sigma_1, \ldots, \Sigma_k, \Phi = \{\Phi_e \}_{e \in E})$. Below we transform an arbitrary $k$-CSP into a $k$-partite $k$-CSP:
\begin{thm} \label{thm: k partite}
    For all $\eps, \delta > 0$ and arity $k$, and sufficiently large alphabet size $R$, it is NP-hard to distinguish between the following two cases given a $k$-partite, $k$-CSP $\Psi$ with alphabet size $R$: 
    % and redundancy at most $R^k$:
    \begin{itemize}
        \item YES case: $\val(\Psi) \geq 1 - \delta$,
        \item NO case: $\val(\Psi) \leq \frac{1}{R^{(1-\eps)(k-1)}}$.
    \end{itemize}
\end{thm}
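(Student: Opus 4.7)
The plan is to reduce from the $k$-CSP hardness of \cref{thm: main} with a slightly better soundness parameter $\eps' < \eps$ to the $k$-partite version, losing only a constant factor that is absorbed by the $R^{(\eps-\eps')(k-1)}$ slack.

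Given an instance $\Psi=(V,E,\Sigma,\{\Phi_e\}_{e\in E})$ from \cref{thm: main} with alphabet size $R$ and parameter $\eps'$, I will construct a $k$-partite $k$-CSP $\Psi'$ as follows. The vertex set is $V'=V_1\cup\cdots\cup V_k$ where $V_i=V\times\{i\}$, and the alphabet of each part remains $\Sigma$. For each hyperedge $e=(v_1,\dots,v_k)\in E$ and each permutation $\pi\in S_k$, include in $E'$ a hyperedge on the vertices $((v_{\pi(1)},1),\dots,(v_{\pi(k)},k))$, carrying the constraint that the tuple $(a_1,\dots,a_k)$ is accepted iff $\Phi_e(a_{\pi^{-1}(1)},\dots,a_{\pi^{-1}(k)})=1$. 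Every such edge is given weight $1/(k!|E|)$. The instance is manifestly $k$-partite, has alphabet $\Sigma$ of size $R$, and has polynomial size.

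Completeness is immediate: an assignment $A:V\to\Sigma$ to $\Psi$ lifts to $A'(v,i)=A(v)$, and by construction $\Phi_{e,\pi}$ is satisfied by $A'$ iff $\Phi_e$ is satisfied by $A$, so $\val(\Psi')\geq\val(\Psi)\geq 1-\delta$. For soundness, suppose $A'$ is an assignment to $\Psi'$ of value $s'$. I will use a random-projection argument: sample $i^{\star}:V\to[k]$ uniformly and independently across $v$, and define $A(v)=A'(v,i^{\star}(v))$. For a fixed hyperedge $e=(v_1,\dots,v_k)\in E$ with distinct endpoints, the tuple $(i^{\star}(v_1),\dots,i^{\star}(v_k))$ is a uniformly random function $[k]\to[k]$; with probability $k!/k^k$ it is a permutation $\sigma$, and conditioned on this event $\sigma$ is uniform in $S_k$. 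Thus
\[
\E_{i^{\star}}\!\left[\val(A)\right]
\ \geq\ \frac{k!}{k^{k}}\cdot\frac{1}{|E|}\sum_{e\in E}\frac{1}{k!}\sum_{\sigma\in S_k}\ind\!\left[\Phi_e(A'(v_1,\sigma(1)),\dots,A'(v_k,\sigma(k)))=1\right]\ =\ \frac{k!}{k^{k}}\,\val(\Psi').
\]
Hence some $A$ satisfies $\val(A)\geq(k!/k^{k})\,\val(\Psi')$, giving $\val(\Psi)\geq(k!/k^{k})\,\val(\Psi')$.

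To finish, I will apply \cref{thm: main} with parameter $\eps/2$ in place of $\eps$, so that the NO-case soundness of $\Psi$ is $R^{-(1-\eps/2)(k-1)}$. Then if $\val(\Psi')>R^{-(1-\eps)(k-1)}$, we would deduce $\val(\Psi)>(k!/k^{k})R^{-(1-\eps)(k-1)}$, which exceeds $R^{-(1-\eps/2)(k-1)}$ as soon as $R^{(\eps/2)(k-1)}>k^{k}/k!$. Since $k$ is fixed and $R$ is sufficiently large, this holds, producing the desired contradiction. The analogous choice of a slightly smaller $\delta$ handles the completeness side. The only technical point to verify carefully is that the vertex sets of different hyperedges in the original $\Psi$ either coincide or are disjoint (so that $(v_1,\dots,v_k)$ are distinct in each hyperedge), which holds since the edges of $\Psi$ came from a hypergraph, but even if vertices repeat within an edge the same random-projection calculation goes through after collapsing duplicates. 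I do not expect any serious obstacle: the whole step is a standard alphabet-preserving trick, the key quantitative point being the $k!/k^{k}$ loss absorbed by the $\eps/2$ buffer.
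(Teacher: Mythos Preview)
Your proposal is correct and is essentially the same argument as the paper's: both take $k$ disjoint copies of $V$, replace each constraint by its $k!$ permuted copies across the parts, lift assignments for completeness, and for soundness pick a random index $i^\star(v)\in[k]$ per vertex to project an assignment of $\Psi'$ back to $\Psi$, paying the $k!/k^{k}$ factor for the event that the indices on an edge form a permutation. Your write-up is in fact slightly cleaner than the paper's, which bounds the loss by $2^{k}$ (an inequality that fails for large $k$, though harmless here since $k$ is fixed) and does not comment on the distinct-vertices issue you flag.
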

\begin{proof}
Fix $\eps, \delta > 0$ and $k$. Let $\Psi$ to be the $k$-CSP with size $R$ alphabet $\Sigma$ and hypergraph $G = (V, E)$ from \cref{thm: main} with values $1-\delta$ and $\frac{1}{R^{(1-\eps/2) (k-1)}}$ in the YES and NO cases respectively. We modify $\Psi$ into a $k$-partite $\Psi'$ on the same alphabet $\Sigma$ as follows. Call the hypergraph $G' = (V', E')$. For each variable $v$ in $\Psi$, create copies $v(1), \ldots, v(k)$, and for each constraint $\Phi_e$ on edge $(u_1, \ldots, u_k)$ in $\Psi$, create constraints $\Phi_{e_{i_1, \ldots, i_k}}$ on the edges $e(u_1(i_1), u_2(i_2), \ldots, u_k(i_k))$ for each $k$-tuple of distinct indices $i_1, \ldots, i_k \in [k]$ in $\Psi'$. The constraint is defined as $\Phi'_{e_{i_1, \ldots, i_k}}(\sigma_1, \ldots, \sigma_k) = 1$ if and only if $\Phi_e(\sigma_1, \ldots, \sigma_k) = 1$ in the original CSP. We will show that this transformation does not change the value of $\Psi$ much.

    First suppose $\val(\Psi) \geq 1- \delta$ and let $A: V \to \Sigma$ be the labeling achieving this value. Then the labeling $A': V' \to \Sigma$ given by $A'(v(i)) = A(v)$ also achieves value $1-\delta$ in the $k$-CSP $\Psi'$, so $\val(\Psi') \geq 1- \delta$.

    Now suppose $\val(\Psi') = s$ and let $A'$ be the labeling achieving this value. Create a labeling $A: V \to \Sigma$ by choosing $A(v)$ to be one of the labels $A'(v(i))$ for $i \in [k]$ uniformly at random. For each $v$, call this randomly chosen $i$ the index of $v$. Then, choosing a random edge $(u_1, \ldots, u_k)$ in $E$, the indices $i_1, \ldots, i_k \in [k]$ of $u_1, \ldots, u_k$ are distinct with probability at least $\frac{k!}{k^k}$, and conditioned on this, $(u_1(i_1), \ldots, u_k(i_k))$ is a uniformly random constraint in $E'$. This constraint in $E'$ is satisfied by $A$ with probability at least $s$ and therefore, the randomly chosen $(u_1, \ldots, u_k) \in E$ is satisfied with probability at least $\frac{sk!}{k^k} \geq \frac{s}{2^k}$. It follows that if $\val(\Psi) \leq \frac{1}{R^{(1-\eps/2) (k-1)}}$, then $\val(\Psi') \leq \frac{2^k}{R^{(1-\eps/2)(k-1)}} \leq \frac{1}{R^{(1-\eps) (k-1)}}$. The last inequality holds for sufficiently large $R$.
\end{proof}

\subsection{Gaining Partwise-Regularity}

A $k$-partite, $k$-CSP, $\Psi = (V = V_1\cup \cdots \cup V_k, E, \Sigma_1, \ldots, \Sigma_k, \Phi = \{\Phi_e \}_{e \in E})$ is \emph{partwise-regular} if for each part $V_i$ each vertex $v_i \in V_i$ is contained in the same number of constraints. To make a $k$-partite, $k$-CSP partwise-regular, we use ideas related to the ``right-degree reduction'' technique from~\cite{MR,DinurHarsha}.

\begin{lemma}\label{lem:get_one_side}
    There is a polynomial time reduction that given a parameter $d\in\mathbb{N}$, a $k$-partite $k$-CSP instance $\Psi =  (V=V_1\cup\cdots\cup V_k,E,\Sigma_1,\ldots,\Sigma_k, \Phi=\{\Phi_e\}_{e\in E})$ in which the degree of each vertex is at least $d$ and the alphabet of each vertex has size at most $R$, and an index $i \in [k]$, produces a $k$-partite $k$-CSP instance $\Psi'= (V=V_1'\cup\ldots\cup V_k',E',\Sigma_1,\ldots,\Sigma_k, \Phi'=\{\Phi_e'\}_{e\in E'})$ with the following properties:
    \begin{enumerate}
        \item Completeness: if ${\sf val}(\Psi)\geq 1-\eps$, then ${\sf val}(\Psi')\geq 1-\eps$.
        \item Soundness: if ${\sf val}(\Psi)\leq \delta$, then 
        ${\sf val}(\Psi)\leq \delta+O\left(\sqrt{\frac{R^{k-1}}{d}}\right)$.
        \item Regularity: each vertex in $V_i$ has degree $d$.
        \item Retaining Regularity: for each $j\in [k]\setminus \{i\}$, if each vertex in $V_j$ has degree $d_j$, then each vertex in $V_j$ has degree $d_j d$.
    \end{enumerate}
\end{lemma}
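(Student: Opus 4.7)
For each vertex $v \in V_i$ of original degree $d_v \geq d$, introduce $d_v$ copies $v^{(1)},\ldots,v^{(d_v)}$ and let $V_i' = \bigsqcup_v \{v^{(t)}\}_{t \in [d_v]}$, while $V_j' = V_j$ for $j \neq i$. Enumerate the constraints incident to $v$ as $N(v) = \{e^v_1,\ldots,e^v_{d_v}\}$. Fix a $d$-regular bipartite graph $G_v$ between the $d_v$ copies and the $d_v$ constraints in $N(v)$ whose second largest singular value satisfies $\lambda_v = O(\sqrt d)$ (for instance via a bipartite Ramanujan construction, or a random $d$-regular bipartite graph). For each edge $\{v^{(t)}, e^v_j\}$ of $G_v$, include in $E'$ the constraint obtained from $e^v_j$ by replacing its $V_i$-coordinate $v$ with the copy $v^{(t)}$; the predicate itself is unchanged, and the alphabets are inherited.

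\textbf{Completeness and regularity.}
Given an assignment $A:V\to\Sigma$ with $\val(\Psi)\geq 1-\eps$, set $A'(v^{(t)})=A(v)$ for every copy and $A'(u)=A(u)$ for $u\in V_j$, $j\neq i$. Every new constraint evaluates the same predicate on the same labels as its parent, yielding $\val(\Psi')\geq 1-\eps$. Regularity on side $i$ is immediate from the $d$-regularity of each $G_v$. For $u\in V_j$ with original degree $d_j$, each of the $d_j$ constraints through $u$ is cloned exactly $d$ times in $E'$ (once per $G_{v_e}$-edge at $e$, where $v_e$ is the $V_i$-coordinate of $e$), giving new degree $d\cdot d_j$, establishing properties (3) and (4).

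\textbf{Soundness.}
Let $A':V'\to\Sigma$ attain $\val(A',\Psi')=\delta'$. Form a random $A$ by drawing $T_v\sim\mathrm{Unif}[d_v]$ independently and setting $A(v)=A'(v^{(T_v)})$, while $A(u)=A'(u)$ for $u\in V_j$, $j\neq i$. For each $v$ and $\sigma\in\Sigma$, write
$S^v_\sigma = \{t\in [d_v]:A'(v^{(t)})=\sigma\}$ and $T^v_\sigma = \{e\in N(v): \sigma\in S_e^{A'}\}$,
where $S_e^{A'}\subseteq\Sigma$ is the set of $V_i$-labels for which $\Phi_e$ is satisfied given the labels $A'$ assigns to the remaining coordinates of $e$. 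Unpacking definitions,
\[
\mathbb{E}_T[\val(A,\Psi)] - \val(\Psi') \;=\; \frac{1}{d|E|}\sum_{v\in V_i}\sum_{\sigma\in\Sigma}\!\left(\frac{d|S^v_\sigma||T^v_\sigma|}{d_v} - e_{G_v}(S^v_\sigma, T^v_\sigma)\right)\!.
\]
The bipartite expander mixing lemma applied to $G_v$ gives $\bigl|e_{G_v}(S^v_\sigma,T^v_\sigma) - d|S^v_\sigma||T^v_\sigma|/d_v\bigr|\leq \lambda_v\sqrt{|S^v_\sigma||T^v_\sigma|}$. Summing over $\sigma$ with Cauchy--Schwarz and using $\sum_\sigma |S^v_\sigma|=d_v$, $\sum_\sigma |T^v_\sigma|=\sum_{e\in N(v)}|S_e^{A'}|\leq R\cdot d_v$, produces an error of $O(\lambda_v d_v\sqrt R)$ per $v$. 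Averaging over $v$ with $\sum_v d_v = |E|$ bounds the total error by $O(\sqrt{R/d})\leq O(\sqrt{R^{k-1}/d})$. A pigeonhole over $T$ fixes a deterministic $A$ with $\val(A,\Psi)\geq \delta' - O(\sqrt{R^{k-1}/d})$, yielding the contrapositive soundness statement.

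The main obstacle is the soundness analysis: the set of ``good'' pairs $\{(t,e):A'(v^{(t)})\in S_e^{A'}\}$ is not a combinatorial rectangle, so one cannot directly apply expander mixing. The key step is to decompose this set over $\sigma\in\Sigma$ into product pieces $S^v_\sigma\times T^v_\sigma$ and control the resulting error by Cauchy--Schwarz. Once this decomposition is in place, the construction and the remaining bookkeeping are routine.
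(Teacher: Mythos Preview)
Your argument is correct and follows essentially the same route as the paper: the same cloud-plus-bipartite-expander construction, the same decomposition of the soundness analysis over labels $\sigma\in\Sigma_i$ into product sets, and the same combination of the expander mixing lemma with Cauchy--Schwarz. Your bound $\sum_\sigma |T^v_\sigma|\le R\,d_v$ is in fact tighter than the paper's $R^{k-1}d(v_i)$, yielding $O(\sqrt{R/d})$ rather than $O(\sqrt{R^{k-1}/d})$, but this is a minor quantitative sharpening rather than a different approach.
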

\begin{proof}
    We will use the fact that for every pair of integers $N\geq D$, one may construct in polynomial time a bipartite expander graph with sides of size $N$, degrees equal to $D$ and second eigenvalue $O(1/\sqrt{D})$. See~\cite{alon2021explicit} in the case that $N$ is sufficiently larger than $D$, and a brute-force construction otherwise.

    Fix $\Psi$, $d$ and $i\in [k]$ as in the statement, and for each $v_i\in V_i$ denote by $d(v_i)\geq d$ the degree of $v_i$. For each $v_i\in V_i$ construct a bipartite expander graph $H_{v_i} = (L_{v_i}\cup R_{v_i}, E_{v_i})$ with $d(v_i)$ vertices on each side. We identify $L_{v_i}$ with the edges $e\in E$ incident to $v_i$, and we now explain the construction of $\Psi'$. For $j\neq i$ we set $V_j' = V_j$, and for $j=i$ we set $V_i' = V_i \times R_{v_i}$. In words, for each $v_i\in V_i$ we produce a cloud of copies of $v_i$ which we identify with $R_{v_i}$. The edge set $E'$ of $\Psi'$ consists of 
    $e' = (v_1,\ldots,v_{i-1},(v_i,u),v_{i+1},\ldots,v_k)$ such that $(v_1,\ldots,v_{i-1},v_{i+1},\ldots,v_k)\in L_{v_i}$, $u\in R_{v_i}$ and they are adjacent in $H_{v_i}$. The constraint on $e'$
    is the same as the constraint on $e = (v_1,\ldots,v_k)$. This completes the construction of $\Psi'$, and we move on to establishing its properties. It is clear that the vertices in each part $V'_i$ of $\Psi'$ have the same alphabet as the vertices in the corresponding part $V_i$ of $\Psi$, and that $\Psi'$ is a $k$-partite $k$-CSP, and we now move on to establishing each of the listed items.

    For the first item, given an assignment $A$ to $\Psi$ we produce an assignment $A'$ to $\Psi'$ where for $v_j'\in V_j'$ for $j\neq i$, we set $A'(v_j') = A(v_j')$. 
    For $(v_i,u)\in V_i'$ we set $A'(v_i,u) = A(v_i)$. It is easy to see that $A$ and $A'$ satisfy the same number of constraints.

    For the third item we note that the degree of $(v_i,u)\in V_i'$ is $d$, so the third item follows. 
    
    For the fourth item, fix $j\neq i$, suppose that each vertex in $V_j$ has degree $d_j$ and let $N(v_j, v_i)$ be the number of times that $v_j$ appears in a tuple in $L_{v_i}$ in the graph $H_{v_i}$. Then note that 
    $\sum\limits_{v_i\in V_i} N(v_j,v_i) = d_j$. For each $v_i\in V_i$, the number of constraints incident to both $v_j$ and some vertex in $\{v_i\}\times R_{v_i}$ is equal to     $N(v_j,v_i)\cdot d$. In total, we get that the degree of $v_j$ in $\Psi'$ is equal to $\sum\limits_{v_i\in V_i}N(v_j,v_i)\cdot d = dd_j$. It also follows that the number of edges in $\Psi'$ is $\card{E'} = d\card{E}$.

    We finish the proof by establishing the second item. Let $A_j'\colon V_j'\to\Sigma_j$ be a labeling for $\Psi'$ achieving ${\sf val}(\Psi')$. Define a labeling $A_j\colon V_j\to\Sigma_j$ to $\Psi$ by setting $A_j = A_j'$ for $j\neq i$. For $j=i$ and for each $v\in V_i$, we choose $u\in R_{v_i}$ uniformly and set $A_i(v_i) = A_i'(v_i,u)$.

    To analyze the construction, we denote $V_{-i} = \prod\limits_{j\neq i}V_j$, and denote by $v_{-i}$ elements in it. For each $v_i\in V_i$ and $\sigma_i\in \Sigma_i$, we define the sets $X_{v_i,\sigma_i}$ and $Y_{v_i,\sigma_i}$, as follows:
    \[
    X_{v_i,\sigma_i}
    = \{v_{-i}\in L_{v_i}
    ~|~
    (A_1(v_1),\ldots,A_{i-1}(v_{i-1}), \sigma_i,A_{i+1}(v_{i+1}),\ldots,A_k(v_k))\in \Phi_{(v_1,\ldots,v_k)}
    \},
    \]
    \[
    Y_{v_i,\sigma_i}
    =\{u\in R_{v_i}~|~A_i'(v_i,u) = \sigma_i\}.
    \]
   Then, the expected fraction of constraints satisfied by $(A_j')_{j=1,\ldots,k}$ is a
   lower bound on ${\sf val}(\Psi)$, and so
   \[
   {\sf val}(\Psi)
   \geq
   \frac{1}{\card{E}}
   \sum\limits_{v_{i}\in V_{i}}
   \sum\limits_{\sigma_i\in \Sigma_i}
   \frac{\card{Y_{v_i,\sigma_i}}}
   {\card{R_{v_i}}}
   \card{X_{v_i,\sigma_i}}
   =
   \frac{1}{\card{E}}
   \sum\limits_{v_{i}\in V_{i}}
   \sum\limits_{\sigma_i\in \Sigma_i}
   \frac{\card{Y_{v_i,\sigma_i}}}{d(v_i)}
   \card{X_{v_i,\sigma_i}}
   .
   \]
  Applying the expander mixing lemma in $H_{v_i}$, we get that
   \[
   \card{
   E(
   X_{v_i,\sigma_i},
   Y_{v_i,\sigma_i})-d
   \card{X_{v_i,\sigma_i}}\frac{\card{Y_{v_i,\sigma_i}}}{d(v_i)}}
   \leq O(\sqrt{d})\sqrt{\card{X_{v_i,\sigma_i}}\card{Y_{v_i,\sigma_i}}},
   \]
   and combining we get that
\[
   {\sf val}(\Psi)
   \geq
   \frac{1}{d\card{E}}
   \sum\limits_{v_{i}\in V_{i}}
   \sum\limits_{\sigma_i\in \Sigma_i}
    E(
   X_{v_i,\sigma_i},
   Y_{v_i,\sigma_i})
   -
   \frac{O(\sqrt{d})}{d\card{E}}
   \sum\limits_{v_{i}\in V_{i}}
   \sum\limits_{\sigma_i\in \Sigma_i}\sqrt{\card{X_{v_i,\sigma_i}}\card{Y_{v_i,\sigma_i}}}.
   \]
   For each $e' = (v_1,\ldots,v_{i-1},(v_i,u),v_{i+1},\ldots,v_k)\in E'$ satisfied by $A_1',\ldots,A_k'$, for 
   $\sigma_i = A_i'(v_i,u)$ we have that $v_{-i}\in X_{v_i,\sigma}$
   and $u\in Y_{v_i,\sigma_i}$, and then $(v_{-i}, (v_i,u))$ is an edge between $X_{v_i,\sigma}$ and 
   $Y_{v_i,\sigma}$.
   Thus, the first term on the right hand side is at least $\frac{1}{d\card{E}}\card{E'}{\sf val}(\Psi')={\sf val}(\Psi')$. 
   For the second term, by Cauchy-Schwarz
    \[
    \sum\limits_{v_{i}\in V_{i}}
   \sum\limits_{\sigma_i\in \Sigma_i}\sqrt{\card{X_{v_i,\sigma_i}}\card{Y_{v_i,\sigma_i}}}
   \leq
   \sum\limits_{v_{i}\in V_{i}}\sqrt{
   \sum\limits_{\sigma_i\in \Sigma_i}\card{X_{v_i,\sigma_i}}}
   \sqrt{
   \sum\limits_{\sigma_i\in \Sigma_i}\card{Y_{v_i,\sigma_i}}}.
    \]
    Fix $v_i\in V_i$. Since $Y_{v_i,\sigma}$ form a partition of $R_{v_i}$ we get that $\sum\limits_{\sigma_i}\card{Y_{v_i,\sigma_i}} = \card{R_{v_i}} = d(v_i)$. Also, since $\Psi$ has alphabet size $R$, for each $v_{-i}\in L_{v_i}$ there are at most $R^{k-1}$ values of $\sigma_i\in \Sigma_i$ such that $v_{-i}\in X_{v_i,\sigma_i}$, so
    $\sum\limits_{\sigma_i}\card{X_{v_i,\sigma_i}} \leq R^{k-1}\card{L_{v_i}} = R^{k-1}d(v_i)$. 
    Combining, we get that
    \[
        {\sf val}(\Psi)
        \geq 
        {\sf val}(\Psi')
        -
        O\left(\sqrt{dR^{k-1}}\right)
        \frac{1}{d\card{E}}
        \sum\limits_{v_i\in V_i}
        d(v_i)
        =
        {\sf val}(\Psi')
        -O\left(\sqrt{\frac{R^{k-1}}{d}}\right).\qedhere
    \]
\end{proof}
Iterating~\cref{lem:get_one_side} we get the following conclusion:
\begin{thm} \label{thm: k regular}
    For all $\eps, \delta > 0$ and arity $k$, and sufficiently large alphabet size $R$, it is NP-hard to distinguish between the following two cases given a partwise-regular, $k$-partite, $k$-CSP $\Psi$ with alphabet size $R$ :
    \begin{itemize}
        \item YES case: $\val(\Psi) \geq 1 - \delta$,
        \item NO case: $\val(\Psi) \leq \frac{1}{R^{(1-\eps)(k-1)}}$.
    \end{itemize}
\end{thm}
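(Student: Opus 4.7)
The plan is to iterate \cref{lem:get_one_side} once for each of the $k$ coordinates, starting from the $k$-partite hardness given by \cref{thm: k partite} invoked with parameter $\eps/2$ in place of $\eps$, so that its NO-case soundness is $\frac{1}{R^{(1-\eps/2)(k-1)}}$. A degree parameter $d=d(R,k,\eps)$ will be chosen at the end; it will be a sufficiently large polynomial in $R$, with exponent depending on $k$ and $\eps$.

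First, I would preprocess the starting instance to satisfy the minimum-degree hypothesis of \cref{lem:get_one_side}. The simplest option is to duplicate each constraint $d$ times (equivalently, view the hyperedges as a multi-hypergraph with multiplicity $d$). This preserves the alphabet sizes, the value, and the $k$-partite structure, while forcing every non-isolated vertex to have degree at least $d$; isolated vertices may be discarded.

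Next, I would apply \cref{lem:get_one_side} in order for $i=1,2,\ldots,k$, each time with the same parameter $d$. Three things need to be tracked across iterations: completeness, soundness, and the degree structure. By item~1, completeness is preserved exactly at each step, so it stays at least $1-\delta$. By item~2, each application adds at most $O\!\left(\sqrt{R^{k-1}/d}\right)$ to the soundness, for a total additive slack of $O\!\left(k\sqrt{R^{k-1}/d}\right)$ after the $k$ rounds. For the degrees, item~3 makes the currently processed side $d$-regular, while item~4 guarantees that (i)~regularity on previously processed sides is preserved, with their common degrees simply scaling up by a factor of $d$, and (ii)~the minimum degree on every unprocessed side is multiplied by $d$. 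In particular, after preprocessing the minimum degree remains at least $d$ throughout, so each application is legal, and after all $k$ rounds the instance is partwise-regular.

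Finally, I would choose $d$ large enough that the cumulative soundness slack $O\!\left(k\sqrt{R^{k-1}/d}\right)$ is smaller than the gap $\frac{1}{R^{(1-\eps)(k-1)}}-\frac{1}{R^{(1-\eps/2)(k-1)}}$; taking $d\ge C_k R^{(k-1)(3-2\eps)}$ for a suitably large constant $C_k$ suffices for large enough $R$. The final instance then has value at least $1-\delta$ in the YES case and at most $\frac{1}{R^{(1-\eps)(k-1)}}$ in the NO case, and the total blowup is a $\poly(R)$ factor, so the reduction is polynomial time. The only mild obstacle I see in the plan is coordinating the minimum-degree hypothesis of \cref{lem:get_one_side} across iterations; this is cleanly handled by item~4, which ensures that once preprocessing puts us above the threshold $d$, the minimum degree never drops below $d$ for the remainder of the process.
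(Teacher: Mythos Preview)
Your proposal is correct and follows essentially the same approach as the paper: start from \cref{thm: k partite}, duplicate constraints to ensure minimum degree at least $d$, and then apply \cref{lem:get_one_side} once for each of the $k$ parts with a suitably chosen polynomial-in-$R$ value of $d$ (the paper takes $d=k^2R^{3k}$). Your write-up is in fact more careful than the paper's terse proof, in particular in tracking the minimum-degree hypothesis across iterations; one small remark is that item~4 of \cref{lem:get_one_side} as stated only addresses already-regular parts, but the stronger claim you use (that every vertex on side $j\neq i$ has its degree multiplied by exactly $d$) is immediate from the construction in the lemma's proof.
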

\begin{proof}
    We start with the instance 
    $\Psi$ in~\cref{thm: k partite} and set the parameter $d= k^2R^{3k}$. We duplicate each constraint in $\Psi$ $d$ times to ensure the degree of each vertex is at least $d$. We then apply~\cref{lem:get_one_side} $k$ times to get the result.
\end{proof}

\subsection{Gaining Full Regularity}
A $k$-CSP $\Psi$ is called fully-regular if each vertex is contained in the same number of constraints. Making a partwise-regular $k$-partite, $k$-CSP fully-regular is straightforward as well.

\begin{lemma} \label{lm: mult degrees}
There is a polynomial-time reduction that, given parameters $d_1, \ldots, d_k, c_1, \ldots, c_k \in \mathbb{N}$ and a $k$-partite, $k$-regular, $k$-CSP instance $\Psi = (V = V_1 \cup \cdots \cup V_k, E, \Sigma_1, \ldots, \Sigma_k, \Phi = \{\Phi_e \}_{e \in E})$ in which the degree of each vertex in $V_i$ is $d_i$, produces a fully-regular, $k$-partite $k$-CSP $\Psi' = (V' = V'_1 \cup \cdots \cup V'_k, E', \Sigma_1, \ldots, \Sigma_k, \Phi' = \{\Phi'_e \}_{e \in E'})$ such that $\val(\Psi) = \val(\Psi')$. The degree of a vertex in part $V'_i$ is $\frac{\prod_{j=1}^k c_jd_j}{c_i}$.
\end{lemma}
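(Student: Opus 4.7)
The plan is to use a standard cloning and edge-multiplication construction. Specifically, I would set $\alpha_i := c_i d_i$ and replace each vertex $v\in V_i$ by $\alpha_i$ distinct copies $v^{(1)},\ldots,v^{(\alpha_i)}$, taking $V_i' := \{v^{(s)} : v\in V_i,\; s\in[\alpha_i]\}$. For every original constraint $e = (v_1,\ldots,v_k) \in E$ and every tuple $(s_1,\ldots,s_k)\in[\alpha_1]\times\cdots\times[\alpha_k]$, I add to $E'$ the constraint on $(v_1^{(s_1)},\ldots,v_k^{(s_k)})$ with the same predicate $\Phi_e$. This clearly yields a $k$-partite $k$-CSP over the same alphabets $\Sigma_1,\ldots,\Sigma_k$, computable in time polynomial in $|\Psi|$ and the $c_i$'s.

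First I would verify the degree count. A clone $v^{(s)}\in V_i'$ lies in exactly those constraints arising from the $d_i$ original constraints containing $v$, and for each such original constraint there are $\prod_{j\neq i}\alpha_j = \prod_{j\neq i}c_j d_j$ copies containing $v^{(s)}$ (one per choice of clone indices on the other $k-1$ coordinates). The total degree is therefore $d_i \prod_{j\neq i}c_j d_j = \prod_{j=1}^k c_j d_j/c_i$, matching the stated formula; choosing the $c_i$'s so that this expression is independent of $i$ (e.g.\ $c_1 = \cdots = c_k$) produces the full regularity claim.

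Next I would verify that $\val(\Psi) = \val(\Psi')$. The easy direction is that any assignment $A\colon V\to\Sigma$ lifts to $A'(v^{(s)}) := A(v)$, which satisfies exactly the same fraction of constraints in $\Psi'$ as $A$ does in $\Psi$. For the converse, given any $A'$, I would sample a random $A\colon V\to\Sigma$ by picking, independently for each $v\in V_i$, an index $s_v\in[\alpha_i]$ uniformly at random, and setting $A(v) := A'(v^{(s_v)})$. Then for any fixed constraint $e=(v_1,\ldots,v_k)\in E$, the probability over the $s_{v_j}$'s that $A$ satisfies $e$ equals the fraction of its $\prod_j \alpha_j$ clone-copies satisfied by $A'$; averaging over $e\in E$ shows $\mathbb{E}_A[\val_A(\Psi)] = \val(\Psi')$, so some deterministic realization of $A$ achieves at least this value.

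No step here is a genuine obstacle. The lemma is essentially a bookkeeping exercise around the degree formula, together with the standard \emph{independent clone sampling} value-preservation argument; the only mild point worth noting is that because we enumerate \emph{all} clone tuples when duplicating an edge, the distribution over clone-copies of $e$ induced by independent uniform choices of $s_{v_1},\ldots,s_{v_k}$ is precisely uniform, which is what makes the soundness calculation an exact equality rather than an inequality.
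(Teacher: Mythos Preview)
Your construction is identical to the paper's (they index each cloud by $[d_i]\times[c_i]$ rather than a single index in $[c_id_i]$, but this is cosmetic), and your completeness direction matches exactly. For the direction $\val(\Psi')\le\val(\Psi)$ the paper uses a slightly coarser averaging: it fixes one clone index tuple $(a_1,b_1,\ldots,a_k,b_k)$ \emph{per part} (rather than per vertex, as you do), observes that the edges with those fixed indices form a disjoint copy of $\Psi$, and averages over these copies. Your per-vertex independent sampling is equally valid and yields the same exact equality via linearity of expectation; the two arguments are interchangeable here, so overall your proposal is correct and essentially the same as the paper's.
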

\begin{proof}
    We construct $\Psi'$ as follows. For each $i \in [k]$, set $V'_i = V_i \times [d_i] \times [c_i]$. In words, each vertex $v_i \in V_i$ of the original CSP gets duplicated into a cloud of $d_i \cdot c_i$ vertices. For the edges let $E'$ be the set of $((v_1, a_1, b_1), \ldots, (v_k, a_k, b_k)) \subseteq V'_1 \times \cdots \times V'_k$ such that $(v_1,\ldots, v_k) \in E$. The alphabet of each vertex in $V'_i$ is $\Sigma_i$ and for each edge $e' = ((v_1, a_1, b_1), \ldots, (v_k, a_k, b_k))$ the constraint $\Phi_{e'}$ is the same as $\Phi_e$ where $e = (v_1, \ldots, v_k) \in E$. It is clear that the degree of a vertex in part $V'_i$ is $\frac{\prod_{i=1}^k c_id_i}{c_i}$, so it remains to check that the value is preserved.

    We first argue that $\val(\Psi') \geq \val(\Psi)$. Let $A$ be an assignment to $V$ which satisfies $\val(\Psi)$ of the constraints. Then define the assignment $A'$ to $V'$ by assigning each vertex in the cloud of $v_i$ the label $A(v_i)$. Then note that for any $i$ and any vertices $v'_i \in V'_i$ and $v_i \in V_i$, we have
    \[
    \Pr_{e' \in E', v'_i \in e'}[e' \text{ is satisfied}] =  \Pr_{e \in E, v_i \in e}[e \text{ is satisfied}].
    \]
    Since in each CSP all of the vertices in $V'_i$ or $V_i$ have the same degree, this shows that $A'$ also satisfies $\val(\Psi)$ of the constraints in $\Psi'$, and as a result $\val(\Psi') \geq \val(\Psi)$

    To show $\val(\Psi') \leq \val(\Psi)$ note that we can partition $E'$ into edge sets of the form $E'_{a_1, b_1, \ldots, a_k, b_k}= \{((v_1, a_1, b_1), \ldots, (v_k, a_k, b_k)) \; | \; (v_1, \ldots, v_k) \in E\}$. Furthermore, for a fixed $a_1, b_1, \ldots, a_k, b_k$, if we restrict $\Psi'$ to only the edges $E'_{a_1, b_1, \ldots, a_k, b_k}$ and the vertices contained in these edges, we get a copy of $\Psi$. It follows that if an assignment to $\Psi'$ satisfies $\eta$-fraction of constraints, we can find a copy of $\Psi$ inside as well as an assignment to it satisfying at least $\eta$-fraction of the constraints. Thus, $\val(\Psi') \leq \val(\Psi)$.
\end{proof}

Putting all of the pieces together, we get a version of our main theorem for fully-regular, $k$-partite $k$-CSP.

\begin{thm} \label{thm: fully regular}
    For all $\eps, \delta > 0$ and arity $k$, and sufficiently large alphabet size $R$, it is NP-hard to distinguish between the following two cases given a fully-regular, $k$-partite,  $k$-CSP $\Psi$ with alphabet size $R$ :
    \begin{itemize}
        \item YES case: $\val(\Psi) \geq 1 - \delta$,
        \item NO case: $\val(\Psi) \leq \frac{1}{R^{(1-\eps)(k-1)}}$.
    \end{itemize}
\end{thm}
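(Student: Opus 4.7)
The plan is to start with the partwise-regular $k$-partite $k$-CSP guaranteed by~\cref{thm: k regular} and then apply~\cref{lm: mult degrees} with a suitable choice of multiplicities $c_1,\ldots,c_k$ to equalize the degrees across the parts, turning it into a fully-regular instance while preserving its value exactly.

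First, fix $\eps,\delta>0$ and arity $k$, and let $R$ be sufficiently large. Invoke~\cref{thm: k regular} to obtain a partwise-regular $k$-partite $k$-CSP instance $\Psi$ with alphabet size $R$ for which it is NP-hard to distinguish ${\sf val}(\Psi)\geq 1-\delta$ from ${\sf val}(\Psi)\leq 1/R^{(1-\eps)(k-1)}$. Let $d_i$ denote the common degree of vertices in part $V_i$ of $\Psi$, which is a well-defined positive integer by partwise-regularity.

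Next, apply~\cref{lm: mult degrees} to $\Psi$ with multiplicities chosen to balance the degrees. Specifically, set $c_i = \prod_{j\neq i} d_j$ for each $i\in [k]$, so that $c_i d_i = \prod_{j=1}^k d_j$ is independent of $i$. The resulting instance $\Psi'$ is $k$-partite with the same alphabet $\Sigma_i$ on each part $V_i'$, and each vertex in part $V_i'$ has degree
\[
\frac{\prod_{j=1}^k c_j d_j}{c_i} = \frac{\left(\prod_{j=1}^k d_j\right)^k}{\prod_{j\neq i}d_j},
\]
which simplifies to $d_i \cdot \left(\prod_{j=1}^k d_j\right)^{k-1}$, a quantity that is the same across all $i$ (since $c_id_i$ is constant and the expression equals $\prod_j c_j d_j / c_i = (c_1d_1)^{k}/(c_id_i)\cdot(c_id_i)/c_i$; more simply, $c_id_i$ being constant in $i$ means the degrees across parts are all equal). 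Hence $\Psi'$ is fully-regular and $k$-partite.

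Finally, by~\cref{lm: mult degrees} we have ${\sf val}(\Psi') = {\sf val}(\Psi)$, so the completeness and soundness guarantees transfer verbatim: if ${\sf val}(\Psi)\geq 1-\delta$ then ${\sf val}(\Psi')\geq 1-\delta$, and if ${\sf val}(\Psi)\leq 1/R^{(1-\eps)(k-1)}$ then ${\sf val}(\Psi')\leq 1/R^{(1-\eps)(k-1)}$. Since the reduction is polynomial time and the NP-hardness of distinguishing the two cases for $\Psi$ carries over to $\Psi'$, the conclusion follows.
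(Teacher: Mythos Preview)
Your overall approach matches the paper's: start from the partwise-regular instance of \cref{thm: k regular} and apply \cref{lm: mult degrees} to equalize degrees while preserving ${\sf val}$. However, your choice of multiplicities is wrong and the resulting instance is \emph{not} fully regular.

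By \cref{lm: mult degrees}, the degree in part $V_i'$ is $\dfrac{\prod_{j=1}^k c_j d_j}{c_i}$, which is independent of $i$ if and only if $c_i$ is the same for all $i$. You instead take $c_i=\prod_{j\neq i} d_j$, which makes $c_i d_i$ constant but not $c_i$. Your own simplification confirms this: the degree you obtain in part $V_i'$ is $d_i\cdot\bigl(\prod_{j} d_j\bigr)^{k-1}$, and this visibly depends on $d_i$, so it varies across parts whenever the $d_i$ are not all equal. The sentence ``$c_id_i$ being constant in $i$ means the degrees across parts are all equal'' is exactly the mistake---it is $c_i$, not $c_i d_i$, that must be constant.

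The fix is immediate (and is what the paper does): take $c_i=1$ for all $i$. Then every vertex in every part has degree $\prod_{j=1}^k d_j$, the instance is fully regular, and since \cref{lm: mult degrees} preserves ${\sf val}$ exactly the completeness and soundness carry over.
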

\begin{proof}
    We start with the instance 
    $\Psi$ in~\cref{thm: k regular}. Let $d_i$ be the degree of vertices in the part $V_i$. Apply \cref{lm: mult degrees} with $c_i = 1$.
\end{proof}
\subsection{Hardness of $k$-Dimensional Matching}
With \cref{thm: fully regular} in hand, we are ready to apply the reduction of \cite{LST}. 

\begin{thm} [\cite{LST}] \label{thm: lst sparsification}
     Let $0 \leq \lambda \leq 1$ and $C > 0$ be constants, fix an arity $k$, and let $R$ be a sufficiently large alphabet size. Then there is a randomized polynomial-time reduction which takes a fully-regular $k$-partite, $k$-CSP instance $\Psi$ with alphabet size $R$ and outputs a $R$-degree bounded $k$-CSP instance $\Psi'$ with alphabet size $R$, which satisfies the following with high probability:
     \begin{itemize}
         \item Completeness: $\val(\Psi') \geq \val(\Psi) - 3 \lambda$.
         \item Soundness: If $\val(\Psi) \leq CR^{-\gamma}$ for some $\gamma \geq 2$, then $\val(\Psi') \leq \frac{k(1+\lambda)}{(\gamma - 1)(1-\lambda)^2}$.
     \end{itemize}
 \end{thm}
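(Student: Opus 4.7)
Since \cref{thm: lst sparsification} is attributed verbatim to \cite{LST}, the cleanest ``proof'' in this paper is simply a citation. Still, let me sketch how I would attempt such a sparsification from scratch. Given the fully-regular, $k$-partite, $k$-CSP $\Psi$ from \cref{thm: fully regular}, with common vertex degree $d$, I would produce $\Psi'$ by independently keeping each constraint with probability $p := R/d$, so that the expected degree of every vertex is exactly $R$. A Chernoff bound combined with a small truncation step (discarding the few vertices whose sampled degree exceeds $(1+\lambda)R$) guarantees that the output degree stays at most $R$ with probability $1-o(1)$.

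For completeness, linearity of expectation gives that any assignment satisfying a $(1-\delta)$-fraction of $\Psi$ satisfies a $(1-\delta)$-fraction of $\Psi'$ in expectation. Two independent concentration statements — on the total number of kept constraints and on the number of kept satisfied constraints — each contribute a $(1\pm\lambda)$ factor; dividing to obtain the value ratio yields the $3\lambda$ slack claimed.

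For soundness, fix any assignment $A$; its expected value in $\Psi'$ matches its value $s(A)\leq C R^{-\gamma}$ in $\Psi$, but the very sparse sampling introduces large variance, so a naive expectation argument does not suffice. The real challenge is to union-bound over the $R^{|V|}$ possible assignments. The LST strategy, which I would follow, observes that inside the degree-$R$-bounded instance any assignment that achieves value above the target threshold must lock in a dense, highly-overlapping family of satisfied constraints on a small vertex set; such structures are exponentially rare. A second-moment argument together with an entropy-weighted union bound, summing a geometric series whose ratio is determined by $\gamma-1$, then produces the bound $\frac{k(1+\lambda)}{(\gamma-1)(1-\lambda)^2}$: the $k$ factor from the arity of each constraint, and the $(\gamma-1)^{-1}$ from the geometric sum over overlap patterns.

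The main obstacle is precisely this soundness analysis. It requires a delicate interplay between a per-assignment concentration inequality and a union bound tailored to the combinatorial structure of degree-$R$ bounded hypergraphs; making the union bound quantitatively match the decay rate of the per-assignment tail is the technical core. This is exactly the work carried out in \cite{LST}, which is why the theorem is quoted as a black box.
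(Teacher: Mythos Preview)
The paper provides no proof of this statement; it is quoted from \cite{LST} as a black box, precisely as you identify in your opening sentence. Your subsequent sketch is therefore extra material beyond what the paper does—its random-subsampling outline is in the right spirit, but the paper makes no attempt to verify or even summarize the \cite{LST} argument, so there is nothing further to compare against.
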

 
\begin{thm}[\cite{LST}] \label{thm: matching reduction}
For sufficiently large prime $R$, there is a polynomial-time reduction from $R$-degree bounded, $k$-partite $k$-CSP, $\Psi$, with alphabet size $R$ to a $kR$-Dimensional Matching instance $\Pi$, such that $M \cdot \val(\Psi) = \val(\Pi)$, where $M$ is the number of constraints in $\Psi$.
\end{thm}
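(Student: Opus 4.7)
The proof is the reduction from \cite{LST}, and the plan is to faithfully implement their construction and verify the value correspondence. Given a fully-regular, $R$-degree bounded, $k$-partite $k$-CSP $\Psi = (V = V_1 \cup \cdots \cup V_k, E, \Sigma, \{\Phi_e\}_{e \in E})$ with $|\Sigma| = R$, the plan is to build a $kR$-Dimensional Matching instance $\Pi = (\mathcal{V}, \mathcal{E})$ whose parts are indexed by $[k] \times [R]$, whose ground elements encode (variable, value) commitments together with a constraint-specific \emph{slot}, and whose hyperedges correspond one-to-one with (constraint, satisfying assignment) pairs of $\Psi$. The key structural feature I rely on is the $R$-degree bound: since each $v \in V_i$ participates in at most $R$ constraints, one can fix an injection $s_v : \{e \in E : v \in e\} \to [R]$, and the slots $s_{v}(e)$ separate the $R$ copies of $v$ across different constraints. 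Primality of $R$ is used to instantiate an algebraic design (as in \cite{LST}) that fills in the remaining $k(R-1)$ ``auxiliary'' positions of each hyperedge without creating spurious matching conflicts across distinct constraints.

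The concrete construction I would write down is: for each $v \in V_i$, $\sigma \in \Sigma$, and slot $s \in [R]$, include an element $u_{v,\sigma,s}$ in the part $(i, \sigma)$. For each pair $(e, \alpha)$ with $e = (v_1, \ldots, v_k) \in E$ and $\alpha = (\alpha_1, \ldots, \alpha_k)$ satisfying $\Phi_e$, build a size-$kR$ hyperedge $h_{e,\alpha}$ containing, for each $(i, \sigma) \in [k] \times [R]$, exactly one element: the ``committing'' element $u_{v_i, \alpha_i, s_{v_i}(e)}$ when $\sigma = \alpha_i$, and a designed ``filler'' element (indexed by $(e, i, \sigma)$ via the LST combinatorial gadget) when $\sigma \neq \alpha_i$. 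The fillers for different constraints are chosen to be pairwise disjoint, so two hyperedges $h_{e, \alpha}, h_{e', \alpha'}$ with $e \neq e'$ collide exactly when some committing element is shared --- i.e., when there is a variable $v$ appearing in both $e$ and $e'$ and $\alpha$ and $\alpha'$ assign $v$ the same value (and the slots coincide, which by construction they do not since $s_v(e) \neq s_v(e')$). In other words, two hyperedges from distinct constraints never collide, while two hyperedges from the same constraint always do.

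For completeness ($\val(\Pi) \geq M \cdot \val(\Psi)$), given an assignment $A : V \to \Sigma$ satisfying $sM$ of the constraints, I would output the matching $\{h_{e, A|_e} : e \text{ satisfied by } A\}$ of size $sM$; the above conflict analysis (constraint-disjoint fillers; distinct slots on shared variables) shows this is a valid matching. For soundness ($\val(\Pi) \leq M \cdot \val(\Psi)$), given a matching $\M \subseteq \mathcal{E}$ of size $T$, I would extract an assignment by reading the committing elements: each hyperedge $h_{e, \alpha} \in \M$ proposes $v_i \mapsto \alpha_i$ in slot $s_{v_i}(e)$. Because distinct slots decouple the proposals coming from different constraints containing the same variable, the proposals can be aggregated into a (possibly randomized) assignment $A$ that satisfies every constraint contributing a hyperedge to $\M$, giving $\val(\Psi) \geq T/M$.

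The step I expect to require the most care is the design of the filler elements so that the equality $M \cdot \val(\Psi) = \val(\Pi)$ is \emph{exact} rather than merely approximate: the naive construction loses constant factors in both directions, and pinning down the combinatorial design (where primality of $R$ enters) to simultaneously preserve the full matching on the completeness side and forbid any ``cheating'' matching larger than $M \cdot \val(\Psi)$ on the soundness side is where one must invoke \cite{LST} in detail; I would cite their construction directly rather than re-derive it here.
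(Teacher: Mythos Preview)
The paper does not prove this theorem; it is stated with a citation to \cite{LST} and used as a black box. So there is no ``paper's own proof'' to compare against, and in that sense your final sentence (cite their construction directly) is exactly what the paper does.

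That said, your attempted reconstruction of the reduction contains a genuine error. You conclude that ``two hyperedges from distinct constraints never collide, while two hyperedges from the same constraint always do.'' If that were the collision structure, then (assuming every constraint has at least one satisfying tuple, which is the only interesting case) you could simply pick one hyperedge per constraint and obtain a matching of size $M$, so $\val(\Pi)=M$ regardless of $\val(\Psi)$, contradicting the claimed equality $\val(\Pi)=M\cdot\val(\Psi)$. Your soundness paragraph papers over exactly this: if $h_{e,\alpha}$ and $h_{e',\alpha'}$ are both in the matching, $e$ and $e'$ share a variable $v$, and $\alpha(v)\neq\alpha'(v)$, then the ``proposals'' $v\mapsto\alpha(v)$ and $v\mapsto\alpha'(v)$ are incompatible and cannot be ``aggregated into a (possibly randomized) assignment $A$ that satisfies every constraint contributing a hyperedge to $\M$.'' Distinct slots decouple the hyperedges in the matching instance, but they do not decouple the semantics of the underlying CSP.

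The correct collision structure is the opposite of what you wrote: hyperedges $h_{e,\alpha}$ and $h_{e',\alpha'}$ for constraints sharing a variable $v$ must collide precisely when $\alpha(v)\neq\alpha'(v)$, and must \emph{not} collide when $\alpha(v)=\alpha'(v)$. This is what the algebraic gadget in \cite{LST} achieves: for each variable $v_i$ the hyperedge occupies one element in \emph{every} slot (not just the single slot $s_{v_i}(e)$), chosen along an affine line over $\mathbb{F}_R$ determined by $(\alpha_i,s_{v_i}(e))$; two such lines for the same variable intersect (forcing a collision) if and only if the assigned values differ. Primality of $R$ is used to get the field structure that makes these lines behave correctly. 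Your ``committing element plus disjoint fillers'' picture has the role of the gadget backwards.
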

Combining \cref{thm: fully regular} and \cref{thm: lst sparsification}, we obtain the following.
 \begin{thm} \label{thm: bounded degree csp}
    Unless NP is contained in BPP, the following holds. For any $\eps > 0$, arity $k \geq 4$, and sufficiently large alphabet size $R$, there is no polynomial time algorithm which can distinguish the following two cases given an $R$-degree bounded $k$-CSP $\Psi$ with alphabet size $R$:
    \begin{itemize}
        \item YES Case: $\val(\Psi) \geq 1-\eps$.
        \item NO Case: $\val(\Psi) \leq \frac{k(1+\eps)}{(k-2)R}$
    \end{itemize}
 \end{thm}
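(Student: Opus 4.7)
The plan is to chain the two previously established results: first invoke \cref{thm: fully regular} to obtain a fully-regular $k$-partite $k$-CSP with nearly tight alphabet-soundness tradeoff, and then apply the sparsification of \cref{thm: lst sparsification} to produce an $R$-degree bounded instance while approximately preserving the gap. All of the hard work has been done in the preceding sections; the task here is to choose the parameters carefully so that the final numerical bounds match the statement.

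Concretely, I would fix small positive auxiliary constants $\delta',\eps',\lambda$ (to be chosen in terms of $\eps$ and $k$ at the end) and start by applying \cref{thm: fully regular} with completeness slack $\delta'$ and soundness exponent slack $\eps'$. This produces a fully-regular $k$-partite $k$-CSP $\Psi$ with alphabet size $R$ whose YES value is at least $1-\delta'$ and whose NO value is at most $R^{-(1-\eps')(k-1)}$. Then I would feed $\Psi$ into \cref{thm: lst sparsification} with the parameter $\lambda$, taking $C=1$ and $\gamma = (1-\eps')(k-1)$; the precondition $\gamma \geq 2$ is where the assumption $k \geq 4$ is used, since $(1-\eps')(k-1) \geq 2$ provided $\eps'$ is small enough relative to $k-1\geq 3$. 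The output is an $R$-degree bounded $k$-CSP $\Psi'$ over the same alphabet with YES value at least $(1-\delta')-3\lambda$ and NO value at most the bound supplied by \cref{thm: lst sparsification} with $\gamma-1 = (k-2) - \eps'(k-1)$.

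The final step is purely algebraic: choose $\delta',\lambda$ small enough that $\delta'+3\lambda \leq \eps$, guaranteeing the YES case $\val(\Psi')\geq 1-\eps$, and choose $\eps',\lambda$ small enough that the soundness bound from \cref{thm: lst sparsification} is dominated by $\frac{k(1+\eps)}{(k-2)R}$. Both conditions hold by continuity: as $\eps',\lambda\to 0$ the factor $(\gamma-1)(1-\lambda)^2$ tends to $k-2$ and the factor $(1+\lambda)$ tends to $1$, so the soundness bound tends to $\tfrac{k}{(k-2)R}$, which is strictly less than $\tfrac{k(1+\eps)}{(k-2)R}$. Since the entire pipeline is a polynomial-time (randomized) reduction and the starting problem is NP-hard, this gives the desired hardness modulo $\mathsf{NP}\subseteq\mathsf{BPP}$.

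I do not anticipate any real obstacle in this argument beyond bookkeeping; the only two points that deserve a line of justification are (i) checking $\gamma \geq 2$, which forces the hypothesis $k\geq 4$, and (ii) verifying that one can indeed choose $\delta',\eps',\lambda$ simultaneously small in terms of $\eps$ and $k$ so that both the completeness and soundness inequalities pass through the sparsification with room to spare. Neither step is delicate: they reduce to a direct algebraic manipulation of the bounds furnished by \cref{thm: fully regular} and \cref{thm: lst sparsification}.
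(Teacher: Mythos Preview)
Your proposal is correct and follows essentially the same route as the paper: invoke the fully-regular $k$-partite hardness result, then pass it through \cref{thm: lst sparsification} with suitably small auxiliary parameters. If anything, you are slightly more careful than the paper in two respects: you explicitly justify why the hypothesis $k\geq 4$ is needed (to guarantee $\gamma=(1-\eps')(k-1)\geq 2$), and you correctly appeal to \cref{thm: fully regular} rather than \cref{thm: k partite}, as the sparsification theorem requires a fully-regular input.
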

 \begin{proof}
     Let $\Psi$ be an instance of $k$-CSP over alphabet size $R$ given by \cref{thm: k partite} where in the YES case $\val(\Psi) \geq 1- \eps/2$ and in the NO case $\val(\Psi) \leq R^{-(1-\eps/1000)(k-1)}$. By \cref{thm: k partite}, it is NP-hard to distinguish between these two cases. Now apply \cref{thm: lst sparsification} with $\lambda$ sufficiently small relative to $\eps$ and $\gamma = (1-\eps)(k-1)$ and let $\Psi'$ be the the $R$-degree bounded, alphabet size $R$, $k$-CSP that is outputted. If $\Psi$ satisfied the YES case, then with high probability $\val(\Psi') \geq 1 - \eps/2 - 3\lambda \geq 1-\eps$. If $\Psi$ satisfies the no case, then with high probability $\val(\Psi') \leq \frac{k(1+\lambda)}{((1-\eps/1000)(k-1)-1)(1-\lambda)^2} \leq \frac{k(1+\eps)}{(k-2)R}$. The result follows.
 \end{proof}
%Now combining \cref{thm: bounded degree csp} with \cref{thm: matching reduction}, we immediately obtain \cref{thm: k dim matching}.
 \begin{proof}[Proof of \cref{thm: k dim matching}]
     The theorem follows by combining \cref{thm: bounded degree csp} with the parameter $\eps$ there set sufficiently small and \cref{thm: matching reduction}. This yields an instance of $d$-Dimensional Matching with $d = kR$ and a hardness of approximation factor equal to $\frac{R(k-2)}{k}  \approx \frac{d(k-2)}{k^2}$. Setting $k=4$ achieves the desired result.
 \end{proof} 

\section*{Acknowledgments} We thank Venkat Guruswami for suggesting the problem.

\bibliographystyle{alpha}
\bibliography{references}
\appendix
\end{document}